\DeclareMathOperator*{\argmin}{arg\,min}
\begin{document}
\title{VideoDP: A Universal Platform for Video Analytics with Differential Privacy}

\author{Han Wang}

\affiliation{
  \institution{Illinois Institute of Technology}
}
\email{hwang185@hawk.iit.edu}

\author{Shangyu Xie}

\affiliation{
  \institution{Illinois Institute of Technology}
}
\email{sxie14@hawk.iit.edu}

\author{Yuan Hong}

\affiliation{
  \institution{Illinois Institute of Technology}
}
\email{yuan.hong@iit.edu}

\begin{abstract}
Massive amounts of video data are ubiquitously generated in personal devices and dedicated video recording facilities. Analyzing such data would be extremely beneficial in real world (e.g., urban traffic analysis, pedestrian behavior analysis, video surveillance). However, videos contain considerable sensitive information, such as human faces, identities and activities.  
Most of the existing video sanitization techniques simply obfuscate the video by detecting and blurring the region of interests (e.g., faces, vehicle plates, locations and timestamps) \emph{without quantifying} and \emph{bounding the privacy leakage} in the sanitization. 
In this paper, to the best of our knowledge, we propose the first differentially private video analytics platform (\emph{VideoDP}) which flexibly supports different video analyses with rigorous privacy guarantee. Different from traditional noise-injection based differentially private mechanisms, given the input video, \emph{VideoDP} randomly generates a utility-driven private video in which adding or removing any sensitive visual element (e.g., human, object) does not significantly affect the output video. Then, different video analyses requested by untrusted video analysts can be flexibly performed over the utility-driven video while ensuring differential privacy. 
Finally, we conduct experiments on real videos, and the experimental results demonstrate that our \emph{VideoDP} effectively functions video analytics with good utility.
\end{abstract}

\maketitle

\section{Introduction}
\label{sec:intro}
Massive amounts of video data are ubiquitously generated everyday from many different sources such as personal cameras and smart phones, traffic monitoring and video surveillance facilities, and many other video recording devices. Analyzing such complex, unstructured and voluminous data \cite{videodata} would be extremely beneficial in real world (e.g., video surveillance).
For instance, traffic monitoring videos can be analyzed by traffic authorities, urban planning officials, and some researchers \cite{Abreu:MultiAgent} for learning urban traffic and pedestrian behavior. Videos recorded by surveillance devices (generally involving numerous persons) might be analyzed for detecting anomalies or suspicious behavior.

However, directly releasing videos to the analysts would result in severe privacy concerns due to the sensitive information involved in videos, such as human faces, objects, identities and activities \cite{VPPAAct}. For instance, traffic monitoring cameras can capture all the vehicles which may involve the make, model and color of vehicles, moving speed and trajectories, and even the drivers' faces. To this end, video sanitization (e.g., traffic monitoring videos \cite{Abreu:MultiAgent}, surveillance videos \cite{SainiMTAP_W3}, YouTube videos \cite{youtube}) have been recently studied.  
Most of such techniques (including the YouTube Blurring application \cite{youtube}) obfuscate the video by \emph{detecting} and then directly \emph{blurring} the region of interests, e.g., faces, vehicle plates, and locations \cite{SainiMTAP_W3,Hill2016}. 
However, \textit{privacy leakage} in the sanitized videos cannot be bounded with a rigorous privacy notion. Then, video owners or individuals cannot be provided with quantified privacy risks. Without rigorous privacy guarantee, the blurred regions might be reconstructed by deep learning methods \cite{dl1,dl2}.

To address such deficiency, we propose a novel platform (namely, \emph{VideoDP}) that ensures differential privacy  \cite{Dwork06} for any video analysis requested from untrusted data analysts, including queries or query-based analyses over the input video. Notice that, as the state-of-the-art privacy model, differential privacy (DP) \cite{Dwork06} can ensure indistinguishable analysis result for inputs with and without any one record (protecting any record against arbitrary background knowledge). In \emph{VideoDP}, we define a novel differential privacy notion that adding or removing any sensitive visual element (e.g., human, object) into the input video does not significantly affect the analysis result. Thus, the privacy risks can be strictly bounded even if the adversaries possess arbitrary background knowledge (e.g., knowing the objects or humans in the input video). To the best of our knowledge, this is the first work proposed to provide differentially private video analysis. Specifically, in \emph{VideoDP}, we address the following unique challenges (different from the existing differentially private schemes applied to other datasets, e.g., \cite{Dwork06,KorolovaKMN09,Cormode12,HayLMJ09,ninghui13,zhan17}). Specifically,

\vspace{0.05in}

\noindent\textbf{Differential Privacy.} Most other datasets (e.g., statistical data \cite{Dwork06}, location data \cite{ninghui13}, search logs \cite{KorolovaKMN09}, social graphs \cite{zhan17}) have explicit attribution of privacy concerns w.r.t. different individuals. The differential privacy is defined as ``adding or removing any individual's data does not result in significant privacy leakage''. 
In the context of videos, as mentioned earlier, we consider the ``appearance of sensitive objects or humans'' as the root cause of privacy leakage in videos, and then seek for the protection that the untrusted analyst cannot \emph{distinguish if any sensitive object or human is included in the video or not}, even if the adversaries have arbitrary background knowledge about the objects/humans. Then, we first address the challenge on accurately detecting and tagging all the objects/humans in any video (by utilizing state-of-the-art computer vision techniques  \cite{tracking1,tracking2}). For instance, given a video recorded on the street, our objective is to protect sensitive objects (e.g., vehicles) and/or humans (e.g., pedestrians) by the DP scheme. 

\vspace{0.05in}

\noindent\textbf{Utility-driven Private Video}. 
Given any input video, different from traditional differentially private mechanisms (e.g., injecting noise into queries/analyses), we propose a novel randomization scheme (via pixel sampling) to generate a utility-driven private video while ensuring the defined differential privacy notion. 
Specifically, our \emph{VideoDP} involves three phases. The first phase (\emph{differentially private pixel sampling}) randomly generates pixels for the output video based on the visual elements and background scene in the input video. 
Since videos are extremely large scale and highly-dimensional at the pixel level (generally consisting of millions of pixels with very diverse RGBs \cite{Acharya:2005:IPP:1088917}), it is extremely challenging to ensure good utility for video via pixel sampling (e.g., many RGBs/pixels cannot be sampled). 

To further improve the output utility, after executing pixel sampling in \emph{VideoDP}, the second phase generates the (random) \emph{utility-driven private video} by interpolating the RGB values of unsampled pixels and integrates such ``estimated pixels'' into the missing pixels. Note that the pixel interpolation also satisfies differential privacy with \emph{indistinguishability}.

\vspace{0.05in}

\noindent\textbf{Universal Video Analytics Platform}.
In the first two phases, our \emph{VideoDP} generates the (probabilistic) utility-driven private video which ensures \emph{indistinguishability} regardless of adding or removing any sensitive visual element in the input video. Therefore, in the third phase, different video analyses requested by untrusted data analysts (e.g., queries over the video for analytics) can be flexibly performed over the utility-driven private video, as analyzed in Section \ref{sec:pa}. \emph{VideoDP} significantly outperforms the PINQ platform \cite{McSherry09} in the context of video analytics (e.g., reduced perturbation, flexiblity for different video analyses), as discussed in Section \ref{sec:frame} and validated in the experiments (Section \ref{sec:exp}).

\vspace{0.05in}

\noindent\textbf{Contributions.} Motivated by first exploring rigorous privacy guarantee in videos, the major contributions of this paper are summarized as below:

\begin{itemize}

\item We define the first differential privacy notion with respect to protecting all the sensitive visual elements in any video. 

\item We propose a novel platform \emph{VideoDP} which can flexibly perform any video analysis requested by the video analyst with differential privacy guarantee.  

\item  \emph{VideoDP} randomly generates a utility-driven private video by sampling pixels (Phase I) with differential privacy and interpolating unsampled pixels (Phase II) to boost the utility for video analytics. Then, it enables universal private video analyses (Phase III) with untrusted analysts.

\item We have conducted extensive experiments to validate the performance of \emph{VideoDP} on real videos. 
\end{itemize}

The remainder of this paper is organized as follows.
Section \ref{sec:model} introduces some preliminaries. Section \ref{sec:algm} illustrates the first phase of \emph{VideoDP} and analyzes the privacy guarantee. Section \ref{sec:post} presents the second phase and third phase (private video analytics) as well as the differential privacy guarantee. Section \ref{sec:disc} discusses some relevant aspects of \emph{VideoDP}. Section \ref{sec:exp} demonstrates the experimental results. Section \ref{sec:related} and \ref{sec:concl} present the literature and conclusions.

\section{Preliminaries}
\label{sec:model}
In this section, we present some preliminaries for our \emph{VideoDP}. Since the differential privacy notion defined in \emph{VideoDP} protects all the sensitive objects and humans in the videos, for simplicity of notations, we use ``VE'' to represent both objects and humans (viz. visual elements) such as vehicles and pedestrians in this paper. 

\subsection{Video Features and Notations}

Any video streamlines a sequence of frames, each of which includes a fixed number of pixels. Referring to the RGB color model \cite{Acharya:2005:IPP:1088917}, video data includes \emph{frame ID, 2-D coordinates, red, green, blue}.\footnotemark[1] Specifically, we denote any pixel's frame ID as $t$, its coordinates as ($a$,$b$), and its RGB as a 3-dimensional vector $\theta(a,b,t)\in [0,255]^3$ (\emph{16,581,375 distinct RGBs in the universe}). 
\footnotetext[1]{We focus on the privacy leakage resulted from the visual elements rather than audio.}

\vspace{0.05in}

\noindent\textbf{VE Detection.} The state-of-the-art computer vision algorithms can be utilized to accurately detect VEs (e.g., objects \cite{Girshick15} and humans \cite{humandetection}) in videos. In \emph{VideoDP}, all the VEs in a video (denoted as $\Upsilon_j, j\in[1,n]$) are detected using the tracking algorithm \cite{tracking1,tracking2} in which the same human/object in different frames is assigned the same unique identifier (see Section \ref{sec:exp} for details). Since the computer vision algorithms \cite{tracking1,tracking2} achieve a very high detection accuracy, we assume that the undetected visual elements do not result in privacy leakage w.r.t. individuals in this paper.

Notice that the same visual element $\Upsilon_j$ may have different size and different RGB values (e.g., as a vehicle moves close to the camera, its size visually grows). Thus, we should investigate all the RGBs of every VE involved in all the frames. To break down the video into pixels with RGBs, we denote the set of distinct RGBs in VE $\Upsilon_j$ (in all the frames) as $\Psi_j$ where the cardinality is written as $|\Psi_j|$ (the number of distinct RGBs in $\Upsilon_j$). 
Table \ref{table:notation} in Appendix \ref{sec:notation} shows the notations for video features in this paper.

\begin{figure*}[!tbh]
	\centering
		{\includegraphics[angle=0, width=0.85\linewidth]{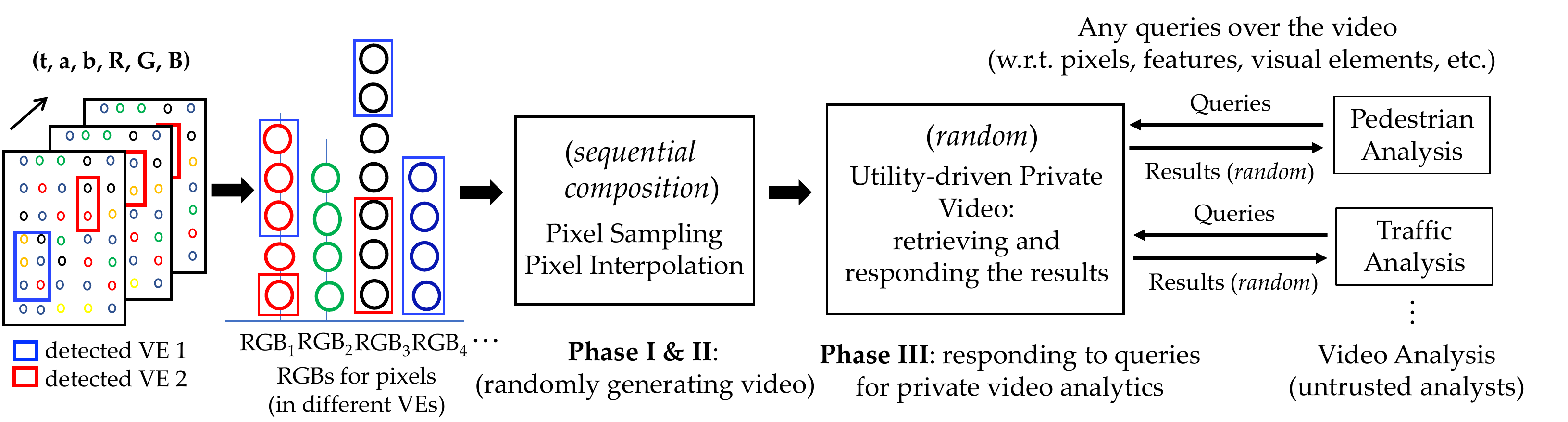}}
	\caption[Optional caption for list of figures]
	{\emph{VideoDP} Framework ($\epsilon$-differential privacy for Phase I--III)}\vspace{-0.1in}
	\label{fig:overview}
\end{figure*}

\subsection{Privacy Model}
\label{sec:privacy}
To protect sensitive objects or humans (``VE'') in the video, we first consider two input videos $V$ and $V'$ that differ in any visual element $\Upsilon$ (in all the frames) as two neighboring inputs.  
Specifically, given a video $V$, after completely removing $\Upsilon$ in all the frames of $V$, we can obtain $V'$ (or vice-versa). Note that $V$ and $V'$ have identical number of frames and background scene.

Then, \emph{VideoDP} ensures that adding any VE into any number of frames in a video or completely removing any VE from the video would not result in significant privacy risks in video analytics, assuming that the adversary possesses arbitrary background knowledge on all the VEs. W.l.o.g., denoting $V=V'\cup \Upsilon$, we have:

\begin{definition}[$\epsilon$-Differential Privacy] A randomization algorithm $\mathcal{A}$ satisfies $\epsilon$-differential privacy if for any two input videos $V$ and $V'$ that differ in any visual element (e.g., object or human) $\Upsilon$, and for any
output $O\in range(\mathcal{A})$, we have
$e^{-\epsilon}\leq \frac{Pr[\mathcal{A}(V)=O]}{Pr[\mathcal{A}(V')=O]}\leq e^{\epsilon}$.
\label{def:dp}
\end{definition}

Furthermore, in two neighboring videos $V$ and $V'$, if there exists a possible output $O\in range(\mathcal{A})$ that makes any of $Pr[\mathcal{A}(V)=O]$ and $Pr[\mathcal{A}(V')=O]$ equal to 0. For instance, since the extra visual element $\Upsilon$ is included in $V$ but not in $V'$, if an output $O$ involves elements from $\Upsilon$ but cannot be generated from $V'$ (simply due to $\Upsilon\cap V'=\emptyset$). At this time, for such output $O$, we have $Pr[\mathcal{A}(V)=O]>0$ while $Pr[\mathcal{A}(V')=O]=0$. 

In such cases, the multiplicative difference between $\frac{Pr[\mathcal{A}(V)=O]}{Pr[\mathcal{A}(V')=O]}$ and $\frac{Pr[\mathcal{A}(V')=O]}{Pr[\mathcal{A}(V)=O]}$ cannot be bounded by $e^\epsilon$ (due to the zero denominator). To accommodate this, a relaxed privacy notion \cite{MachanavajjhalaKAGV08,corn09} can be defined:

 \begin{definition}[$(\epsilon,\delta)$-Differential Privacy \cite{MachanavajjhalaKAGV08,corn09}] A randomization algorithm $\mathcal{A}$ satisfies $(\epsilon,\delta)$-differential privacy if for all video $V$, we can divide the output space $range(\mathcal{A})$ into two sets $\Omega_1,\Omega_2$ such that (1) $Pr[\mathcal{A}(V)\in\Omega_1]\leq \delta$, and (2) for any of $V$'s neighboring video $V'$ and for all $O\in \Omega_2$: (2) $e^{-\epsilon}\leq \frac{Pr[\mathcal{A}(V)=O]}{Pr[\mathcal{A}(V')=O]}\leq e^{\epsilon}$. 

\label{def:rdp}
\end{definition}

This definition guarantees that algorithm $\mathcal{A}$ achieves $\epsilon$-differential privacy with a high probability $(\geq 1-\delta)$ \cite{MachanavajjhalaKAGV08,corn09}. The probability that generating the output with unbounded multiplicative difference for $V$ and $V'$ is bounded by $\delta$.

\subsection{\emph{VideoDP} Framework}
\label{sec:frame}

\subsubsection{Limitation of PINQ-based Video Analytics}
Privacy Integrated Queries (PINQ) \cite{McSherry09} platform was proposed to facilitate data analytics by injecting Laplace noise into the queries required by the analysis. Similarly, PINQ can be simply extended to function video analytics. However, there are two major limitations of PINQ-based video analytics, which greatly constrain the usability in practices. 
\begin{itemize}
    \item \textbf{Sensitivity}. In PINQ-based video analytics, global sensitivity \cite{Dwork06} can be defined for some coarse-grained queries (with small sensitivity) such as ``the count of vehicles in the video'' (sensitivity as 1). However, for queries with large sensitivity (e.g., query across different frames where a video may include thousands of frames), the output result would be overly obfuscated (see experimental results in Section \ref{sec:exp}). For instance, in the query ``the average time (number of frames) each object stays in the video'', since an object can stay in the video for the entire video (all the frames) or only 1 second (a few frames), global sensitivity would be too large and difficult to define. Meanwhile, it might be also difficult to achieve (smooth) local sensitivity \cite{sensitivity07} for many different queries in the analysis due to computational overheads. 
    
    \item \textbf{Flexibility}. PINQ is inflexible to be adapted for utility-driven video analyses. For each requested analysis, a specific DP scheme would be required for improving the utility of the private analysis. The algorithm (e.g., budget allocation, composition of queries \cite{McSherry09}) has to be redesigned for any new analysis on the video. 
    
\end{itemize}

Instead, we propose a novel universal framework \emph{VideoDP} for universally optimizing the utility of different video analyses, which are detailed as follows.

\subsubsection{\emph{VideoDP} for Video Analytics}

Figure \ref{fig:overview} shows that \emph{VideoDP} consists of three major phases (after detecting all the VEs): 
\begin{enumerate}
\item 

Phase I: video (including detected VEs) can be represented as pixels, which can be grouped by their RGBs (notice that, different from generating RGB histograms, each pixel still keeps its original features such as coordinates and frame ID). Then, this phase samples pixels (with its original features) for each RGB where privacy budgets are allocated for different RGBs (\emph{sequential composition} \cite{McSherry09}) to optimize the output utility. Phase I in \emph{VideoDP} satisfies $\epsilon$-differential privacy. Details are given in Section \ref{sec:algm}.

\item Phase II: after sampling all the pixels, the output video has numerous unsampled pixels (due to privacy constraints). This phase estimates the RGBs for unsampled pixels via interpolation. We show that Phase II does not leak any additional information (still ensuring \emph{indistinguishability}). Thus, Phase II can boost the video utility without additional privacy loss. Details are given in Section \ref{sec:post}.

\item Phase III: for any query w.r.t. pixel, feature, visual elements, etc.\cite{imghistogram,Feature, Behavior},  \emph{VideoDP} applies the requested query to the \emph{random} utility-driven private video and directly returns the result (also probabilistic) to untrusted analysts. Then, \emph{VideoDP} can universally function any video analysis (which can be decomposed into queries, similar to PINQ \cite{McSherry09}) while ensuring differential privacy (as analyzed in Section \ref{sec:post}). 

\end{enumerate}

\section{Phase I: Pixel Sampling}
\label{sec:algm}
In this section, we illustrate the first phase of pixel sampling which satisfies $\epsilon$-differential privacy.

\subsection{Pixel Sampling Mechanism}

Recall that Section \ref{sec:frame} has briefly discussed the pixel sampling. For each RGB $\theta_i$ in $V$, a number of $x_i$ pixels (out of $c_i$ in the input $V$) will be randomly selected to output with their original coordinates and frame ID (\emph{while bounding the probabilities for differential privacy}). \emph{VideoDP} allocates privacy budgets for different RGBs to ensure differential privacy for the entire pixel sampling (e.g., $e^{-\epsilon}\leq \frac{Pr[\mathcal{A}(V)=O]}{Pr[\mathcal{A}(V')=O]}\leq e^{\epsilon}$).

After pixel sampling, all the RGBs $\theta_i, i\in[1,m]$ in the same visual element may result in privacy leakage in the output video. Then, the differential privacy of overall sampling follows \emph{sequential composition} \cite{McSherry09} for all the RGBs. Since every video may involve millions of distinct RGBs, given a privacy budget $\epsilon$ for pixel sampling, it is nearly impossible to allocate an equal budget to every unique RGB (each share would be negligible). To address such challenge, we categorize all the RGBs $i\in[1,m], \theta_i$ for pixel sampling in different cases (some of which indeed do not consume any privacy budget) and explore the optimal budget allocation as well as the differential privacy guarantee in Section \ref{sec:pri_con}. 

\subsection{Privacy Budget Allocation}
\label{sec:pri_con}
As the privacy budget $\epsilon$ is specified for pixel sampling, our goal is to optimize the allocated budgets for RGBs towards their count distributions in the original video. Given $V$ and $V'$ where $V=V'\cup \Upsilon$ (w.l.o.g.) and $\Upsilon$ can be any VE, we have three types of RGBs:

\begin{itemize}
    \item Case (1): RGB $\theta_i\in \Upsilon\setminus V'$ (the RGB is included in the extra visual element $\Upsilon$ but not $V'$).
    \item Case (2): RGB $\theta_i\in V'\setminus \Upsilon$ (the RGB is included in $V'$ but not the extra visual element $\Upsilon$).
    \item Case (3): RGB $\theta_i\in V'\cap \Upsilon$ (the RGB is included in both $V'$ and the extra visual element $\Upsilon$).
\end{itemize}

Then, we investigate the budget and the privacy guarantee for these three cases as below.

\subsubsection{Case (1): RGB $\theta_i\in \Upsilon\setminus V'$} 

Pixels in this case is the reason why we need the relax in definition, which we will discuss this in the Section \ref{sec:disc}.
Given $x_i$ as the output count of $\theta_i$ and $c_i$ is the input count in $V$, we let $x_i=0$ (does not output pixels with such RGB $\theta_i$) since $\theta_i$ cannot be found in $V'$, if generating any pixel with RGB $\theta_i$ into the output video $O$ (in Phase I).
 
Extending it to an randomization algorithm $\mathcal{A}$ applied to $V$ (with $n$ VEs $\Upsilon_1,\dots, \Upsilon_n$), w.l.o.g., considering $V$ as the video with an arbitrary extra VE $\Upsilon\in\{\Upsilon_1,\dots, \Upsilon_n\}$ (compared to $V'$), we have:

\begin{itemize}
\item $\forall j\in[1,n], \Upsilon_j$, if RGB $\theta_i\in \Upsilon_j\setminus (V-\Upsilon_j)$, then $x_i=0$ (do not sample pixels with such RGB). 
 
\end{itemize}

\subsubsection{Case (2): RGB $\theta_i\in V'\setminus \Upsilon$}
Since all the pixels with such RGB $\theta_i$ in $V$ and $V'$ are equivalent (identical coordinates and frame), we can let $x_i=c_i$ in \emph{VideoDP} (retaining all the pixels with such RGB $\theta_i$) without violating privacy. Then, for any $x_i>0$ (which can be maximized to $c_i$), sampling pixels for this RGB $\theta_i$ does not consume any privacy budget. Similarly, extending it to the randomization algorithm $\mathcal{A}$ (applied to $V$), w.l.o.g., considering $V$ as the video with an arbitrary extra VE $\Upsilon\in\{\Upsilon_1,\dots, \Upsilon_n\}$ (compared to $V'$), since \emph{VideoDP} should protect any arbitrary VE, we have:

\begin{itemize}
\item $\forall j\in[1,n], \Upsilon_j$, if any RGB $\theta_i\in V'\setminus \Upsilon_j$, then $x_i=c_i$ (retaining all the pixels with such RGB in the utility-driven private video). This does not consume any privacy budget since such RGBs do not exist in any of the VEs.

\end{itemize}

\subsubsection{Case (3): RGB $\theta_i\in V'\cap \Upsilon$}

The pixel sampling for each RGB in this case should satisfy $\epsilon$-differential privacy, and the overall sampling makes $e^{-\epsilon}\leq \frac{Pr[\mathcal{A}(V)=O]}{Pr[\mathcal{A}(V')=O]}\leq e^{\epsilon}$ hold. Thus, we should 
allocate privacy budgets for different RGBs in this case.  

However, due to the \emph{sequential composition} \cite{McSherry09}, we cannot allocate a budget for every RGB in this category (otherwise, given any $\epsilon$, for a large number of distinct RGBs, each share of the budget would be too extremely small). In other words, all the RGBs in this category may have to be suppressed (not sampled in the output video). To improve the output utility, our \emph{VideoDP} has the following three procedures for budget allocation in pixel sampling (Phase I): 

\begin{enumerate}
    \item Determine the RGBs selection rule (\emph{selecting the most representative RGBs in each VE for generating the output video}).
    \item Derive an optimal number of distinct RGBs within each VE (\emph{maximizing the utility of the VEs in the output video}).
    \item Allocate appropriate budgets for selected RGBs (\emph{per their RGB count distribution in the original video}).
\end{enumerate}

\noindent\textbf{1) RGBs Selection Rule.} Denoting the number of distinct RGBs in $\Upsilon_j, j\in[1,n]$ (which receive privacy budgets to output after Phase I) as $k_j$, the remaining RGBs in $\Upsilon_j$ will be suppressed (not sampled) during pixel sampling. Thus, this procedure ensures that the selected $k_j$ RGBs in $\Upsilon_j$ are most representative to reconstruct the object (without compromising privacy). An intuitive rule is to select the top frequent $k_j$ RGBs in $\Upsilon_j$. However, it might be biased to specific regions with intensive counts of similar RGBs in a VE. To address such limitation, we adopt the multi-scale analysis \cite{multi-scale08} in computer vision to partition each VE $\Upsilon_j$ into $k_j$ cells and select the top frequent RGB in each cell to allocate privacy budgets (as the ``representative RGBs''). Then, the sampled RGBs can be effective to reconstruct the VE in the utility-driven private video.

\vspace{0.05in}

\noindent\textbf{2) Optimal $k_j$ in Each VE.} This procedure is designed to maximize the utility of the VEs (i.e., object/human) in the utility-driven private video (after bilinear interpolation \cite{interpolation} in Phase II). If the number of distinct RGBs in $\Upsilon_j$ that receive privacy budgets $k_j$ is large, more distinct RGBs can be sampled in the VE but the budget allocated for each RGB would be extremely small; if $k_j$ is small, the budget allocated for each RGB would be large but less distinct RGBs can be sampled. We now seek for the optimal $k_j$ for $\Upsilon_j$ that can minimize the MSE between the interpolated VE (after Phase II) and the original VE. 

Specifically, since every pixel in $\Upsilon_j$ can be sampled (with the original RGB) or unsampled (with an estimated RGB), we minimize the expectation of MSE (referring to Equation \ref{eq:mse}) after the Phase II bilinear interpolation \cite{interpolation}. The expectation of each pixel's RGB is determined by the probabilities of ``sampled'' (denoted as $Pr(a,b,t)$) and ``unsampled and interpolated by its neighboring pixels'' (4 neighbors for a non-border pixel, 3 neighbors for a border-but-not-corner pixel, and 2 neighbors for a corner pixel, as shown in Figure \ref{fig:bilinear}. 

Denoting pixel $(a,b,t)$'s RGB in the output as $\hat{\theta}(a,b,t)$, for simplicity of notations, we denote the RGBs of its neighboring pixels as $\hat{\theta}_N, \hat{\theta}_S, \hat{\theta}_W$ and $\hat{\theta}_E$, for pixels $(a-1,b,t)$, $(a+1,b,t)$, $(a,b-1,t)$ and $(a,b+1,t)$, respectively. For a non-border pixel (4 neighbors), the expectation of its RGB \footnotemark[2] can be derived as:

\footnotetext[2]{Although the RGB values of all the pixels in $\Upsilon_j$ may be random (due to the differentially private sampling in Phase I), the expectations of RGBs for its neighboring pixels in $\Upsilon_j$ always satisfy a condition (ensured by bilinear interpolation \cite{interpolation}), e.g., Equation \ref{eq:mid}.}

\small
\begin{align}
&E[\hat{\theta}(a,b,t)]
=Pr(a,b,t)*\theta(a,b,t)+\sigma_0(a,b,t)*0\nonumber\\
+&\frac{\sigma_1(a,b,t)[1-Pr(a,b,t)][E(\hat{\theta}_N)+E(\hat{\theta}_S)+E(\hat{\theta}_W)+E(\hat{\theta}_E)]}{4}\nonumber\\
+&\frac{\sigma_2(a,b,t)[1-Pr(a,b,t)][3E(\hat{\theta}_N)+3E(\hat{\theta}_S)+3E(\hat{\theta}_W)+3E(\hat{\theta}_E)]]}{6*2}\nonumber\\
+&\frac{\sigma_3(a,b,t)[1-Pr(a,b,t)][3E(\hat{\theta}_N)+3E(\hat{\theta}_S)+3E(\hat{\theta}_W)+3E(\hat{\theta}_E)]}{4*3}\nonumber\\
+&\frac{\sigma_4(a,b,t)[1-Pr(a,b,t)][E(\hat{\theta}_N)+E(\hat{\theta}_S)+E(\hat{\theta}_W)+E(\hat{\theta}_E)]}{4}
\label{eq:mid}
\end{align}
\normalsize

where $\theta(a,b,t)$ is the original RGB (a constant) and probability of ``sampled'' $Pr(a,b,t)$ is determined by $k_j$ (given $V$ and $k_j$, it is deterministic if the RGB selection rule is decided previously). Probabilities $\sigma_0(a,b,t),\sigma_1(a,b,t), \sigma_2(a,b,t),\sigma_3(a,b,t)$ and $\sigma_4(a,b,t)$ are probabilities that pixel $(a,b,t)$ has 0 neighbor, 1 neighbor, 2 neighbors, 3 neighbors and 4 neighbors after sampling (which are also constants if $V$, $k_j$ and sampling mechanism are determined; note that $\sigma_0(a,b,t)+\dots+\sigma_4(a,b,t)=1$). In the equation, $E[\hat{\theta}_N]$, $E[\hat{\theta}_S]$, $E[\hat{\theta}_W]$ and $E[\hat{\theta}_E]$ are the RGB expectation of its four neighbors in the same $t$th frame (\emph{Equation \ref{eq:mid} presents the relation among the RGB expectations of the five pixels}, which are detailed in Appendix \ref{sec:border}). Similarly, we can obtain two other equations for pixels with special coordinates (border-but-not-corner or corner pixels of the frame, please see Equation \ref{eq:border} and \ref{eq:corner} in Appendix \ref{sec:border}). 

Thus, for each pixel in VE $\Upsilon_j$ (in all the frames), there exists exactly one equation out of three cases in Equation \ref{eq:mid}, \ref{eq:border} and \ref{eq:corner} (latter two are in Appendix \ref{sec:border}). As $k_j$ is determined, $\theta(a,b,t)$ and $Pr(a,b,t)$ are constants, then we can solve all the equations to obtain $\forall (a,b,t)\in \Upsilon_j, E[\hat{\theta}(a,b,t)]$. Thus, each $k_j$ value corresponds to the solved $\forall (a,b,t)\in \Upsilon_j, E[\hat{\theta}(a,b,t)]$, and then we can efficiently derive the optimal $k_j$ for $\Upsilon_j$ as:

\begin{equation}
    \argmin_{k_j}\frac{1}{|\Upsilon_j|}\sum_{\forall (a,b,t)\in \Upsilon_j}\big(E[\theta(a,b,t)]-E[\hat{\theta}(a,b,t)]\big)^2
    \label{eq:mse}
\end{equation}

where $|\Upsilon_j|$ denotes the total number of pixels in $\Upsilon_j$. Solving the above problem requires complexity $O(n^3\log(n))$, which is much faster than executing pixel sampling for all the possible $k_j$ and then comparing all the MSE results to get the optimal $k_j$ (\emph{since iteratively sampling all the pixels is expensive}). For details of the solver, please refer to Appendix \ref{sec:solver}. Notice that,

\begin{itemize}
\item \textbf{Range for $k_j$.} The optimal $k_j$ is derived from a specified range of $k_j$. It is unnecessary to traverse $k_j$ to a extremely large number (otherwise, the allocated budget for each RGB would be extremely small). The larger $k_j$, more diverse RGBs can be allocated with a privacy budget; the smaller $k_j$, each RGB will be allocated with a larger privacy budget. Thus, the lower/upper bounds for $k_j$ can be selected according the requested diversity of RGBs in the visual elements in practice ($k_j\leq 20$ can give good utility in our experiments). 

\item \textbf{Approximation.} As discussed before, since $\Upsilon_j$ in different frames may have different sizes and different sets of RGBs (though the difference can be minor), the most accurate $k_j$ can be obtained by solving the equations for all the pixels of $\Upsilon_j$ in all the frames (\emph{with complexity $O(n^3\log(n))$}, as proven in Appendix \ref{sec:opt}). If more efficient solvers are desirable, we can randomly select a frame (including $\Upsilon_j$) to solve the equations to obtain an approximated $k_j$ for $\Upsilon_j$ by assuming the VE does not change much in the video. Another alternative solution is to solve the optimal $k_j$ for each frame and average them (which is more efficient but less accurate).
\end{itemize}

Therefore, we can repeat the above procedure for all the VEs such that the optimal $k_j, j\in[1,n]$ can be obtained to minimize the MSE of the VEs in the utility-driven private video.

\vspace{0.05in}

\noindent\textbf{3) Budget Allocation.} As the optimal $k_j$ for each visual element $\Upsilon_j, j\in[1,n]$ is derived, we denote the set of RGBs in $\Upsilon_j, j\in[1,n]$ to allocate budgets as $\Psi_j$ with the cardinality $|\Psi_j|=k_j$. Then, we have the total number of RGBs to sample in $V$ (Case (3)) as the cardinality $|\Psi|$ of the union $\Psi=\bigcup_{j=1}^n\Psi_j$. We then present how to allocate privacy budget $\epsilon$ in Phase I for $|\Psi|$ different RGBs. The criterion for allocating budget is to allocate the privacy budgets based on the count distributions of RGBs in different VEs while fully utilizing the privacy budget $\epsilon$. For each VE $\Upsilon_j$, all the RGBs in $\Psi_j$ can fully enjoy the budget $\epsilon$ (since $\Psi_j$ includes all the RGBs that could generate visual element $\Upsilon_j$ in all the frames, and other RGBs would not be sampled into the visual element $\Upsilon_j$).\footnotemark[3]

\footnotetext[3]{Any two VEs (e.g., humans or objects) do not share pixels in the video since the front VE blocks a part of the back VE if they overlap in any frame.}

Then, we denote the $i$th RGB in $\Psi_j$ as $\widetilde{\theta}_{ij}$ where $i\in[1,k_j]$, and the count of $\widetilde{\theta}_{ij}$ in $\Upsilon_j$ as $d_j(\widetilde{\theta}_{ij})$ and the overall pixel count in $\Upsilon_j$ (in all frames) as $d_j$. Apparently, we can allocate $\frac{d_j(\widetilde{\theta}_{ij})\epsilon}{d_j}$ to RGB $\widetilde{\theta}_j(i), i\in[1,k_j]$ and apply this criterion to all the VEs. However, if any RGB $\widetilde{\theta}$ is included in multiple VEs (the intersections among the sets $\forall j\in[1,n], \Psi_j$), $\widetilde{\theta}$ will receive privacy budgets from different VEs (and should satisfy differential privacy for all of them). At this time, its budget should be allocated as the \emph{minimum} one out of all (otherwise, not all the VEs in pixel sampling can be protected with $\epsilon$-differential privacy since the budget for some VEs may exceed $\epsilon$). 

Nevertheless, if the minimum budget is adopted as above, some VEs cannot fully enjoy $\epsilon$ (the gap between $\widetilde{\theta}$'s original budget in a specific VE and its minimum budget among all the VEs would be wasted). To fully utilize the privacy budgets, we propose a \emph{budget allocation algorithm} for all the $|\Psi|$ distinct RGBs by \emph{prioritizing} them in the RGB set $\Psi=\bigcup_{j=1}^n\Psi_j$.

\begin{figure}[!h]
	\centering
		\includegraphics[angle=0, width=1\linewidth]{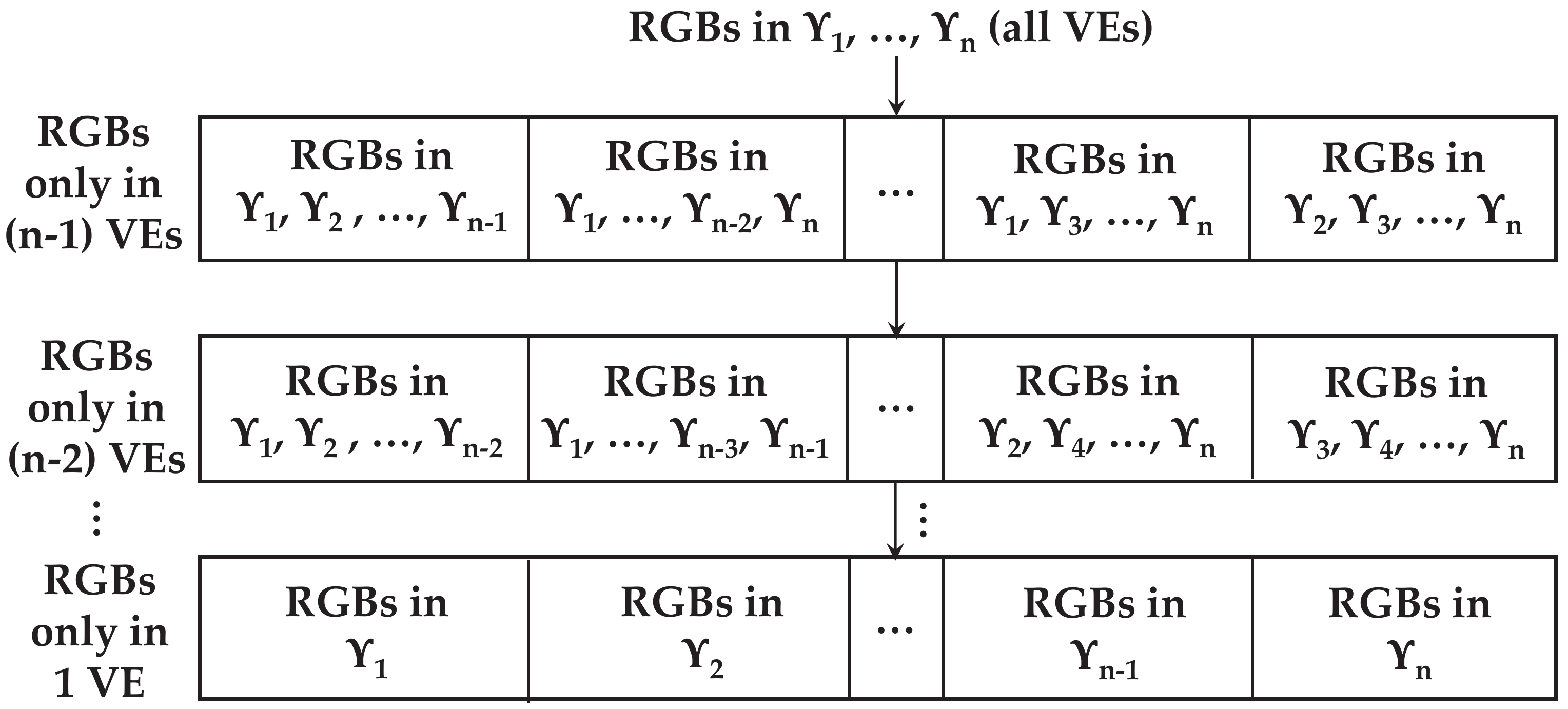}
	\caption[Optional caption for list of figures]
	{Prioritizing RGBs (for allocating budgets)}
	\label{fig:group}
\end{figure}

Specifically, we prioritize $|\Psi|$ different RGBs into $n$ disjoint partitions: as shown in Figure \ref{fig:group} (from top to down), RGBs in the first partition are included in all the VEs, RGBs in the second partition are included in $(n-1)$ VEs, \dots, RGBs in the $n$th partition are only included in a single VE. Then, our algorithm iteratively allocates budgets for RGBs in $n$ partitions (\emph{allocating budgets for all the RGBs in a partition in each iteration}). 

Since all the RGBs within each VE follow \emph{sequential composition} \cite{McSherry09} to split $\epsilon$, after allocating the budgets for all the RGBs in the $\ell$th partition, the allocation in the $(\ell+1)$th partition will be based on the remaining budget out of $\epsilon$ for every VE. In the $\ell$th iteration (for the $\ell$th partition), the budget for each RGB $\widetilde{\theta}$ is allocated based on its count distribution out of the remaining RGBs in each of the $(n-\ell+1)$ VEs (which include $\widetilde{\theta}$). Then, the \emph{minimum} budget derived from all the VEs is allocated to $\widetilde{\theta}$. 

\begin{figure}[!tbh]
	\centering
		\includegraphics[angle=0, width=1\linewidth]{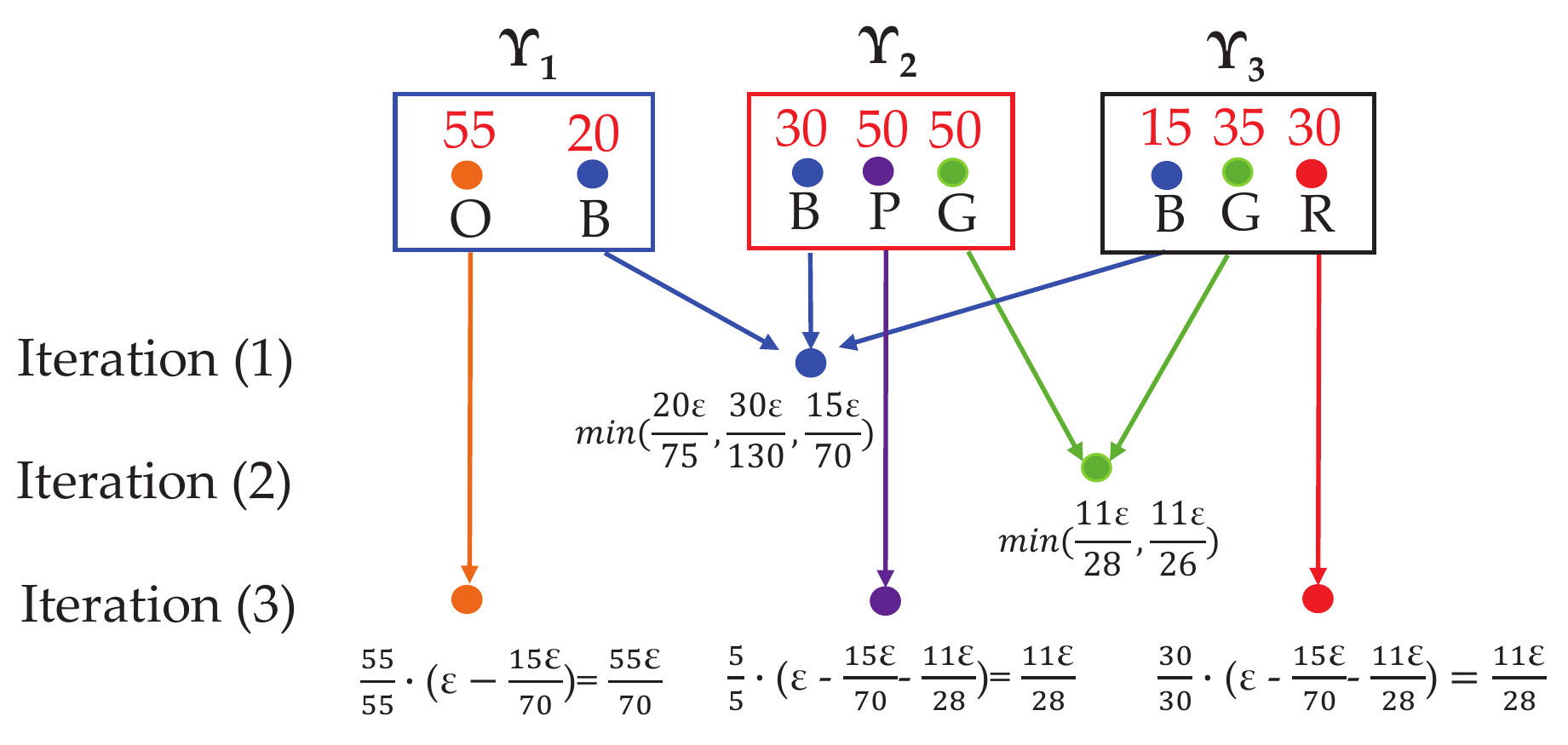}
		
	\caption[Optional caption for list of figures]
	{Example of Budget Allocation}
	\label{fig:example}
\end{figure}

\begin{example}
In Figure \ref{fig:example}, there are three VEs $\Upsilon_1,\Upsilon_2,\Upsilon_3$ in the video. Blue exists in all the VEs $\Upsilon_1,\Upsilon_2,\Upsilon_3$ with counts $20, 30, 15$. Green exists in $\Upsilon_2$ and $\Upsilon_3$ with counts $50$ and $35$. All the remaining RGBs only exist in only one VE (and the non-VE part of the video): Orange in $\Upsilon_1$ with count $55$, Purple in $\Upsilon_2$ with count $5$, and Red in $\Upsilon_3$ with count $30$. Thus, five different RGBs are prioritized (three partitions): \{B\}, \{G\}, and \{O, P, R\}.

In the 1st iteration (partition), Blue is first allocated with a privacy budget as the $\min\{\frac{20\epsilon}{75}, \frac{30\epsilon}{130}, \frac{15\epsilon}{70}\}$ (the minimum budget from three different VEs). The remaining budget for all the VEs is $\frac{55\epsilon}{70}$. In the 2nd iteration, Green is allocated with a privacy budget $\frac{55\epsilon}{70}\cdot \min\{\frac{50}{100}, \frac{35}{65}\}=\min\{\frac{11\epsilon}{28}, \frac{11\epsilon}{26}\}$. In the 3rd iteration, Orange is allocated with budget $\frac{55}{55}\cdot (\epsilon- \frac{15\epsilon}{70})=\frac{55\epsilon}{70}$, Purple is allocated with budget $\frac{5}{5}\cdot(\epsilon-\frac{15\epsilon}{70}-\frac{11\epsilon}{28})=\frac{11\epsilon}{28}$, and Red is allocated with budget $\frac{30}{30}\cdot(\epsilon-\frac{15\epsilon}{70}-\frac{11\epsilon}{28})=\frac{11\epsilon}{28}$.
  
\end{example}

Since almost all the VEs have RGBs in the last partition (every VE in real videos include numerous RGBs that are not included in other VEs), the budget can be fully allocated for all the RGBs (\emph{the budget sum of all the RGBs} in any VE equals $\epsilon$). Algorithm \ref{algm:budget} in Appendix \ref{sec:baa} presents the details of budget allocation.

\subsection{Pixel Sampling Algorithm}

To illustrate the algorithm for Phase I, we again discuss the pixel sampling for three different cases of RGBs. 

Recall that in Case (1), for all the RGBs $\theta_i\in \Upsilon\setminus V'$, all the pixels with such RGBs will not be sampled (ensuring that $\delta=0$). In Case (2), for all the RGBs $\theta_i\in V'\setminus \Upsilon$, all the pixels with such RGBs will be sampled (with the original coordinates and frame). Sampling pixels for all the RGBs in Case (2) satisfy $0$-DP.

In Case (3), for all the RGBs $\theta_i\in V'\cap \Upsilon$, as discussed in Section \ref{sec:pri_con}, we sample pixels for $|\Psi|$ distinct RGBs where $|\Psi|\leq \sum_{j=1}^nk_j$ (since different VEs may have common RGBs). We denote the set $\Psi=\bigcup_{j=1}^n\Psi_j=\{\widetilde{\theta}_1, \dots, \widetilde{\theta}_{|\Psi|}\}$ (the set of RGBs which request privacy budgets), and its set of budgets $\{\epsilon(\widetilde{\theta}_1), \dots, \epsilon(\widetilde{\theta}_{|\Psi|})\}$. It is straightforward to show the \emph{sequential composition} \cite{McSherry09} of allocated privacy budgets (by Algorithm \ref{algm:budget}) for all the RGBs: 

\begin{equation}
\small
    \sum_{\forall \widetilde{\theta}_i\in \Psi_j}\epsilon(\widetilde{\theta}_i)=\epsilon
\end{equation}

where $\widetilde{\theta}_i$ is denoted as the $i$th RGB in $\Psi$. Then, for any $V$ and $V'$ differing in an arbitrary VE $\Upsilon_j, j\in[1,n]$, 

\begin{equation}
\forall \widetilde{\theta}_i\in \Psi_j, e^{-\epsilon(\widetilde{\theta}_i)}\leq\frac{Pr[\mathcal{A}(V(\widetilde{\theta}_i))=O(\widetilde{\theta}_i)]}{Pr[\mathcal{A}(V'(\widetilde{\theta}_i))=O(\widetilde{\theta}_i)]}\leq e^{\epsilon(\widetilde{\theta}_i)}
\end{equation}

where $V(\widetilde{\theta}_i)$ and $V'(\widetilde{\theta}_i)$ are the pixels with RGB $\widetilde{\theta}_i$ in $V$ and $V'$. Deriving the probability for randomly picking $\widetilde{x}_i$ out of $\widetilde{c}_i$ pixels with RGB $\widetilde{\theta}_i$ (pixel sampling using input $V$ and $V'$, differing in $\Upsilon_j$), we have: 
\begin{align}
\forall i\in[1,|\Psi|],~&Pr[\mathcal{A}(V(\widetilde{\theta}_i))=O(\widetilde{\theta}_i)]=1/\binom{\widetilde{c}_i}{\widetilde{x}_i}\nonumber\\
&Pr[\mathcal{A}(V'(\widetilde{\theta}_i))=O(\widetilde{\theta}_i)]=1/\binom{\widetilde{c}_i-\widetilde{c}_i^j}{\widetilde{x}_i}\nonumber\\
\implies &e^{-\epsilon(\widetilde{\theta}_i)}\leq\binom{\widetilde{c}_i}{\widetilde{x}_i}\big{/}\binom{\widetilde{c}_i-\widetilde{c}_i^j}{\widetilde{x}_i}\leq e^{\epsilon(\widetilde{\theta}_i)}
\end{align}

where $\widetilde{c}_i$ and $\widetilde{x}_i$ are the input and output counts of RGB $\widetilde{\theta}_i$ while $\widetilde{c}_i^j$ denotes the count of $\widetilde{\theta}_i$ in VE $\Upsilon_j$. 

Thus, we can derive a maximum output count for sampling pixels for each RGB $\widetilde{\theta}_i, i\in[1,|\Psi|]$ and the maximum $\widetilde{x}_i$ can be efficiently computed as below (the only variable): $\forall i\in[1,|\Psi|]$,
\begin{equation}
\max \{\widetilde{x}_i| \forall j\in [1,n], \binom{\widetilde{c}_i}{\widetilde{x}_i}\big{/}\binom{\widetilde{c}_i-\widetilde{c}_i^j}{\widetilde{x}_i}\leq e^{\epsilon(\widetilde{\theta}_i)}\}
\label{eq:max}
\end{equation}

The maximum output count of the $i$th RGB $\widetilde{x}_i, i\in[1,|\Psi|]$ can be efficiently computed from Equation \ref{eq:max} (e.g., via binary search) since the left-side of the inequality is monotonic on $\widetilde{x}_i$. To sum up, Algorihtm \ref{algm:pixel} presents the details of Phase I.

\begin{algorithm}[!h]
\small
\SetKwInOut{Input}{Input}\SetKwInOut{Output}{Output}

\Input{input video $V$, privacy budget $\epsilon$}
 
\Output{sampled video $O$ (pixels)}

detect all the visual elements $(\Upsilon_1,\dots, \Upsilon_n)$ in $V$

\tcp{Case (1)}

\ForEach{$\Upsilon_j, j\in[1,n]$}{
\ForEach{RGB $\theta_i\in \Upsilon_j$ but $\notin (V\setminus \Upsilon_j)$}{

suppress all $c_i$ pixels with RGB $\theta_i$ in $V$  
}
}

\tcp{Case (2)}

\ForEach{RGB $\theta_i\in V\setminus\bigcup_{j=1}^n\Upsilon_j$}{
output all $c_i$ pixels with RGB $\theta_i$ in $V$ (original coordinates and frame)}

\tcp{Case (3)}

\ForEach{$\Upsilon_j, j\in[1,n]$}{
compute the optimal number of distinct RGBs to sample in $\Upsilon_j$ (minimum expectation of MSE): $k_j$}

execute Algorithm \ref{algm:budget} (in Appendix \ref{algm:budget}) to allocate budgets for all the RGBs in $\Psi=\{\widetilde{\theta}_1, \dots, \widetilde{\theta}_{|\Psi|}\}$

\ForEach{$\widetilde{\theta}_i, i\in[1,|\Psi|]$}{

compute the maximum $\widetilde{x}_i$:
$\max \{\widetilde{x}_i| \forall j\in [1,n], \binom{\widetilde{c}_i}{\widetilde{x}_i}\big{/}\binom{\widetilde{c}_i-\widetilde{c}_i^j}{\widetilde{x}_i}\leq e^{\epsilon(\widetilde{\theta}_i)}\}$

randomly pick $\widetilde{x}_i$ pixels with RGB $\widetilde{\theta}_i$ in $V$ to output (original coordinates and frame)

}
	\caption{Pixel Sampling ($\epsilon$-DP)}
	\label{algm:pixel}
\end{algorithm}    

\begin{theorem}
The pixels sampling in \emph{VideoDP} (Phase I) satisfies $\epsilon$-differential privacy.
\label{theorem:dp}
\end{theorem}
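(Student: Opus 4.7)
The plan is to fix two neighboring videos $V$ and $V'$ with $V = V' \cup \Upsilon$ (WLOG), where $\Upsilon = \Upsilon_j$ is some single VE, and to show that the ratio $\frac{\Pr[\mathcal{A}(V)=O]}{\Pr[\mathcal{A}(V')=O]}$ is bounded in $[e^{-\epsilon},e^{\epsilon}]$ for every output $O$ in the range of the sampler. The key structural observation is that Algorithm~\ref{algm:pixel} samples pixels independently for each distinct RGB value, so if we decompose any output $O$ as the disjoint union $O = \bigcup_i O(\widetilde\theta_i)$ of its pixels grouped by RGB, the probability factorizes as $\Pr[\mathcal{A}(V)=O] = \prod_i \Pr[\mathcal{A}(V(\widetilde\theta_i))=O(\widetilde\theta_i)]$, and likewise for $V'$. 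It therefore suffices to bound the per-RGB ratio and then multiply.

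Next I would case-split on each RGB $\theta$ exactly as in Section~\ref{sec:pri_con}. In Case~(2), where $\theta \in V' \setminus \Upsilon$, the pixel multisets for $\theta$ in $V$ and $V'$ are identical, so the sampling distributions coincide and the per-RGB ratio equals $1$; this contributes $0$ to the privacy budget. In Case~(1), where $\theta \in \Upsilon \setminus V'$, Algorithm~\ref{algm:pixel} forces $x_i = 0$, so the only output allowed for $\theta$ is the empty set, which occurs with probability $1$ under both $V$ and $V'$ (trivially under $V'$ since $\theta \notin V'$); again the per-RGB ratio equals $1$ and $\delta = 0$ is maintained. The interesting case is Case~(3), $\theta \in V' \cap \Upsilon$: here we choose $\widetilde x_i$ pixels uniformly at random out of $\widetilde c_i$ in $V$ (respectively $\widetilde c_i - \widetilde c_i^{\,j}$ in $V'$), so the per-RGB ratio is $\binom{\widetilde c_i}{\widetilde x_i}\big/\binom{\widetilde c_i - \widetilde c_i^{\,j}}{\widetilde x_i}$, which by the choice of $\widetilde x_i$ in Equation~\ref{eq:max} is at most $e^{\epsilon(\widetilde\theta_i)}$ (and at least $e^{-\epsilon(\widetilde\theta_i)}$ by symmetry, since removing a strict subset of the pool can only decrease the number of unordered samples).

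To combine, I would invoke sequential composition over the RGBs $\widetilde\theta_i \in \Psi_j$ that actually appear in the differing VE $\Upsilon_j$: RGBs outside $\Psi_j$ fall into Case~(1) or Case~(2) and contribute $0$, while the RGBs inside $\Psi_j$ each consume $\epsilon(\widetilde\theta_i)$. The correctness of the budget allocation (Algorithm~\ref{algm:budget}) guarantees that $\sum_{\widetilde\theta_i \in \Psi_j} \epsilon(\widetilde\theta_i) = \epsilon$ for every $j$, so the product of per-RGB ratios is bounded by $\exp\!\bigl(\sum_{\widetilde\theta_i \in \Psi_j} \epsilon(\widetilde\theta_i)\bigr) = e^{\epsilon}$, with the matching lower bound $e^{-\epsilon}$. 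Since $\Upsilon_j$ was an arbitrary VE, Definition~\ref{def:dp} is satisfied.

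The main obstacle I expect is the budget-accounting step, not the per-RGB calculation: one must verify that the iterative, priority-based allocation of Algorithm~\ref{algm:budget} actually exhausts exactly $\epsilon$ for \emph{every} VE $\Upsilon_j$ (not just in expectation or on average), including the edge case where some partition is empty for a particular VE. A second subtlety is confirming that the Case~(3) ratio has the claimed two-sided bound: I would argue that for a monotone sampling-without-replacement ratio, $\binom{c}{x}/\binom{c-c^j}{x}$ is increasing in $c^j$ and equals $1$ at $c^j = 0$, so enforcing the upper bound via Equation~\ref{eq:max} automatically gives the symmetric lower bound when we swap the roles of $V$ and $V'$. Once these two bookkeeping points are nailed down, the rest of the proof is a clean factor-and-compose argument.
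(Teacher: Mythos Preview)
Your proposal is correct and follows essentially the same route as the paper's own proof: a three-case split on each RGB (suppressed in Case~(1), identical in Case~(2), budgeted in Case~(3)), the per-RGB ratio $\binom{\widetilde c_i}{\widetilde x_i}\big/\binom{\widetilde c_i-\widetilde c_i^{\,j}}{\widetilde x_i}$ bounded via Equation~\ref{eq:max}, and then sequential composition over $\widetilde\theta_i\in\Psi_j$ using $\sum_{\widetilde\theta_i\in\Psi_j}\epsilon(\widetilde\theta_i)=\epsilon$. If anything, your write-up is more explicit than the paper's about the factorization of $\Pr[\mathcal{A}(V)=O]$ over RGBs and about why the two-sided bound holds; the paper simply asserts these, and the budget-exhaustion subtlety you flag is handled in the paper only informally (by noting that ``almost all the VEs have RGBs in the last partition'').
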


\begin{proof}
We can prove the differential privacy guarantee for three cases of pixel sampling in the algorithm. 

In Case (1), since all the pixels with such RGBs are suppressed, $\delta=0$ always holds with Line 2-4 in Algorithm \ref{algm:pixel}. In Case (2), since $\forall \theta_i, \frac{Pr[\mathcal{A}(V(\theta_i))=O(\theta_i)]}{Pr[\mathcal{A}(V'(\theta_i))=O(\theta_i)]}$ always equals $1$, Line 5-6 in Algorithm \ref{algm:pixel} does not result in privacy loss. In Line 7-12 of the algorithm (Case (3)), we have $\forall i\in[1,|\Psi|]$, $e^{-\epsilon(\widetilde{\theta}_i)}\leq\frac{Pr[\mathcal{A}(V(\widetilde{\theta}_i))=O(\widetilde{\theta}_i)]}{Pr[\mathcal{A}(V'(\widetilde{\theta}_i))=O(\widetilde{\theta}_i)]}\leq e^{\epsilon(\widetilde{\theta}_i)}$ holds. Per the \emph{sequential composition} of differential privacy \cite{McSherry09}, for all $V$ and $V'$ differing in any VE $\Upsilon_j, j\in[1,n]$, we have:

\small

\begin{align}
    &\prod_{\forall \widetilde{\theta}_i\in \Psi_j}\frac{Pr[\mathcal{A}(V(\widetilde{\theta}_i))=O(\widetilde{\theta}_i)]}{Pr[\mathcal{A}(V'(\widetilde{\theta}_i))=O(\widetilde{\theta}_i)]}\leq exp[\displaystyle\sum_{\forall \widetilde{\theta}_i\in \Psi_j}\epsilon(\widetilde{\theta}_i)]\nonumber
\end{align}

\begin{align}
        &\prod_{\forall \widetilde{\theta}_i\in \Psi_j}\frac{Pr[\mathcal{A}(V(\widetilde{\theta}_i))=O(\widetilde{\theta}_i)]}{Pr[\mathcal{A}(V'(\widetilde{\theta}_i))=O(\widetilde{\theta}_i)]}\geq exp[\displaystyle-\sum_{\forall \widetilde{\theta}_i\in \Psi_j}\epsilon(\widetilde{\theta}_i)]\nonumber\\
    &\implies e^{-\epsilon}\leq\frac{Pr[\mathcal{A}(V)=O]}{Pr[\mathcal{A}(V')=O]}\leq e^{\epsilon}
\end{align}

\normalsize

Thus, this completes the proof. 
\end{proof}

\noindent\textbf{Discussion.} In case of $V'=V \cup \Upsilon$, adding an arbitrary VE $\Upsilon$ to $V$ to generate $V'$. Similarly, for all $\widetilde{\theta}_i$, $\widetilde{x}_i$ can also be derived from $V'$ and $V$ to ensure differential privacy for pixel sampling.

\section{Phase II and Phase III}
\label{sec:post}
\pdfoutput=1
After sampling pixels in Phase I, the suppressed pixels in Case (1) and unsampled pixels in Case (3) in the output do not have any RGB (as shown in Figure \ref{fig:bilinear}). Then, Phase II generates the utility-driven private video (random) by estimating the RGBs for the missing pixels to boost utility with computer vision techniques, and Phase III responds to the queries (over the private video) for video analytics.

\subsection{Phase II: Generating Utility-driven Private Video (Random)}

\begin{figure}[!tbh]
	\centering
		{\includegraphics[angle=0, width=0.8\linewidth]{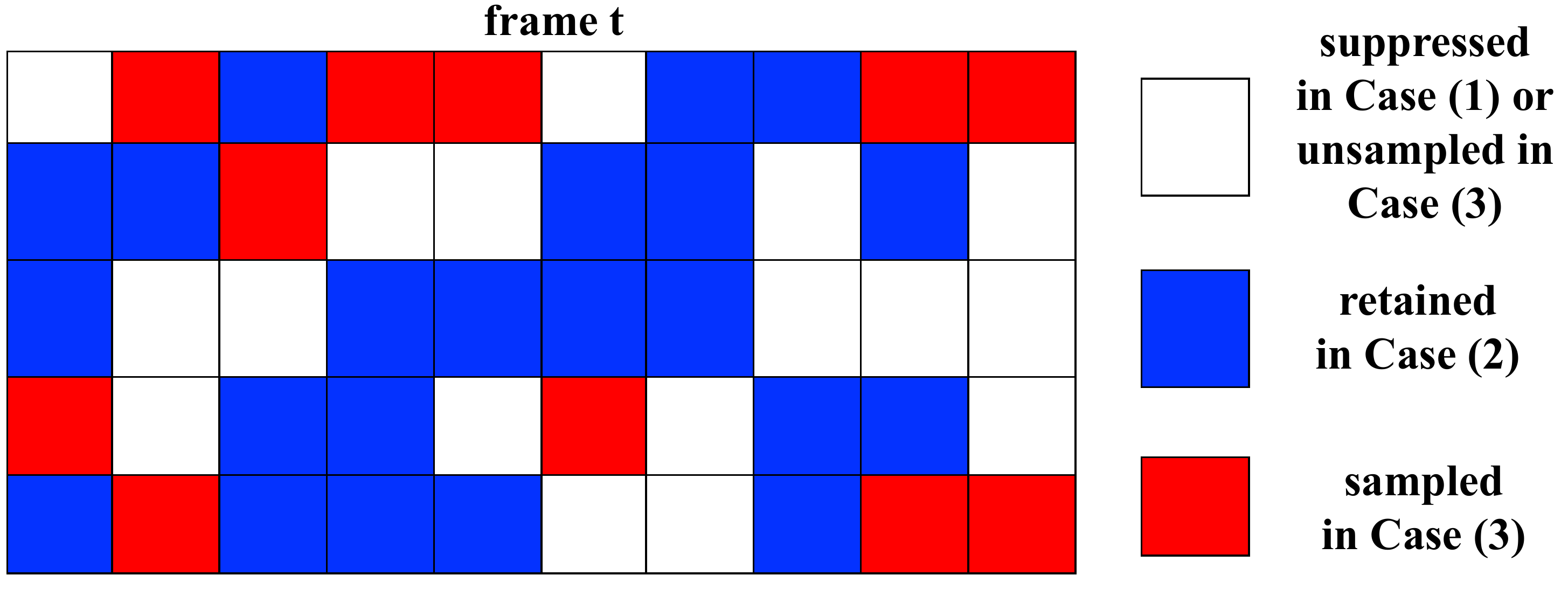}}
	\caption[Optional caption for list of figures]
	{Pixels after Sampling (Phase I)}
	\label{fig:bilinear}
\end{figure}

For all the coordinates with a RGB value after sampling, the RGBs of such pixels can be estimated using bilinear interpolation \cite{interpolation}. As discussed in Section \ref{sec:pri_con}, the allocated privacy budgets have been shown to optimize the utility of both sampling and bilinear interpolation, e.g., the optimal number of RGBs selected in each VE for sampling $k_j$ tends to minimize the expectation of MSE between the utility-driven private video (after interpolation) and the original video. Thus, Phase II can directly apply bilinear interpolation.

For simplification of notations, we consider both retained pixels and sampled pixels as ``sampled pixels'', and both suppressed pixels and unsampled pixels as ``unsampled pixels''. Specifically,

\begin{itemize}
\item In the output video of Phase I, pixels (not on the border) have at most 4 neighbors in each frame; the pixels on the border of each frame (not corner) have at most 3 neighbors; the pixels at the corner of each frame have at most 2 neighbors.

\item The algorithm interpolates pixels in visual elements and the remaining pixels (background), separately. For each interpolation, it traverses all the unsampled pixels in all the frames (e.g., a specific visual element). If any unsampled pixel has any sampled neighbor(s), the RGB for current unsampled pixel is estimated as the \emph{mean} of all its sampled neighbors.

\item If any unsampled pixel's all the neighbors are also unsampled, the algorithm skips such unsampled pixel in the current traversal. The algorithm iteratively traverses all the skipped unsampled pixels. The algorithm terminates until every unsampled pixel is assigned with an interpolated RGB. In our experiments, the interpolation terminates very quickly since the RGB of any pixel can be readily estimated as long as it has at least one neighbor which is sampled or previously interpolated. 
\item If any visual element does not have a sampled pixel in any frame, the interpolation of the pixels for the visual element in such frame will be executed with the remaining pixels (background)  $V\setminus \bigcup_{j=1}^n\Upsilon_j$.

\end{itemize}

Algorithm \ref{algm:inter} presents the details of pixel interpolation. Notice that, besides interpolating pixels in each frame, Phase II also  interpolates the VEs in specific frames (if none of their pixels are sampled in the frames but they are sampled in the neighboring frames). Such VEs will be inserted into the corresponding frames where the coordinates/RGBs are estimated by averaging their interpolated results in the neighboring frames.

\begin{algorithm}[!h]
\small 
\SetKwInOut{Input}{Input}\SetKwInOut{Output}{Output}

\Input{output video $O$ in Phase I}
 
\Output{utility-driven private video $\mathbb{O}$}

\tcp{interpolating pixels in VEs}

\ForEach{VE $\Upsilon_j, j\in[1,n]$}{

extract all the unsampled pixels in $\Upsilon_j$ with their frame, coordinates, and RGBs ($U\leftarrow \Upsilon_j\setminus O$)

\Repeat{$U=\emptyset$}{

\ForEach{unsampled pixel $p\in U$}
{

\If{$p$'s all the neighbors in $U$}{

\textbf{continue}

}\Else{

calculate the RGB mean of $p$'s sampled neighbors (in $O$) and assign it to $p$

$\mathbb{O}\leftarrow O\cup p$ and $U\leftarrow U\setminus p$

}

}

}

}

\tcp{interpolating background pixels}

repeat Line 1-10 for all the pixels in $V\setminus \bigcup_{j=1}^n\Upsilon_j$ 

	\caption{Pixel Interpolation}
	\label{algm:inter}
\end{algorithm}

\subsection{Video Analytics and Privacy Analysis}
\label{sec:pa}

Similar to the framework of PINQ for data analytics \cite{McSherry09}, \emph{VideoDP} can also function most of the analyses performed on videos. If breaking down any video analysis into queries, \emph{VideoDP} (Phase III) directly applies the queries to the \emph{utility-driven private video} (which is randomly generated in Phase I and II) and return the results to untrusted analysts. For any query created at the pixel, feature or visual element level \cite{imghistogram,Feature, Behavior}, \emph{VideoDP} (Phase III) could efficiently respond the results with differential privacy guarantee.

\begin{theorem}
\emph{VideoDP} satisfies $\epsilon$-differential privacy.
\label{theorem:qdp}
\end{theorem}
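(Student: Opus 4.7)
The plan is to reduce Theorem 2 to Theorem 1 via the post-processing property of differential privacy. Since Theorem 1 already establishes that the random map $\mathcal{A}_{I}$ produced by Phase I (sampling pixels from $V$ with the per-RGB budget allocation) satisfies $\epsilon$-differential privacy, all that remains is to verify that Phases II and III never re-access the raw input $V$ but only consume the Phase I output. Once that input-independence is confirmed, $(\epsilon,0)$-DP is preserved through any further (possibly randomized) computation by the standard post-processing lemma.

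Concretely, I would first rewrite the whole pipeline as a composition $\mathcal{A} = f_{III}\circ f_{II}\circ \mathcal{A}_{I}$, where $f_{II}$ is the bilinear pixel-interpolation algorithm (Algorithm~\ref{algm:inter}) and $f_{III}$ is the analyst's query (or sequence of queries). The key observation to isolate is that $f_{II}$ takes as input only the set $O$ of surviving pixels (their coordinates, frame IDs, and the RGBs already released in Phase I) and fills in missing pixels by averaging over neighbors that themselves come from $O$ or from earlier interpolation rounds; at no point does it consult the suppressed pixels of $V$, the VE-membership labels of suppressed pixels, or any count $c_i, c_i^j$ of the original video. Hence $f_{II}$ is a (data-independent) function of $\mathcal{A}_{I}(V)$ alone. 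The same is true of $f_{III}$: per Section~\ref{sec:pa}, analysts issue queries against the utility-driven private video $\mathbb{O}$ returned by Phase II, not against $V$.

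Given that structure, the proof reduces to a one-line application of post-processing: for any (possibly randomized) map $g$ and any $\epsilon$-DP mechanism $\mathcal{M}$, $g\circ \mathcal{M}$ is again $\epsilon$-DP, because for every measurable $S$ and every pair of neighboring inputs $V, V'$,
\begin{equation}
\Pr[g(\mathcal{M}(V))\in S] = \sum_{o} \Pr[g(o)\in S]\,\Pr[\mathcal{M}(V)=o] \leq e^{\epsilon}\sum_{o} \Pr[g(o)\in S]\,\Pr[\mathcal{M}(V')=o],
\end{equation}
and symmetrically for the lower bound, so the ratio $\Pr[\mathcal{A}(V)=O]/\Pr[\mathcal{A}(V')=O]$ inherits the bounds of Theorem~\ref{theorem:dp}. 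Applying this twice---first with $\mathcal{M}=\mathcal{A}_{I}$, $g=f_{II}$, then with $\mathcal{M}=f_{II}\circ\mathcal{A}_{I}$, $g=f_{III}$---yields $\epsilon$-DP for the entire VideoDP pipeline.

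The only nontrivial step, and the one I would spend the most care on, is the verification that $f_{II}$ is genuinely input-independent. The subtlety is that Algorithm~\ref{algm:inter} operates separately on each VE $\Upsilon_j$ and on the background $V\setminus\bigcup_j \Upsilon_j$, which on its face seems to reference $V$. I would argue that the VE partitioning needed inside Phase II can be recovered from the Phase I output itself: each sampled pixel is released with its original coordinates, frame ID, and VE tag (as used implicitly in Algorithm~\ref{algm:pixel}), and the coordinate/frame-ID grid together with the sampled tags is already part of $\mathcal{A}_{I}(V)$. Thus $f_{II}$ can be written purely as a function of the Phase I release, and the post-processing argument goes through; any remaining ambiguity (e.g., how to treat frames where a VE has no sampled pixel) is resolved by the fallback described in the last bullet of Section~\ref{sec:post}, which again reads only from $\mathcal{A}_{I}(V)$. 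This completes the reduction.
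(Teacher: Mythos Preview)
Your proposal is correct and follows essentially the same approach as the paper: invoke Theorem~\ref{theorem:dp} for Phase~I and then argue that Phases~II and~III are post-processing of the Phase~I output, so $\epsilon$-DP is preserved. The paper's proof is terser---it briefly routes through the equivalence of probabilistic and indistinguishability DP (Proposition~\ref{thm:pdp}) before citing the post-processing immunity of \cite{Dwork14}---whereas you spell out the post-processing inequality explicitly and, more carefully than the paper, flag and attempt to discharge the input-independence of the VE partitioning used in Algorithm~\ref{algm:inter}.
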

\begin{proof}

Recall that we have proven Phase I satisfies $\epsilon$-differential privacy in Theorem \ref{theorem:dp}. We now prove that Phase II and III do not result in additional privacy risks.

Since Phase I in \emph{VideoDP} satisfies $\epsilon$-differential privacy, for any pair of neighboring videos $V$ and $V'$, we have $e^{-\epsilon}\leq\frac{Pr[\mathcal{A}(V)=O]}{Pr[\mathcal{A}(V)=O]}\leq e^\epsilon$. Such differential privacy satisfies $\epsilon$-probabilistic differential privacy \cite{MachanavajjhalaKAGV08,corn09}, which also satisfies $\epsilon$-indistinguishability differential privacy \cite{Dwork06differentialprivacy,Dwork06} (bounding $Pr[\mathcal{A}(V)\in S]$ and $Pr[\mathcal{A}(V')\in S]$ where $S$ is any set of possible outputs), as proven in Proposition \ref{thm:pdp} in Appendix \ref{sec:indist} \cite{MachanavajjhalaKAGV08,corn09}.

Then, after applying \emph{VideoDP} to inputs $V$ and $V'$, the outputs of Phase I are $\epsilon$-indistinguishable. Since the pixel interpolation (Phase II) and video queries/analysis (Phase III) are deterministic procedures applied to the output of Phase I (which can be considered as \emph{post-processing} differentially private results), the output $\mathbb{O}$ of Phase II and the analysis/query results of Phase III derived from inputs $V$ and $V'$ are also $\epsilon$-indistinguishable (``\emph{Differential privacy is immune to post-processing}'' has been proven in \cite{Dwork14}). Therefore, \emph{VideoDP} also satisfies $\epsilon$-differential privacy.
\end{proof}

The procedures and privacy guarantee in \emph{VideoDP} can be interpreted as follows. Given any two videos $V$ and $V'$ that differ in one VE (e.g, a pedestrian), Phase I and II generate $\epsilon$-indistinguishable utility-driven private video (which is random) for $V$ and $V'$, respectively. Performing any query (e.g., the count of pedestrians) over the two indistinguishable utility-driven private video, the query/analysis results are also indistinguishable.

\section{Discussion}
\label{sec:disc}
\noindent \textbf{Relaxed Differential Privacy}: in general, Theorem \ref{theorem:qdp} guarantees that the query/analysis results satisfy $\epsilon$-DP. However, in some extreme cases, if real query result in video $V$ equals one and the result in $V'$ is zero (e.g., query is only related to the extra VE), we have to adopt $(\epsilon,\delta)$-differential privacy (Definition \ref{def:rdp}) in \emph{VideoDP}. In such cases, the pixel sampling in \emph{VideoDP} (Phase I) makes the output of $V$ to be $\epsilon$-indisintiguishable as the output of $V'$ where the extra VE is suppressed with high probability ($\geq 1-\delta$). Then, such query over the utility-driven private video derived by $V$ would return 0 with high probability (still ensuring indistinguishability and differential privacy). For simplicity of cases, we only focus on generic queries/analysis in this paper.

\vspace{0.05in}
\noindent\textbf{Non-Sensitive Visual Elements}: before generating videos by \emph{VideoDP}, sensitive visual elements are detected and specified for differentially private protection. If some objects and/or humans are unnecessary to protect (e.g., considered as non-sensitive), \emph{VideoDP} can leave them with the background scene and retain more utility for them in the utility-driven private video.

\vspace{0.05in}
    
\noindent \textbf{Arbitrary Background Knowledge}: differential privacy ensures protection/indistinguishability against arbitrary background knowledge. \emph{VideoDP} is proposed to only protect the visual elements (e.g., sensitive objects and humans). Thus, the background knowledge is assumed to be any information on the visual elements which are detected/specified to be protected.

\vspace{0.05in}

    \noindent\textbf{Defense against Correlations}: videos include numerous frames, if protecting specific visual elements in only one frame, the correlations in sequential frames may also leak information to adversaries \cite{SongSigmod17,CaoICDE17}. Our \emph{VideoDP} can address such vulnerabilities since all the visual elements in all the frames are protected using our defined privacy notion -- adding or removing \emph{any visual element in any number of frames} would not result in significant risks. From this perspective, the privacy notion is defined for the entire period of the video, rather than a specific time. Thus, possible privacy leakage resulted from correlations among multiple frames can be tackled.

\section{Experiments}
\label{sec:exp}

\begin{figure*}[!tbh]
	\centering
	\subfigure[KL vs $\epsilon$ (Video PED)]{
		\includegraphics[angle=0, width=0.249\linewidth]{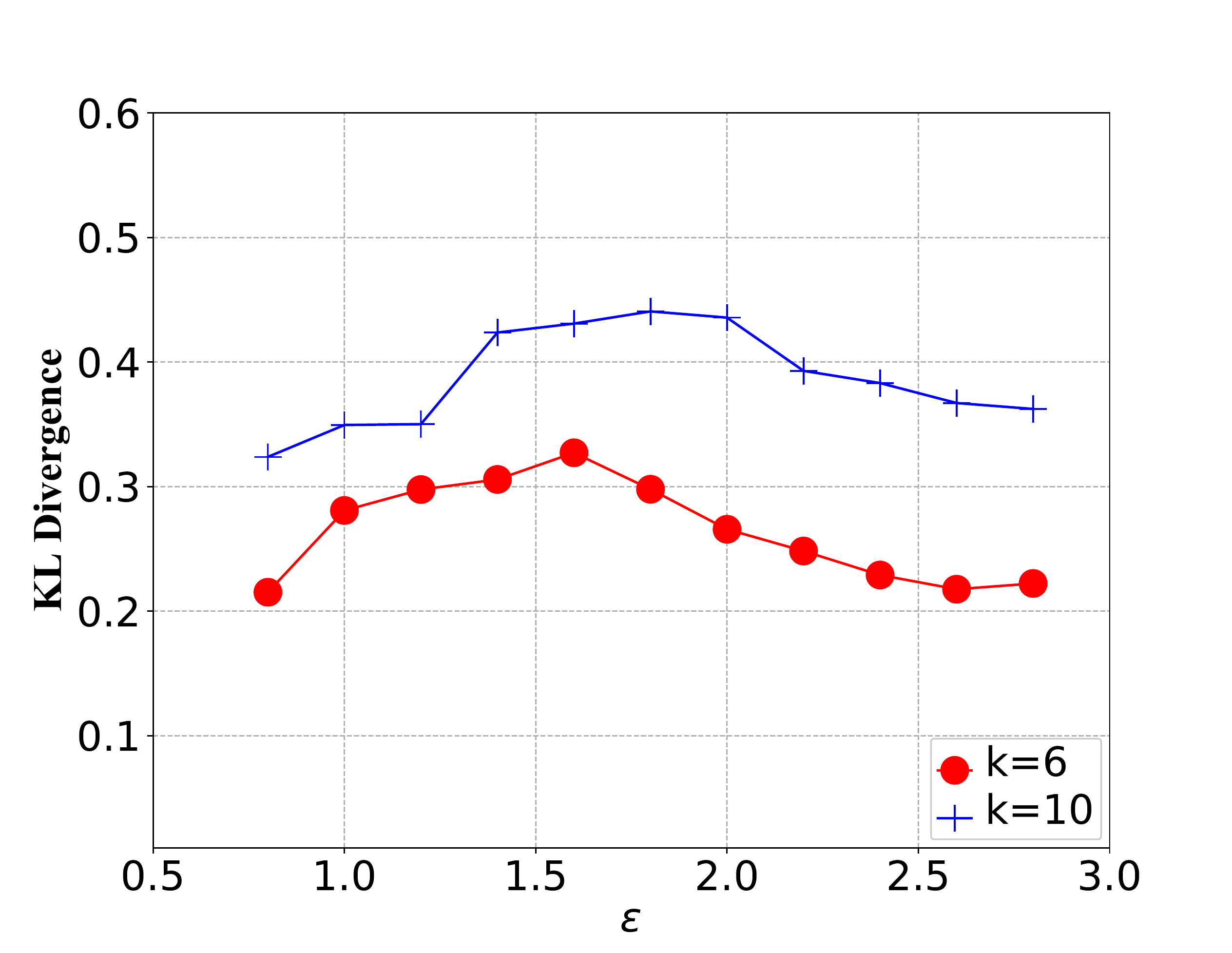}
		\label{fig:kle1} }
		\hspace{-0.2in}
		\subfigure[KL vs $\epsilon$ (Video VEH)]{
		\includegraphics[angle=0, width=0.249\linewidth]{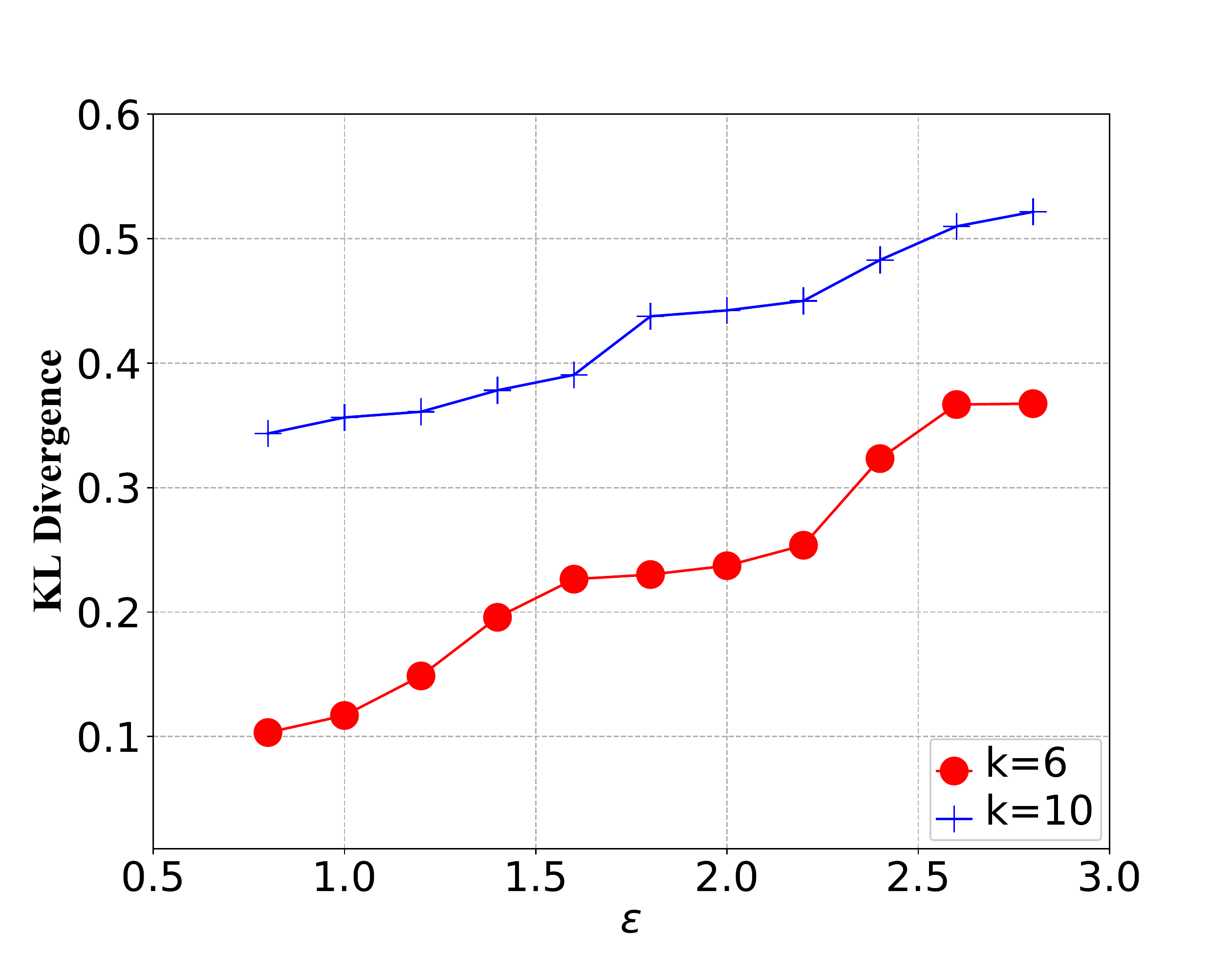}
		\label{fig:kle2} }
		\hspace{-0.2in}
	\subfigure[MSE vs $\epsilon$ (after Phase I)]{
		\includegraphics[angle=0, width=0.249\linewidth]{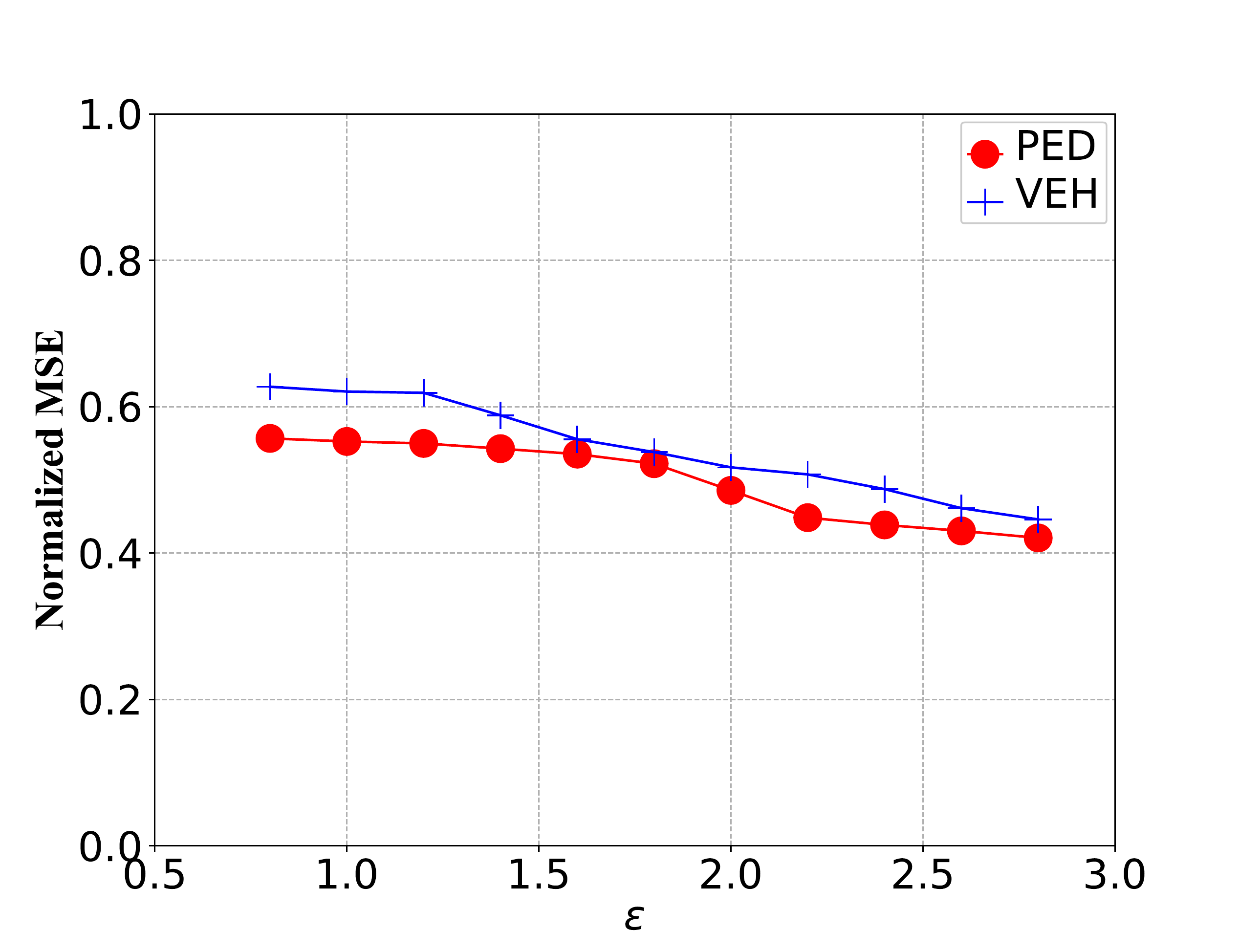}
		\label{fig:mseebefore} }
		\hspace{-0.2in}
	\subfigure[MSE vs $\epsilon$ (after Phase II)]{
		\includegraphics[angle=0, width=0.249\linewidth]{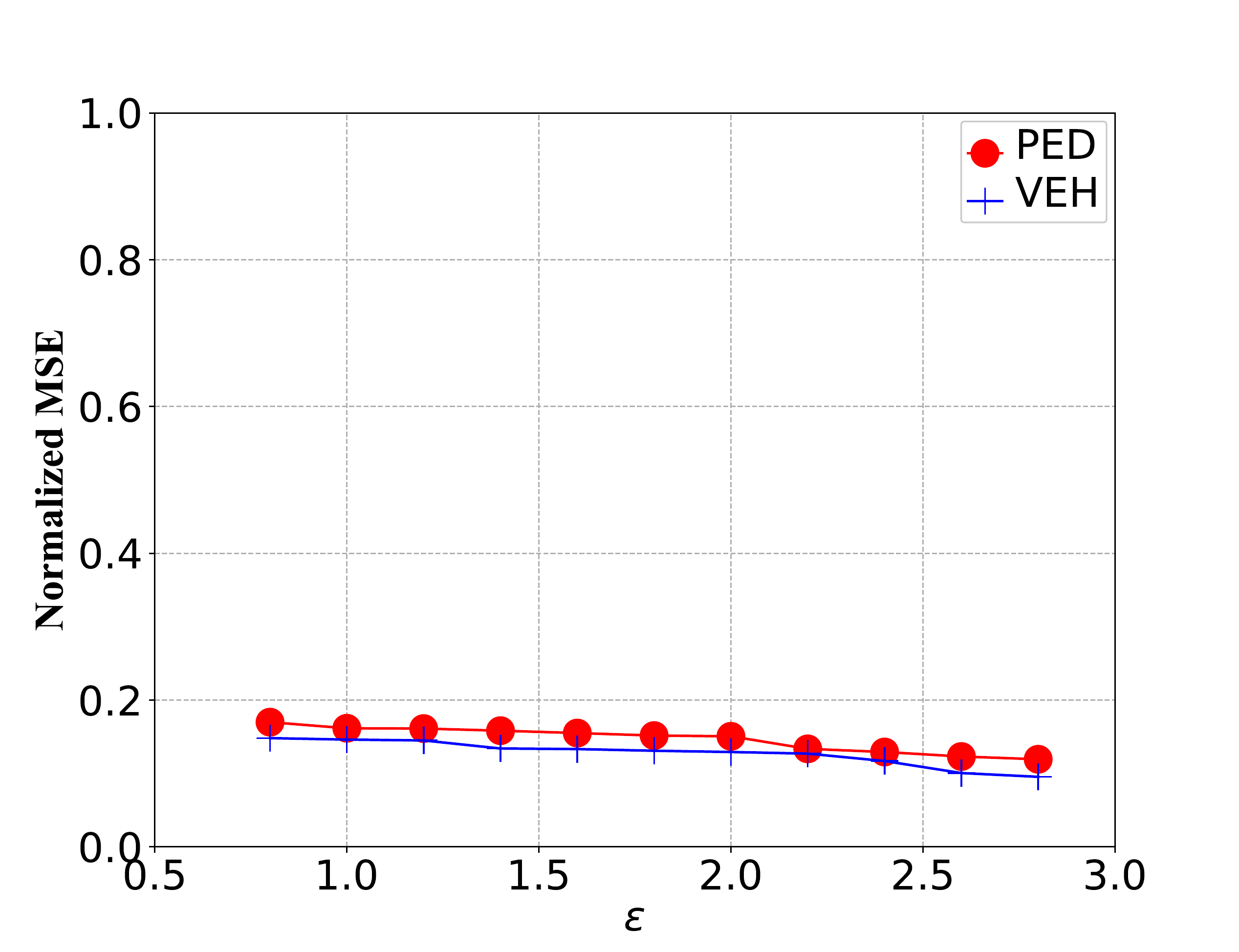}
		\label{fig:mseeafter}}
	
	\vspace{-0.05in}
	\caption[Optional caption for list of figures]
	{Pixel Level Utility Evaluation }%\vspace{-0.2in}
	\label{fig:KL}
\end{figure*}

In \emph{VideoDP}, we implement the detection of visual elements (VEs) in the entire video using the tracking algorithm \cite{tracking1,tracking2}, which first detects all the VEs in each frame, and then utilizes the tensorflow training database to tag humans/objects which are considered as sensitive VEs in different scenarios/videos. 
Each detected human/object can be tracked with the same ID if their overlap in multiple frames has exceeded a threshold (\emph{such algorithm has a high detection/tracking accuracy \cite{tracking1}}). We conduct our experiments on two real high-resolution videos in which different VEs are protected (characteristics are presented in Table \ref{table:data}).

\begin{table}[!h]
	\centering
	\caption{Characteristics of Experimental Videos}	
	\vspace{-0.1in}
		\begin{tabular}{|c|c|c|c|c|}	
			\hline
		Video & Resolution & Frame \# & RGBs  & VEs\\
		\hline\hline
			PED & $1920\times1080$ & 1,050  &546,430 &83  \\\hline
			VEH & $1280\times720$ &  540 &290,172 &115\\\hline

		\end{tabular}
		\vspace{-0.15in}
	\label{table:data}
\end{table}

\begin{enumerate}
    \item MOT16-04 (pedestrian street) \cite{MOT16}: 83 pedestrians are considered as sensitive visual elements in 1,050 frames (546,430 distinct RGBs). For simplicity of notations, we denote this video as ``PED''. 
    
    \item Vehicle Detection Video \cite{car}: 115 vehicles in the video are considered as sensitive visual elements in 540 frames (290,172 distinct RGBs). For simplicity of notations, we denote this video as ``VEH''.

\end{enumerate}

All the programs were implemented in Python 3.6.4 with OpenCV 3.4.0 library \cite{opencv} and tested on an HP PC with Intel Core i7-7700 CPU 3.60GHz and 32G RAM. 
\subsection{Evaluating Utility-driven Private Video}
\label{sec:utility}

We first evaluate the utility of the videos randomly generated by Phase I and II which directly reflects the universal utility for different video analyses. We consider the RGB color model \cite{Acharya:2005:IPP:1088917} by breaking down the videos into pixels with RGBs at different coordinates and frames, and then measure the differences between input $V$ and output $O$. Specifically, we evaluate two types of utility: (1) the difference between the count distributions of all the RGBs in $V$ and $O$, and (2) the difference between RGB values of all the pixels in $V$ and $O$.

First, considering the distributions of all the RGBs' counts $\forall x_i$ and $\forall c_i$ in the output and input, we can measure the utility loss using their KL-divergence for the following reason.

\begin{itemize}
    \item If the distribution of RGBs lie closes in the input and output videos, the performance of pixel interpolation (estimating RGBs for unsampled pixels based on the RGBs of sampled pixels) can be greatly improved \cite{interpolation}. For other measures, e.g., $L_1$ norm, the output counts of different RGBs might be biased towards certain RGBs with high counts such that the interpolated RGBs might be significantly deviated.
\end{itemize}

Thus, we adapt the KL-divergence based utility loss function in our \emph{VideoDP} as:

\begin{equation}
\small
D_{KL}=\sum_{i=1}^m
\big{[}\frac{c_i}{|V|}\cdot\log(\frac{c_{i}}{|V|}\cdot\frac{|O|}{x_i+1})\big{]}
\label{eq:wkl}
\end{equation}

where $m$ denotes the number of distinct RGBs in the input $V$, and $|V|$ and $|O|$ denote the total pixel counts in the input $V$ and output $O$. Moreover, $c_i$ denotes the pixel count with RGB $\theta_i$ in $V$ whereas $x_i$ represents such count in $O$. Since $x_i$ may equal 0, we use $(x_i+1)$ to replace $x_i$ (for avoiding zero-denominator) in which $(x_i+1)$ lies very close to $x_i$ in the context of videos. Then, we use the above function to evaluate the utility loss in \emph{VideoDP} (\emph{the privacy budget is also allocated by following the distribution of RGB counts}).

Second, after interpolating all the pixels in the Phase II of \emph{VideoDP}, we measure the difference between all the pixels' RGB values in $V$ and $O$ using the expectation of mean squared error (MSE):

\begin{equation}
\small
E(MSE)=\frac{1}{|V|}\sum_{\forall (a,b,t)}\big(E[\theta(a,b,t)]-E[\hat{\theta}(a,b,t)]\big)^2
\end{equation}

where $\theta(a,b,t)$ and $\hat{\theta}(a,b,t)$ represent the RGB for pixel with coordinates $(a,b)$ and frame $t$ in $V$ and $O$, and $E(\cdot)$ denotes the expectation. The 3-dimensional RGBs are generally converted to gray for measuring the MSE \cite{MSE}, which can also be normalized to values in $[0,1]$.

Specifically, we conducted experiments to test how privacy budget $\epsilon$ influences the utility of the output videos. We fix $k$ (not optimal) for all the visual elements in two videos, and vary the privacy budget $\epsilon$ in the range from $0.8$ to $2.8$. Figure \ref{fig:kle1} and \ref{fig:kle2} present the KL-divergence values (where for all visual elements $k_j, j\in[1,n]$ is fixed as $6$ and $10$, respectively). Since KL values vary on the count distribution of sampled RGBs, we can observe that the KL values have different trends for different videos (PED and VEH) regardless of how $\epsilon$ varies. For instance, in video PED, the KL value first increases and then decreases as $\epsilon$ grows while it monotonically increases and then converges as $\epsilon$ increases in video VEH (since different videos may have very different RGB histograms).

In addition, we also evaluated the normalized MSE of the output videos (after Phase I, and after Phase II). Figure \ref{fig:mseebefore} and \ref{fig:mseeafter} show that the MSE (of the entire video) declines as $\epsilon$ increases. This matches the fact that larger $\epsilon$ trades off more privacy for better utility. Also, the normalized MSE has been greatly reduced after Phase II -- comparing the results in Figure \ref{fig:mseebefore} and \ref{fig:mseeafter}, which greatly improves the accuracy of the queries for video analyses. We also test how the parameter $k_j$ affects the output utility, and the optimal $k_j$ (based on minimum MSEs) is shown in Appendix \ref{sec:additional}.

\subsection{Evaluating Video Queries/Analysis}

We now evaluate the utility of private queries for video analyses. It is worth noting that the utility-driven private video randomly generated in \emph{VideoDP} can function any analysis, such as head counting, crowd density estimation and traffic flow analysis \cite{car, crowd-density, MOT16} in our experimental videos. We examine some representative queries for video analysis and compared the results with PINQ-based \cite{McSherry09} video query/analysis.

\begin{figure}[!h]
	\centering
	\subfigure[Precision vs $\epsilon$]{
		\includegraphics[angle=0, width=0.49\linewidth]{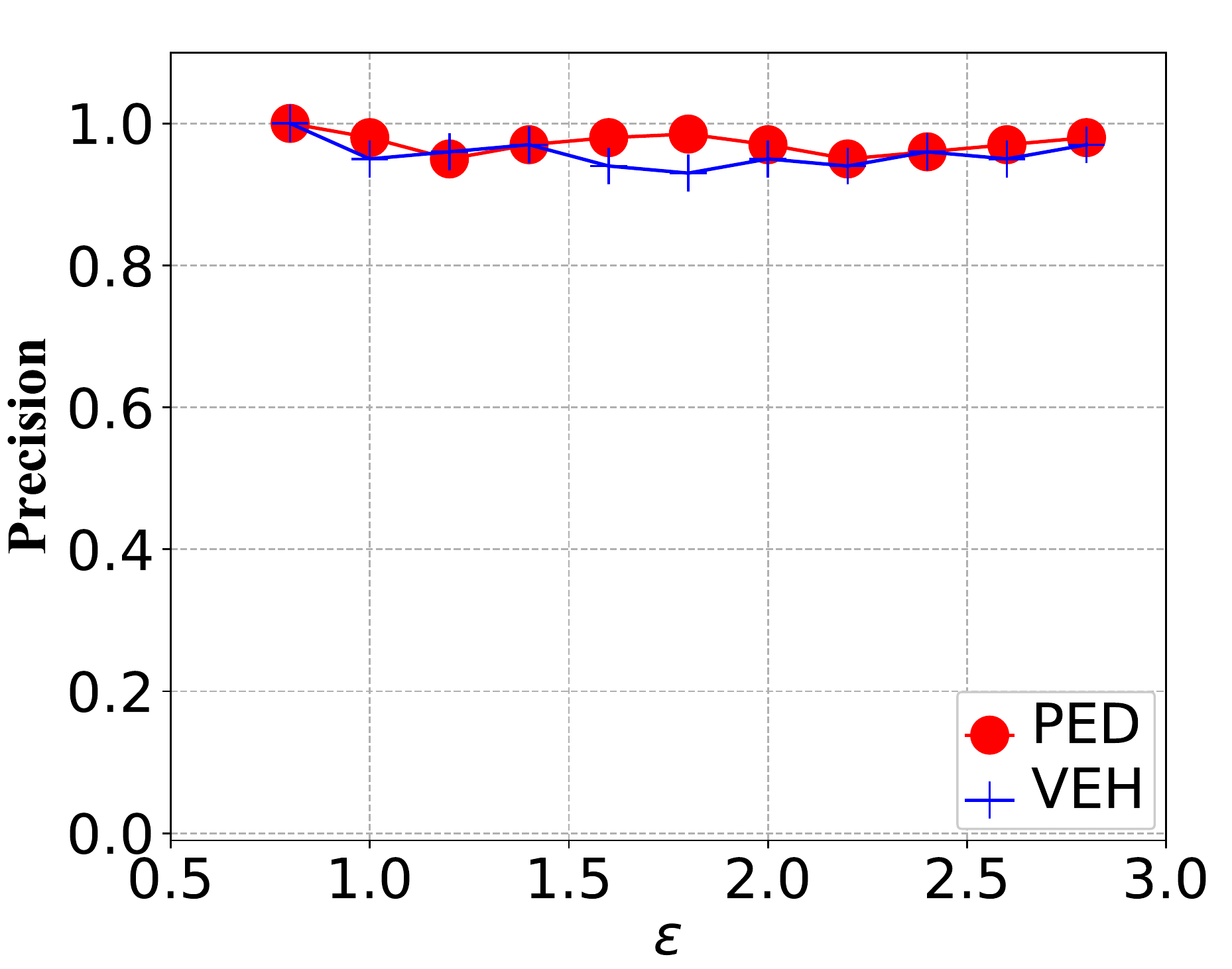}
		\label{fig:pre}}
	\hspace{-0.2in}
	\subfigure[Recall vs $\epsilon$]{
		\includegraphics[angle=0, width=0.49\linewidth]{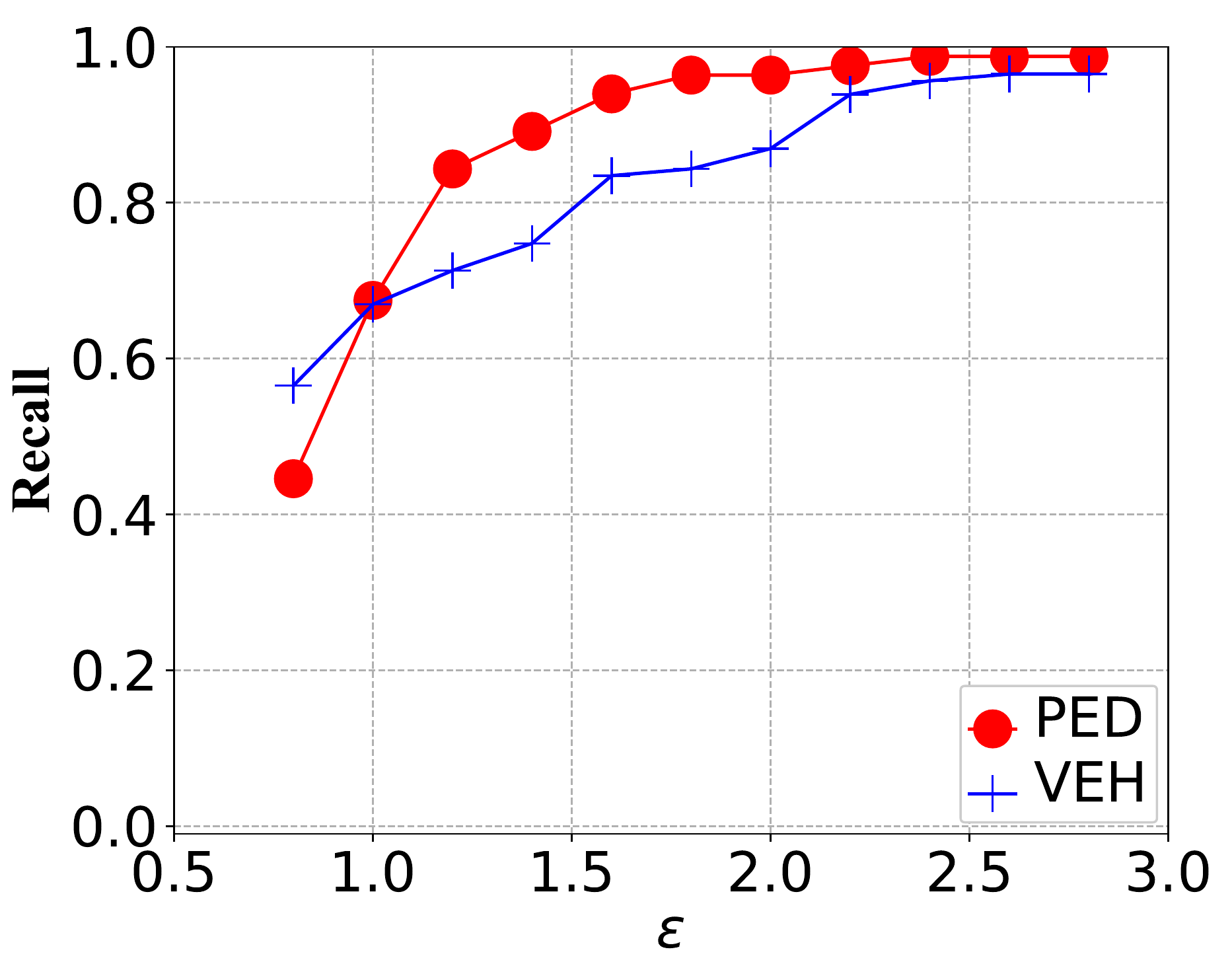}
		\label{fig:rec} }
	\caption[Optional caption for list of figures]
	{Visual Elements Detection and Tracking}
	\vspace{-0.1in}
	\label{fig:rp}
\end{figure}
\begin{figure*}[!tbh]
	\centering
	\subfigure[PED (Original)]{
		\includegraphics[angle=0, width=0.33\linewidth]{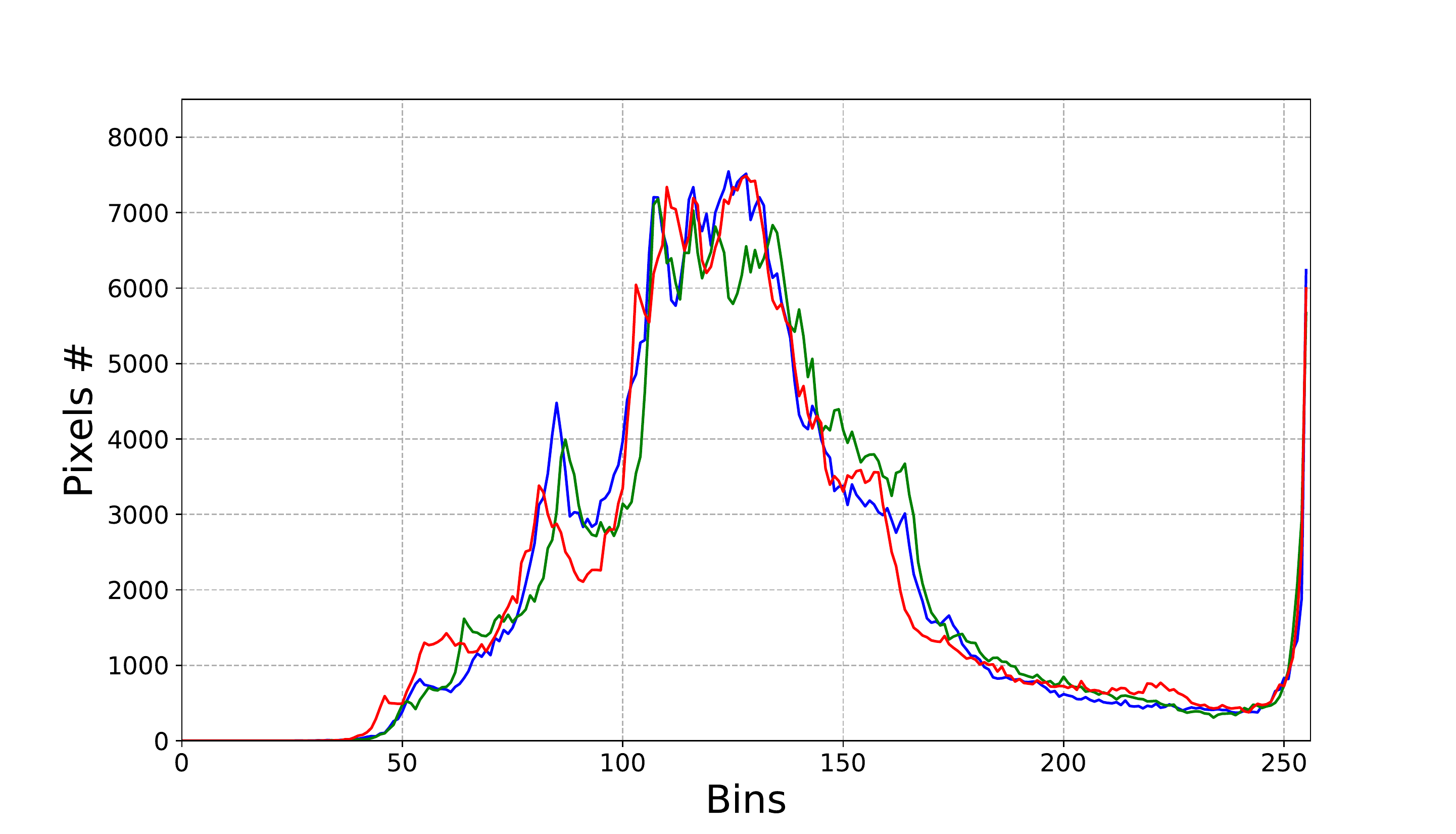}
		\label{fig:PED_RGB_Or}}
	\hspace{-0.2in}
	\subfigure[PED PINQ ($\epsilon=0.8$)]{
		\includegraphics[angle=0, width=0.33\linewidth]{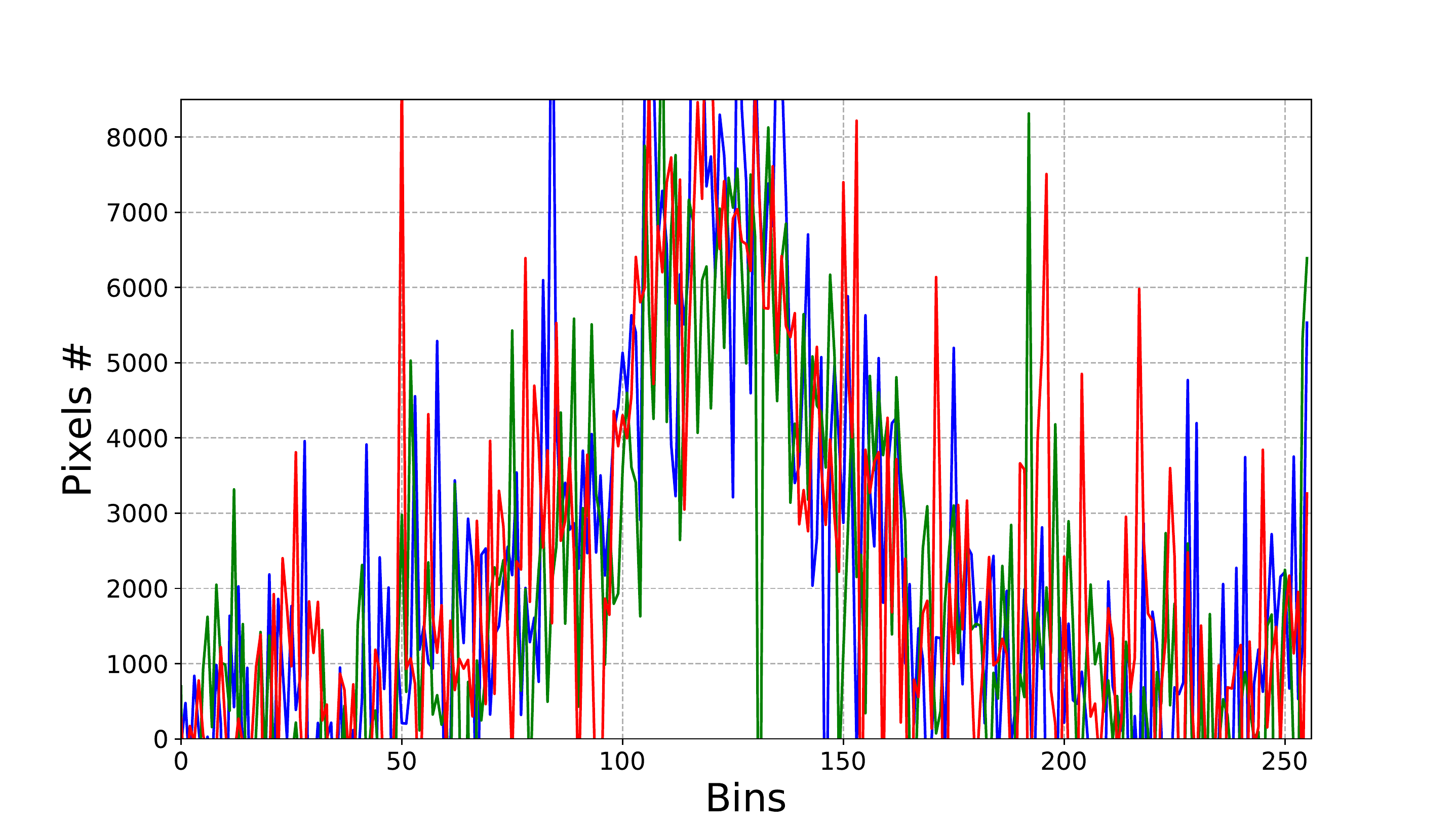}
		\label{fig:PED_RGB_PINQ} }
	\hspace{-0.2in}
	\subfigure[PED \emph{VideoDP} ($\epsilon=0.8$)]{
		\includegraphics[angle=0, width=0.33\linewidth]{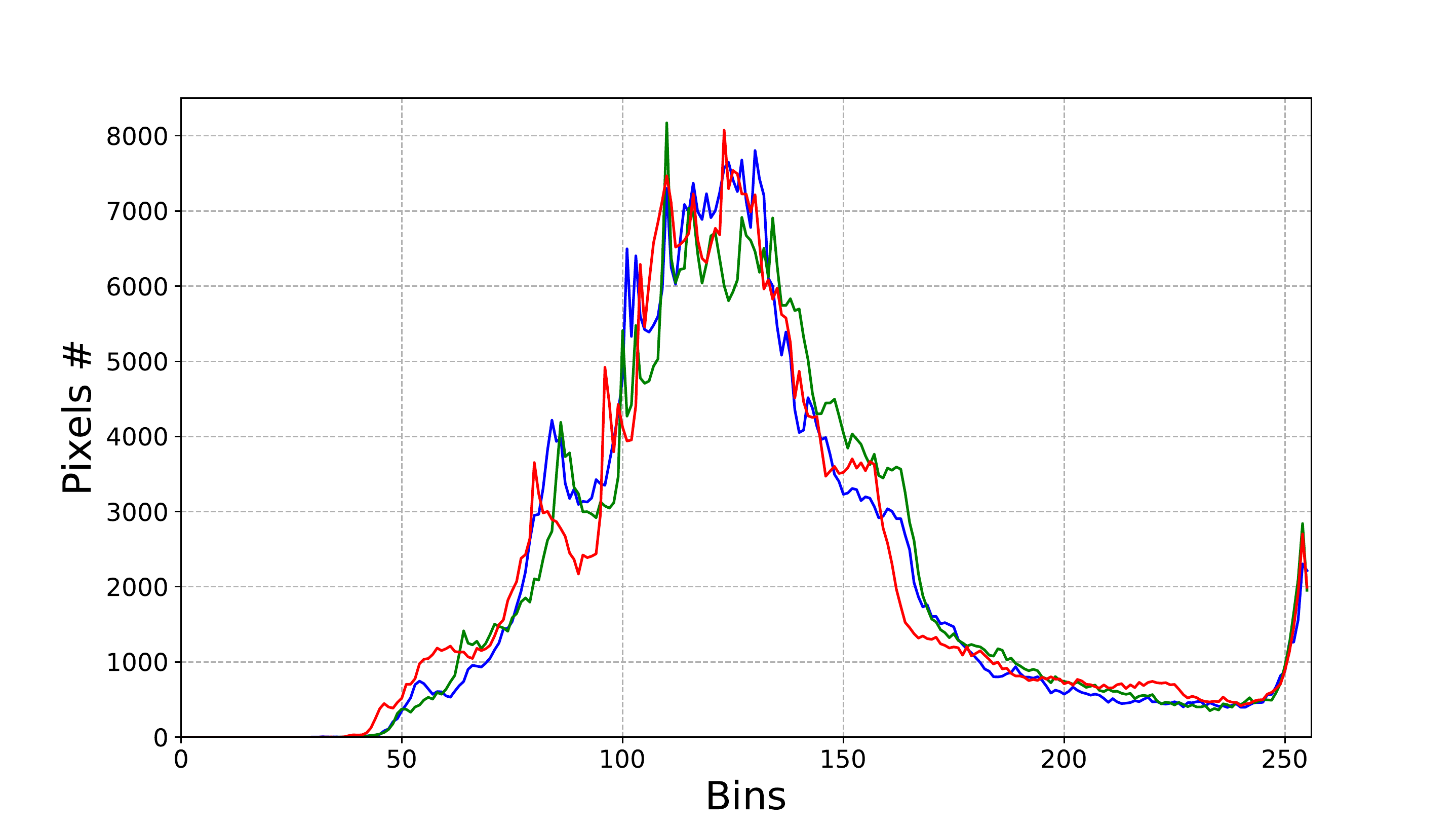}
		\label{fig:PED_RGB_N08} }
		
	\subfigure[VEH (Original)]{
		\includegraphics[angle=0, width=0.33\linewidth]{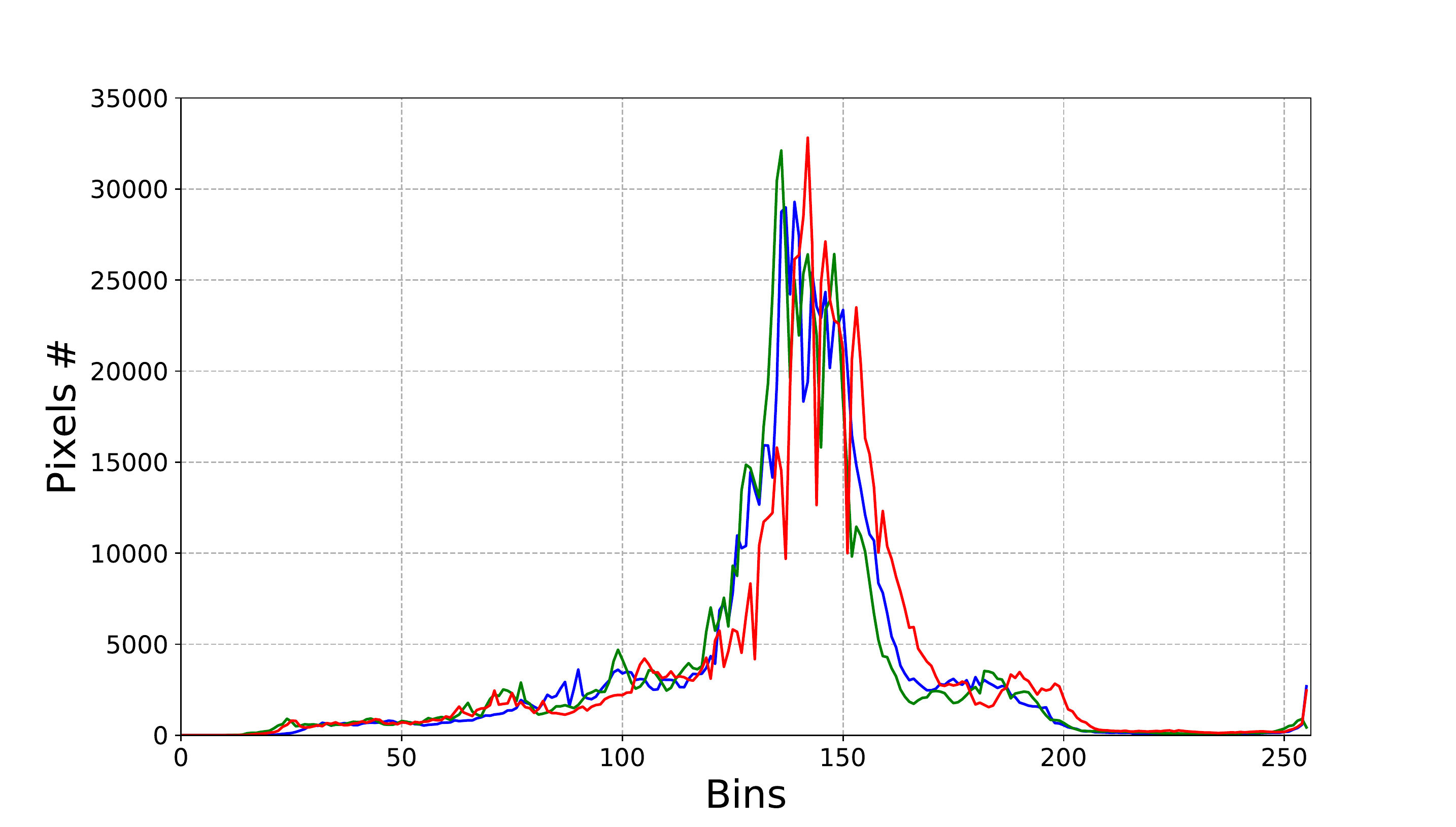}
		\label{fig:VEH_RGB_Or}}
	\hspace{-0.2in}
	\subfigure[VEH PINQ ($\epsilon=0.8$)]{
		\includegraphics[angle=0, width=0.33\linewidth]{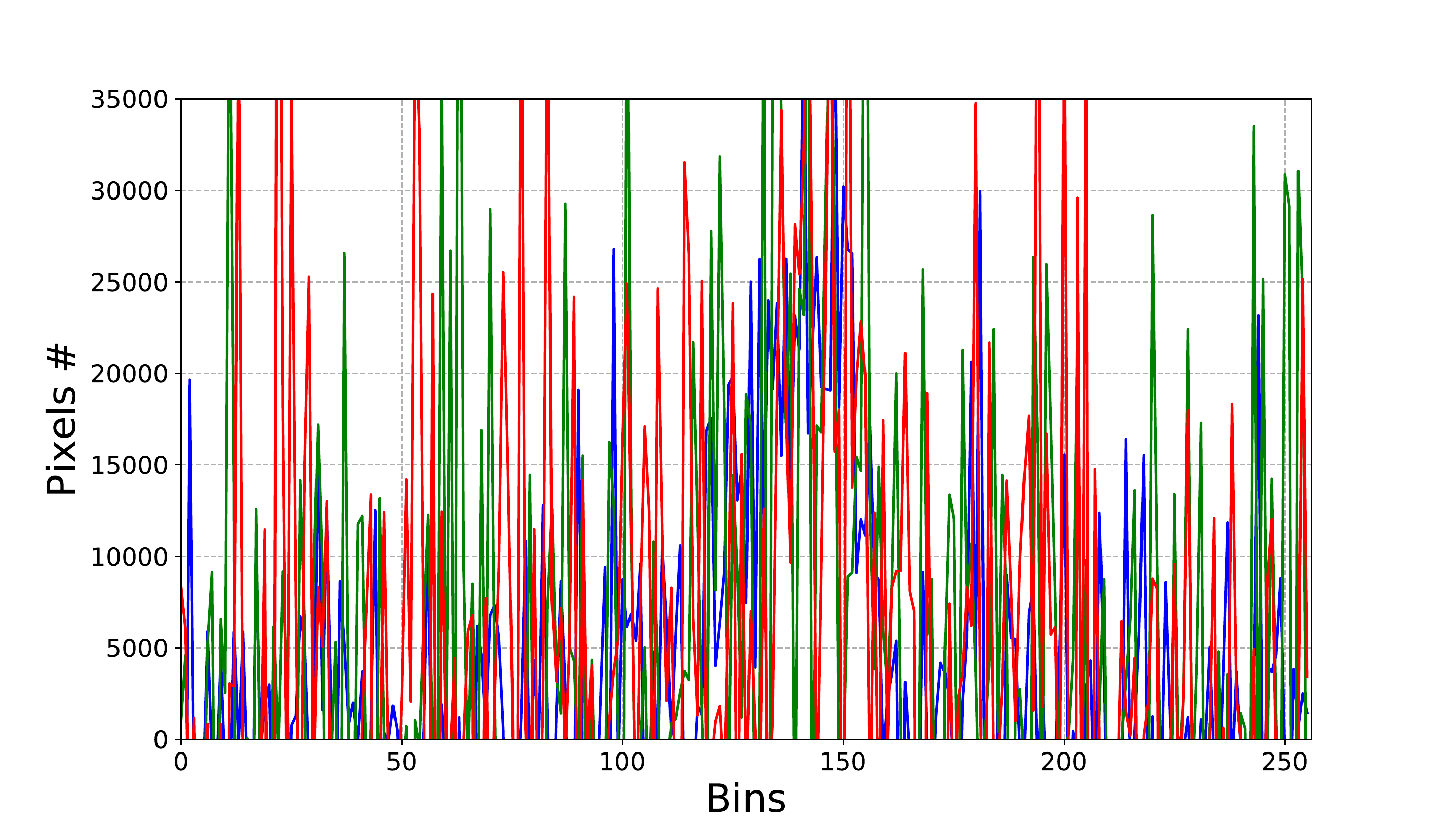}
		\label{fig:VEH_RGB_PINQ} }
	\hspace{-0.2in}
	\subfigure[VEH \emph{VideoDP} ($\epsilon=0.8$)]{
		\includegraphics[angle=0, width=0.33\linewidth]{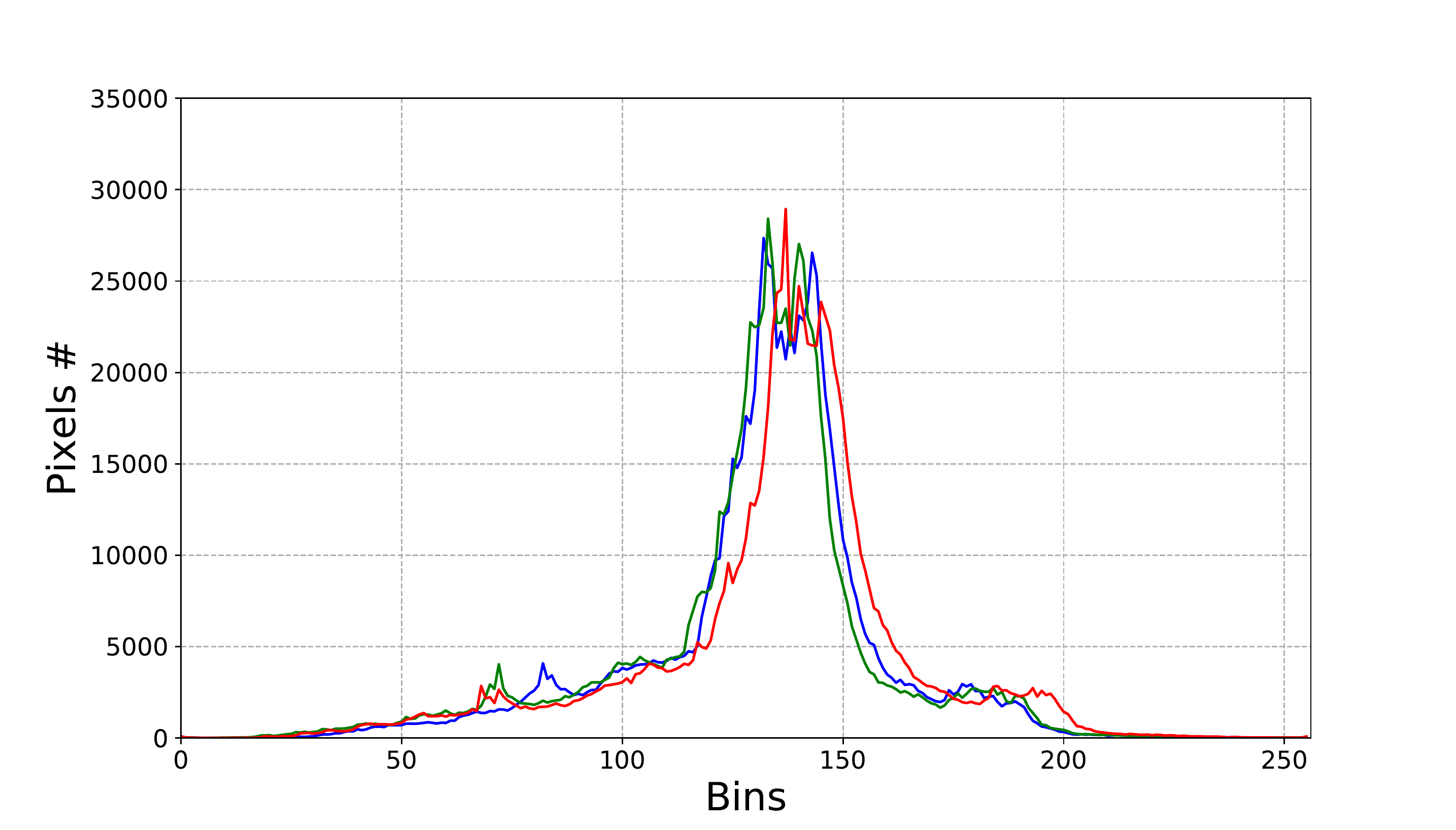}
		\label{fig:VEH_RGB_N08} }
	\caption[Optional caption for list of figures]
	{RGB Histograms in PED and VEH}
	\vspace{-0.1in}
	\label{fig:RGB}
\end{figure*}

\subsubsection{Pedestrian/Vehicle Detection}
In our first set of experiments, we have evaluated the detection/tracking accuracy. Considering the detection/tracking results applied to two original videos as the benchmark, we test the precision and recall of detecting/tracking humans and objects in the outputs. Specifically, \emph{precision} returns the percent of true pedestrians/vehicles out of all the detected/tracked results in the videos. \emph{Recall} returns the percent of detected/tracked true pedestrians/vehicles out of all the true pedestrians/vehicles (the benchmarking results obtained from the original videos). Figure \ref{fig:rp} demonstrates the precision and recall on varying privacy budget $\epsilon$. The precision can always be maintained with a high accuracy (close to $1$). The recall grows quickly as $\epsilon$ increases (since larger $\epsilon$ can generate more accurate random private videos).

\subsubsection{RGB Histograms}
We next compare \emph{VideoDP} and PINQ \cite{McSherry09} using representative video queries/analysis at the pixel and visual elements levels. Pixel level analysis plays an important role in different applications (e.g., frame classification)\cite{imghistogram}. If queries are applied to pixels in a frame or video, the sensitivity might be extremely large using PINQ \cite{McSherry09} since adding or removing a visual element may lead to thousands of changed pixels (w.r.t. the visual element). For instance, an important pixel level query requests ``the RGB histogram of some frames in a video'' \cite{RGB_His}, the sensitivity should be the \emph{maximum number of pixels} involved in different visual elements in the frames. Figure \ref{fig:RGB} demonstrates the RGB histograms for a representative frame in video PED and VEH.
We set the privacy budget $\epsilon$ as $0.8$, and the sensitivity in PINQ for video PED and VEH would be greater than 1000. Then, the PINQ results are more fluctuated and significantly deviated as shown in Figure \ref{fig:PED_RGB_PINQ} and \ref{fig:VEH_RGB_PINQ}. We can also observe that results of utility-driven private video lie closer to the original results even if privacy budget $\epsilon$ is small (see Figure \ref{fig:PED_RGB_N08} and \ref{fig:VEH_RGB_N08}). 

\subsubsection{Pedestrian/Vehicle Stay Time}

\begin{figure*}[!tbh]
	\centering
	\subfigure[Pedestrian Stay Time (PINQ)]{
		\includegraphics[angle=0, width=0.5\linewidth]{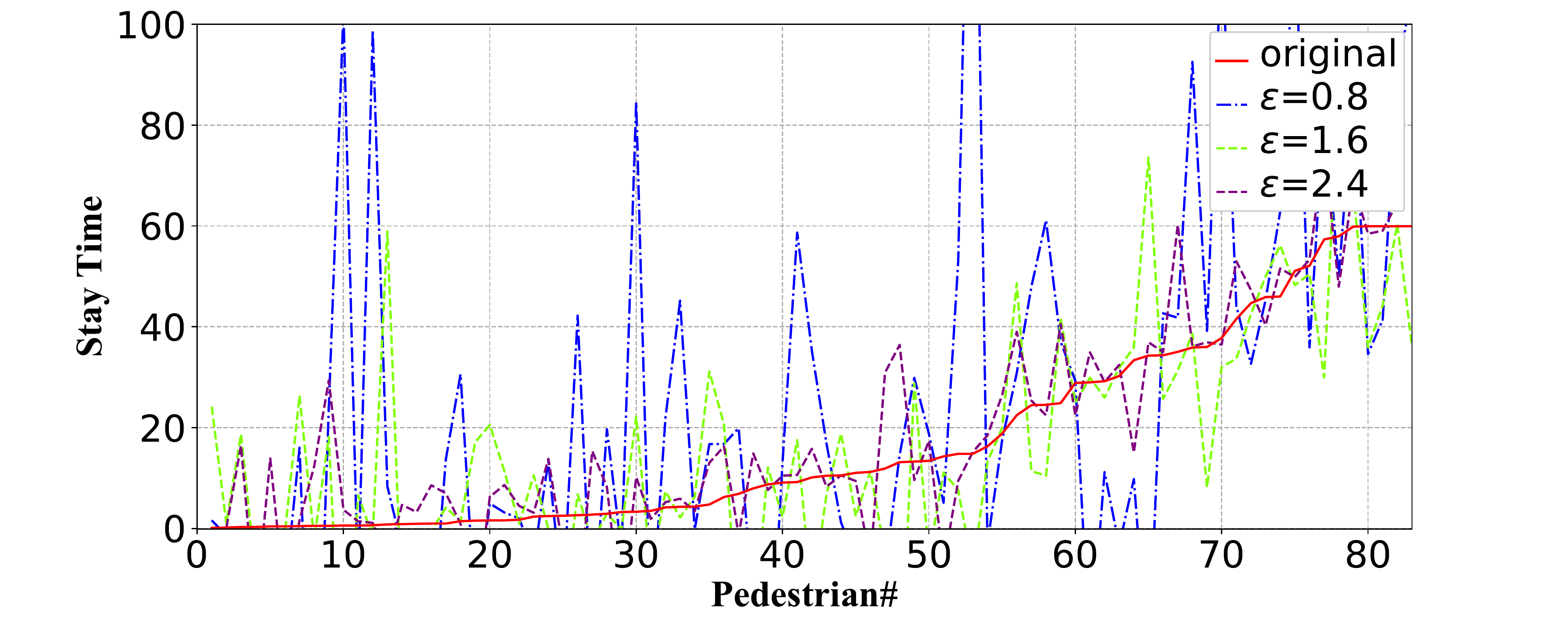}
		\label{fig:PED_FRAME_PINQ} }
		\hspace{-0.3in}
	\subfigure[Pedestrian Stay Time (\emph{VideoDP})]{
		\includegraphics[angle=0, width=0.5\linewidth]{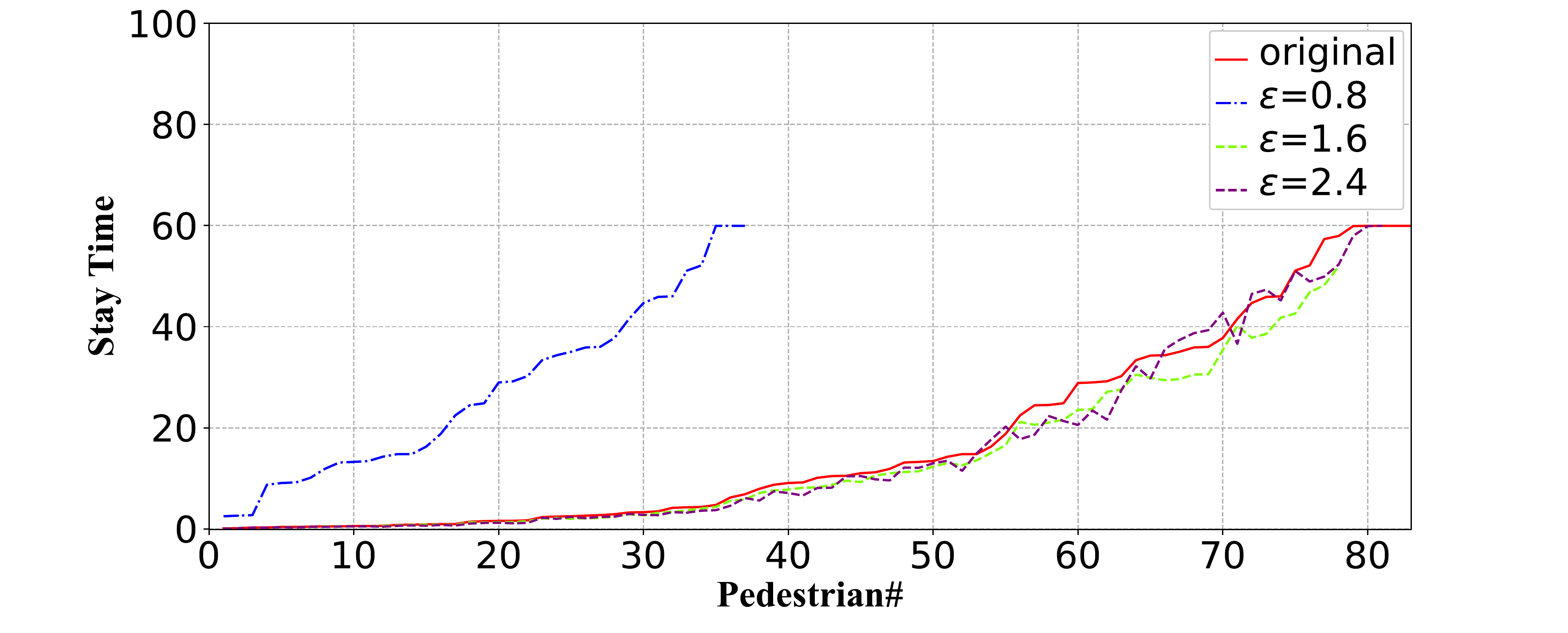}
		\label{fig:PED_FRAME_VDP}}
	\vspace{-0.05in}
	\caption[Optional caption for list of figures]
	{Pedestrian Stay Time}
	\label{fig:PED_FRAME}
\end{figure*}
\begin{figure*}[!tbh]
	\centering
	\subfigure[Moving Upstream (PINQ)]{
		\includegraphics[angle=0, width=0.5\linewidth]{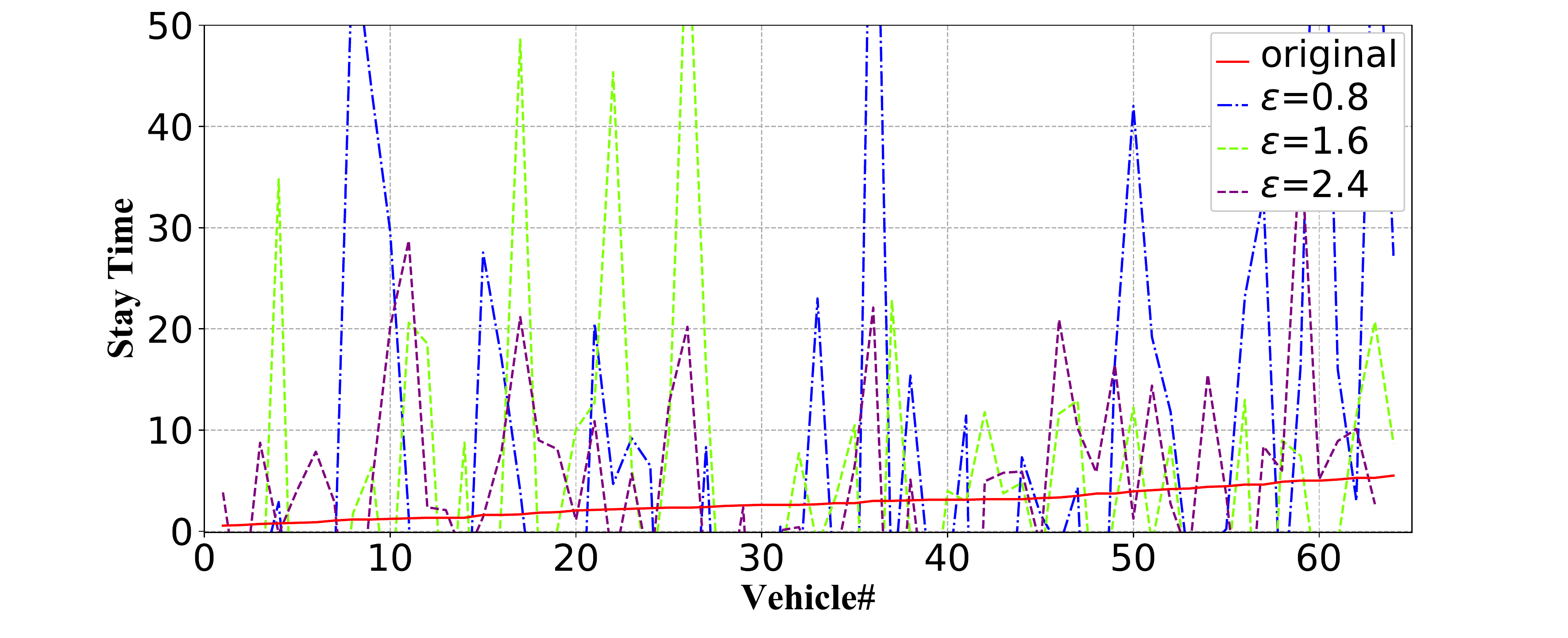}
		\label{fig:VEH_UP_PINQ} }
		\hspace{-0.3in}
	\subfigure[Moving Upstream (\emph{VideoDP})]{
		\includegraphics[angle=0, width=0.5\linewidth]{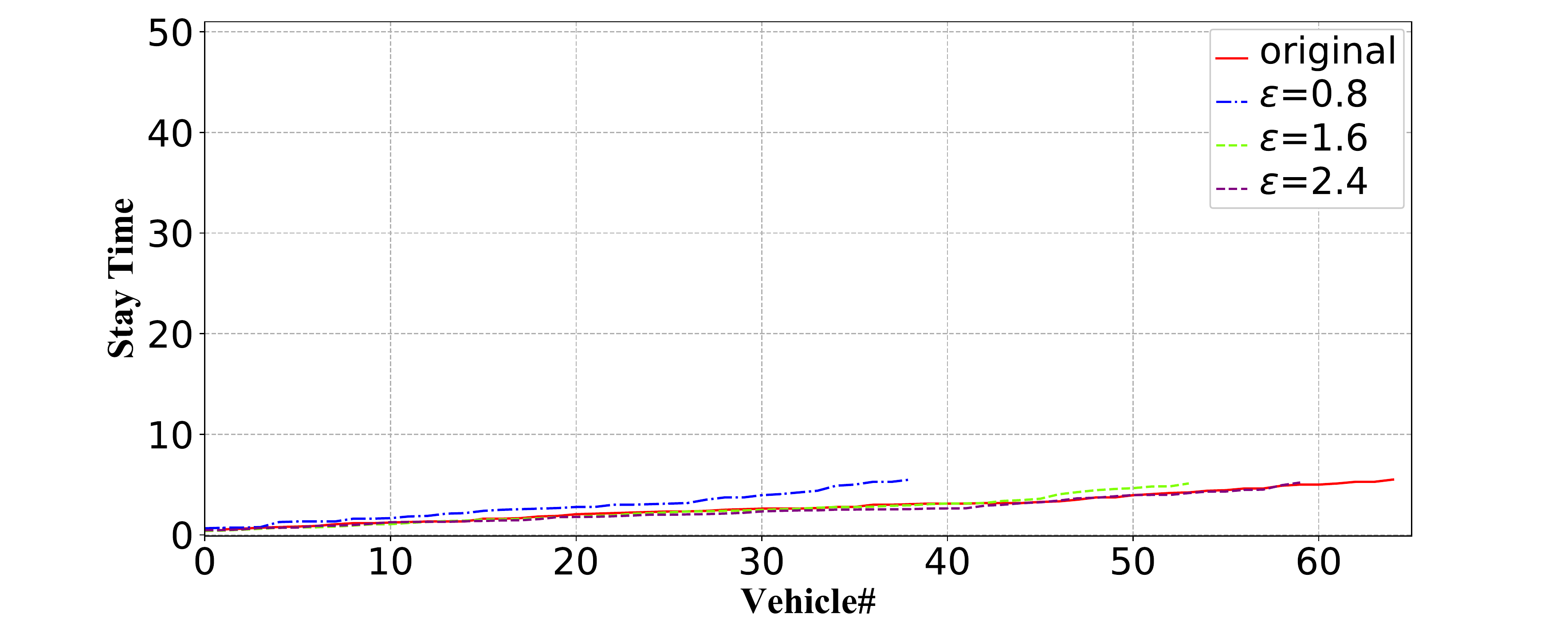}
		\label{fig:VEH_UP_VDP}}
	\subfigure[Moving Downstream (PINQ)]{
		\includegraphics[angle=0, width=0.5\linewidth]{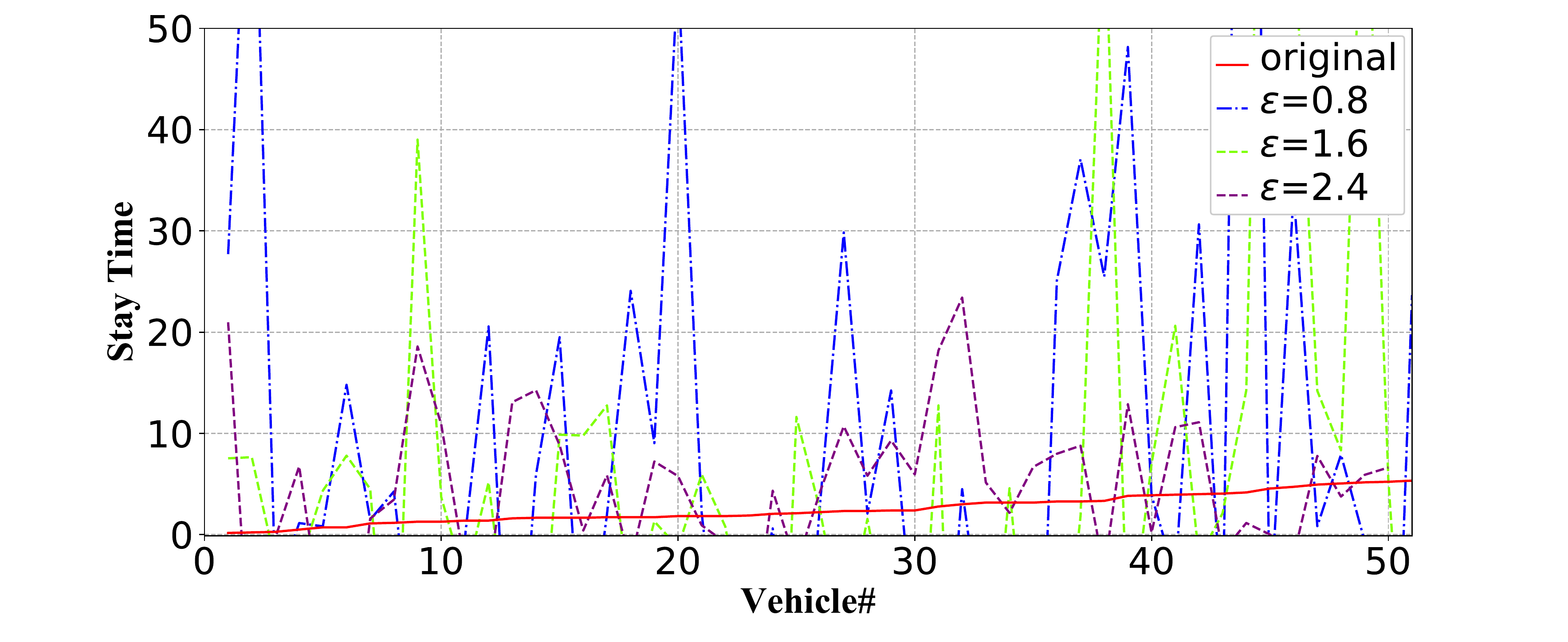}
		\label{fig:VEH_DOWN_PINQ} }
		\hspace{-0.3in}
	\subfigure[Moving Downstream (\emph{VideoDP})]{
		\includegraphics[angle=0, width=0.5\linewidth]{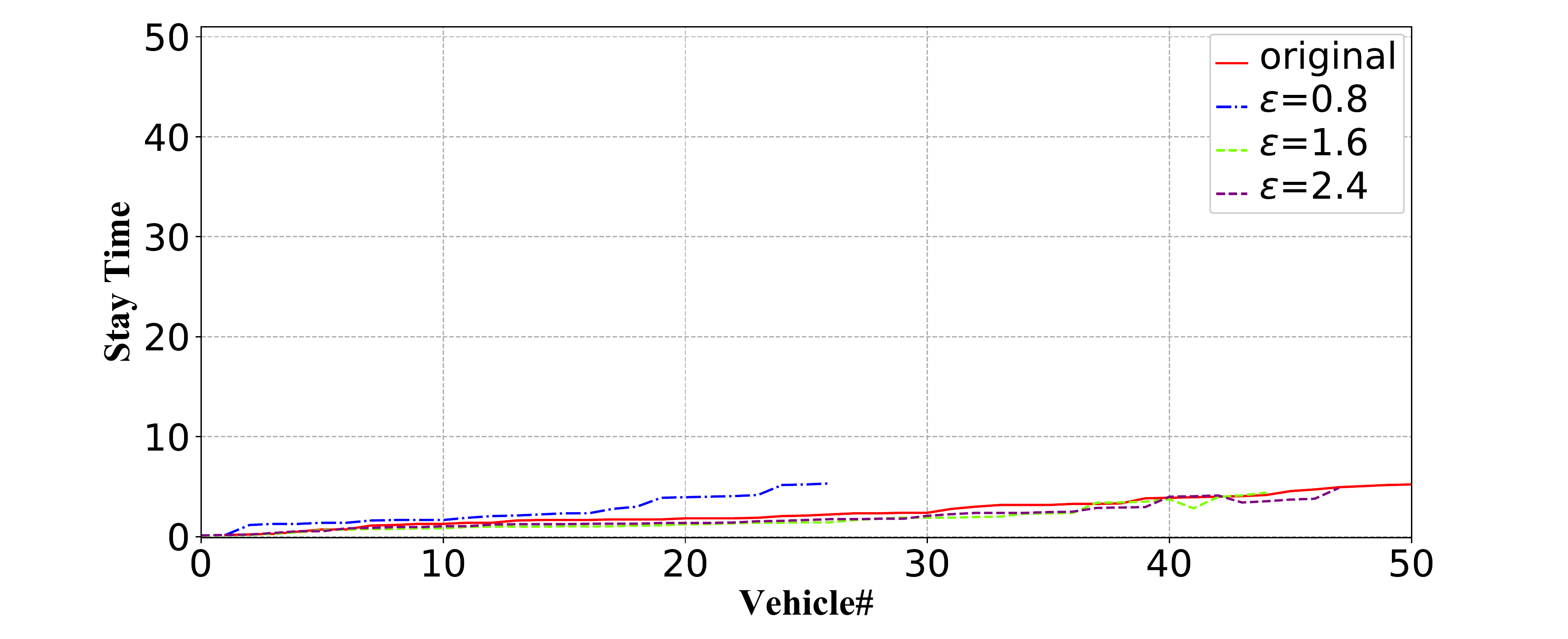}
		\label{fig:VEH_DOWN_VDP}}
	\caption[Optional caption for list of figures]
	{Vehicle Stay Time}\vspace{-0.1in}
	\label{fig:VEH_FRAME}
\end{figure*}

Besides the pixel level queries, \emph{VideoDP} can also privately return query results for detected visual elements in different applications. For instance, a query returns ``how long each pedestrian/vehicle stays in the video'' (denoted as stay time or number of stay frames). Then, pedestrians/vehicles are detected/tracked in all the frames, and then query results can be computed/aggregated and returned for private analysis.

\vspace{0.05in}

\noindent\textbf{(1) Pedestrians}. 
In the PED video, 83 pedestrians are walking on the street. How long each pedestrian stays in the video can be utilized to learn the human behavior. Figure \ref{fig:PED_FRAME} presents the original results, PINQ results and \emph{VideoDP} results for the PED video where three different privacy budgets $\epsilon$ (0.8, 1.6 and 2.4) are specified. The 83 pedestrians in the PED video (marked on the x axis), and the stay time is ranked from short to long (see the red curve in two subfigures). In PINQ (Figure \ref{fig:PED_FRAME_PINQ}), the stay times of all the pedestrians are overly obfuscated even if $\epsilon$ is large since sensitivity $\Delta$ should be set as 60 (for even longer videos, $\Delta$ should be larger). 
Nevertheless, \emph{VideoDP} significantly outperforms PINQ for such query/analysis.
As shown in Figure \ref{fig:PED_FRAME_VDP},  in case of $\epsilon=0.8$ (small privacy budget), approximately $40$ distinct pedestrians are detected in the utility driven video. Although not all the pedestrians are sampled in \emph{VideoDP}, the distribution of all the stay times (of all the sampled pedestrians) still lies close to the original result. As $\epsilon$ increases to 1.6, the query results obtained from \emph{VideoDP} converges to the original results (however, PINQ results are still fluctuated).

\vspace{0.05in}

\noindent\textbf{(2) Vehicle}. In the VEH video, there are 115 distinct vehicles driving on the highway. We define the two-way moving directions as ``upstream'' and ``downstream'', respectively. Figure \ref{fig:VEH_FRAME} demonstrates the length of time the vehicles stay in the video (upstream and downstream), which can be utilized to estimate the moving speed of vehicles, queue length estimation, etc. We can draw similar observations for the stay times of vehicles for both moving downstream direction and upstream direction in the VEH video as the pedestrians in the PED video. \emph{VideoDP} significantly outperforms PINQ which requests a high sensitivity in such query. 
\subsubsection{Vehicle Density}
We also conduct experiments to compare \emph{VideoDP} and PINQ on queries which request a smaller sensitivity. For instance, the vehicle density query over the video returns the vehicle count in each frame of the video (sensitivity $\Delta=1$), which can also facilitate the analyst to learn the traffic flow. 

Figure \ref{fig:COUNT_FRAME} demonstrates the count of detected vehicles in each frame, including the original results, PINQ results (Figure \ref{fig:VEH_FRAME_PINQ}) and \emph{VideoDP} results (Figure \ref{fig:VEH_FRAME_VDP}) where three different privacy budgets $\epsilon$ (0.8, 1.6 and 2.4) are also specified. Note that every vehicle only appears in a few frames of the video (this also occurs in longer videos). The noise results are both acceptable in PINQ ($\Delta=1$) and \emph{VideoDP}. However, the counts of vehicles are more fluctuated in PINQ as $\epsilon$ is small. 
\subsubsection{Complex Analysis}

Since any video analysis algorithm can be broken down into queries, complex analysis such as RGB histogram analysis \cite{RGB_His} and deep learning \cite{emotion_recognition} can also be directly applied to the (random) utility-driven private video while ensuring $\epsilon$-DP for \emph{VideoDP}. At this time, PINQ-based differentially private scheme should be redesigned for each analysis with appropriate privacy budget allocation and composition analysis. Indeed, our \emph{VideoDP} can address such limitation of PINQ with better \emph{flexibility}. Due to space limit, we leave the experimental studies for different complex analyses as our future work.  

\begin{figure*}[!tbh]
	\centering
	\subfigure[Vehicle Count in each Frame (PINQ)]{
		\includegraphics[angle=0, width=0.48\linewidth]{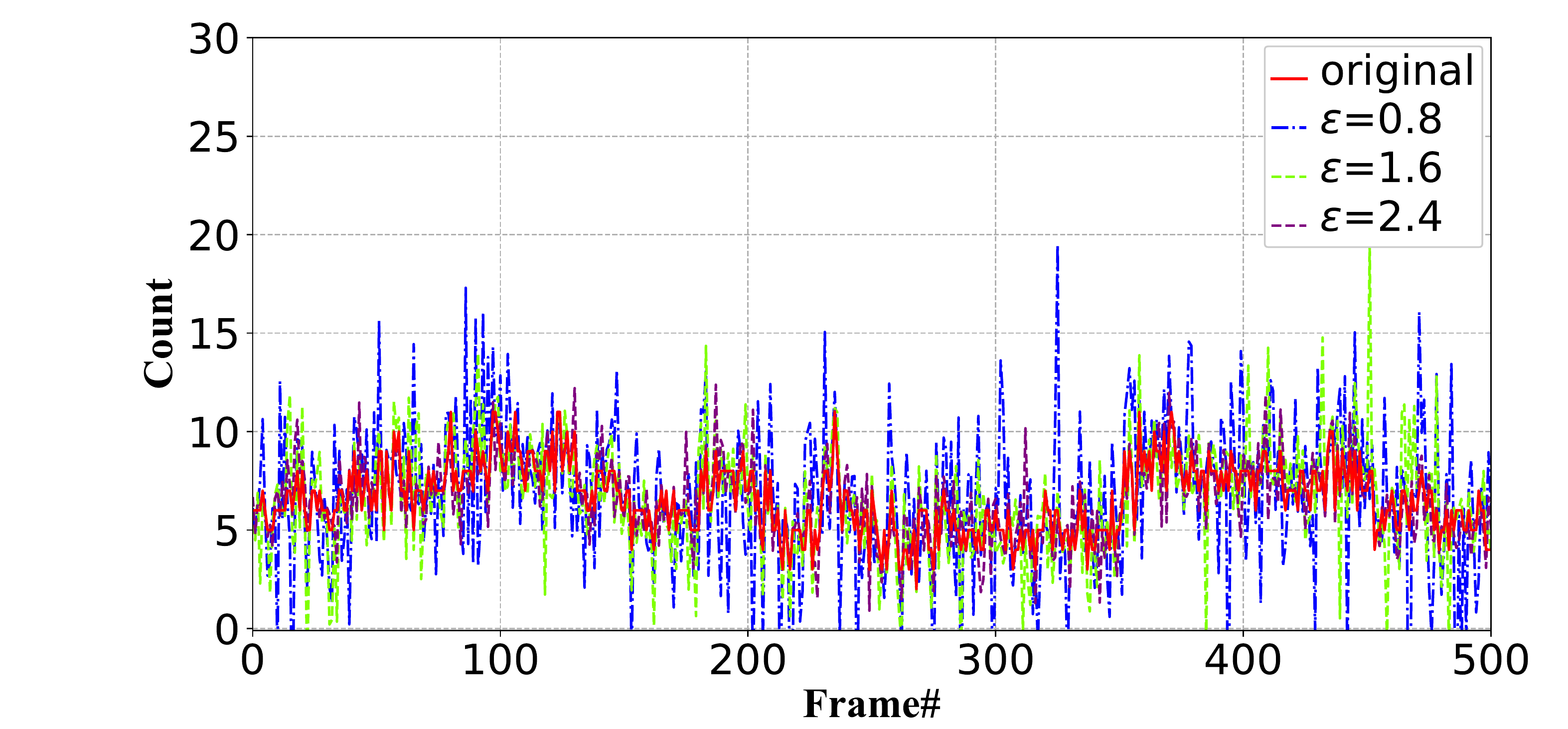}
		\label{fig:VEH_FRAME_PINQ} }
		\hspace{-0.2in}
	\subfigure[Vehicle Count in each Frame (\emph{VideoDP})]{
		\includegraphics[angle=0, width=0.48\linewidth]{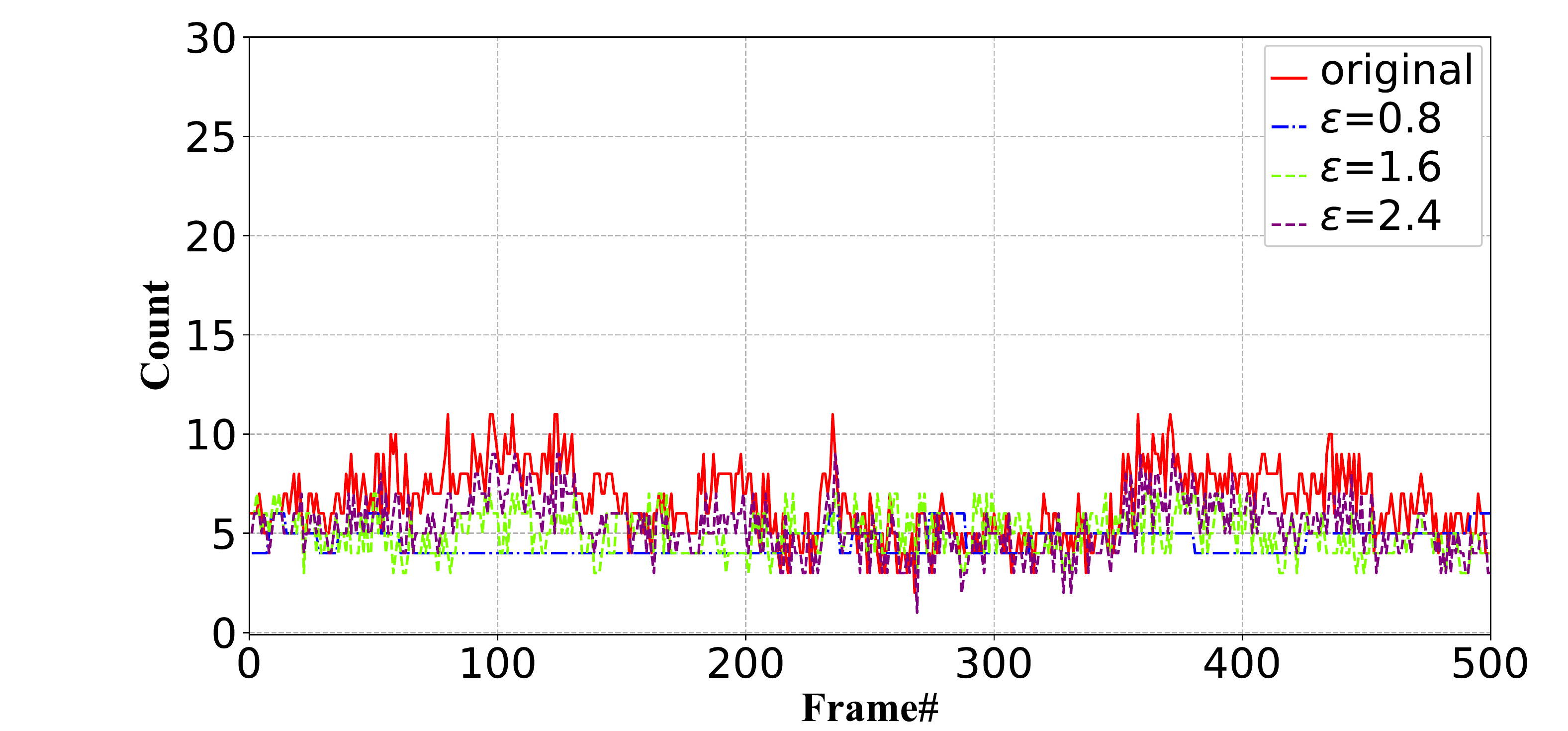}
		\label{fig:VEH_FRAME_VDP}}
	\vspace{-0.05in}
	\caption[Optional caption for list of figures]
	{Vehicle Count in Each Frame}
	\label{fig:COUNT_FRAME}
\end{figure*}

\subsection{Computational Performance}
We have also evaluated the computational performance of our \emph{VideoDP} (especailly Phase I and II). To generate a utility-driven private video for a high-resolution video (e.g., video PED includes $1920\times 1080$ pixels in each frame), it takes around 20 minutes to detect and track all the visual elements, sample pixels (including optimizing each $k_j$), and interpolate the video. Similarly, it takes around 12 minutes to generate a synthetic video for video VEH ($1280\times720$ pixels in each frame) by executing all the procedures in \emph{VideoDP}. Such one-time computation is acceptable to process billions of pixels (offline cost), and then multiple queries/analysis can be directly applied to the private video.

For longer videos, we can split the input video into multiple fragments (e.g., 1 minute per fragment). Then, we can still apply \emph{VideoDP} to efficiently generate the utility-driven private videos for all the fragments which are integrated later. In many videos (e.g., traffic monitoring videos), VEs move rapidly and appear in the video for a few seconds. Then, fragmentation, generation and integration would not affect the performance (\emph{if each visual element can be protected in a fragment and do not appear in multiple fragments, it can also be protected in the integrated video}). In case that certain VEs stay in a video for a long time, considering them as different VEs in different fragments to protect may change the privacy protection in the integrated output video. To address this in such VEs, we can split the privacy budget w.r.t. each VE to shares for different segments (sequential composition \cite{McSherry09}). This may lower the probability of retaining such VEs in the output, but it could greatly improve the efficiency and scalability.

Finally, the runtime for deriving the private query/analysis results from the utility-driven private video for analysts (Phase III) in \emph{VideoDP} is quite close to querying/analyzing the original video (which is expected to be more efficient than PINQ).

\section{Related Work}
\label{sec:related}
Dwork~\cite{Dwork06differentialprivacy} first proposed the notion of differential privacy that provides rigorous privacy protection regardless of background knowledge. In literature, such rigorous privacy notion has been extended to sanitize and release data in statistical databases \cite{Dwork06}, histograms \cite{Cormode12}, location data \cite{ninghui13}, search logs \cite{KorolovaKMN09}, and graphs \cite{HayLMJ09}, etc. To our best knowledge, we take the first step to address the deficiency in differentially private video analysis. In particular, sampling based randomization mechanisms have been proposed to generate probabilistic true data from the input datasets while ensuring differential privacy (e.g., microdata \cite{2018safepub, Ning2012sampling}, search logs \cite{Hong2014collaborative}). Our utility-driven private video is randomly generated via pixel sampling, and universally supports any video analysis with differential privacy.

Since \emph{VideoDP} locally perturbs the input video by the video owner (and then flexibly offers queries/analysis to untrusted analysts), 
the emerging local differential privacy (LDP) models \cite{CormodeLDP18,rappor14,histogram1} are also relevant to this work. The state-of-the-art LDP models perturb local data by the owners to generate statistics for histograms/heavy hitters \cite{rappor14,histogram1,ldpusenix17,zhan16}, reconstruct social graphs \cite{zhan17}, and function frequent itemset mining \cite{ldpfrequent18}.

Most of the existing video sanitization techniques use straightforward measures for quantifying the privacy loss in videos. For instance, in \cite{Fidaleo:privacy,Moncrieff:privacy}, if faces are present, then it is considered as complete privacy loss, otherwise no privacy loss. Besides only considering the faces as privacy leakage, recent works also investigated the privacy leakage in the activities, visited places and other implicit channels (e.g., when and where the video is recorded) \cite{SainiMTAP_W3}. Fan \cite{fandbsec18} applied Laplace noise to obfuscate pixels in an image to ensure differential privacy for protecting specific regions of an image. However, the image quality has been significantly reduced. Both the privacy notion and the Laplace noise (high sensitivity) cannot be applied to pixels in videos. 

Moreover, most existing techniques directly adopted computer vision techniques \cite{Fidaleo:privacy,Koshimizu:privacy} to first \emph{detect} faces and/or other sensitive regions in the video frames and then \emph{obscure} them. 
However, such \textit{detect-and-protect} solutions have some limitations. 
For instance, the \textit{detect-and-protect} techniques cannot formally quantify and bound the privacy leakage. Thus, the video owner does not know how much risks any individual can be identified from the video.  
In addition, blurred regions might still be reconstructed by deep learning methods \cite{dl1,dl2}. Our \emph{VideoDP} can address these limitations with strong privacy protection against arbitrary prior knowledge.

\section{Conclusion}
\label{sec:concl}

%Videos are increasingly released in real life but normally involve considerable private information. 
In this paper, to the best of our knowledge, we take the first step to study the problem of video analysis with \emph{differential privacy guarantee}. Specifically, we have proposed a new sampling based differentially private mechanism to randomly generate utility-driven private videos. \emph{VideoDP} has also provided a universal platform for analysts to privately conduct any query/analysis over the (random) utility-driven private video with differential privacy guarantee. We have conducted extensive experiments to validate that the performance of \emph{VideoDP} significantly outperforms PINQ-based video analyses in many different applications. %The experimental results have demonstrated better utility in different analyses.

%\bibliographystyle{ACM-Reference-Format}
%\bibliography{Han_refs}
%%% -*-BibTeX-*-
%%% Do NOT edit. File created by BibTeX with style
%%% ACM-Reference-Format-Journals [18-Jan-2012].

\begin{appendices}

\section{Optimal $k_j$ for VE $\Upsilon_j$}
\label{sec:opt}

\subsection{Equations for Different Pixels}
\label{sec:border}

If pixel $(a,b,t)$ is a non-border pixel, we have Equation \ref{eq:mid} to represent the relation between the RGB expectation of any pixel $(a,b,t)$ and the RGB expectation of its four neighbors (denoted as $\hat{\theta}_N, \hat{\theta}_S, \hat{\theta}_W$ and $\hat{\theta}_E$). We now briefly discuss how to derive such relation. 

First, if pixel $(a,b,t)$ is sampled, then the RGB expectation equals $Pr(a,b,t)*\theta(a,b,t)$ where $Pr(a,b,t)$ is the probability of sampling $(a,b,t)$ and $\theta(a,b,t)$ denotes its RGB in the original video $V$. 

Second, if pixel $(a,b,t)$ is not sampled, then it will be interpolated based on the RGBs of its neighbors. There are five subcases (denoting the probabilities that $(a,b,t)$ has 0, 1, 2, 3 and 4 neighbors before interpolation as $\sigma_0(a,b,t), \sigma_1(a,b,t)$, $\sigma_2(a,b,t)$, $\sigma_3(a,b,t), \sigma_4(a,b,t)$, respectively): 

\begin{enumerate}
\item 0 neighbor: all its neighbors are not sampled in Phase I. Then, the probability share is $\sigma_0(a,b,t)*0$. 

\item 1 neighbor: 3 of its neighbors are not sampled in Phase I. Then, the probability share is:

\scriptsize
\begin{equation}
\small
\sigma_1(a,b,t)*[1-Pr(a,b,t)]*\frac{E(\hat{\theta}_N)+E(\hat{\theta}_S)+E(\hat{\theta}_W)+E(\hat{\theta}_E)}{4}
\end{equation}
\normalsize

where all 4 neighbors can be the one used for interpolation.

\item 2 neighbors: 2 of its neighbors not sampled in Phase I. Then, the probability share is:

\scriptsize
\begin{equation}
\sigma_2(a,b,t)*[1-Pr(a,b,t)]*\frac{3E(\hat{\theta}_N)+3E(\hat{\theta}_S)+3E(\hat{\theta}_W)+3E(\hat{\theta}_E)}{6*2}
\end{equation}
\normalsize

where 6 different combinations of two neighbors can be used for interpolation and the interpolated RGB is the average of two neighbors' RGBs.

\item 3 neighbors: 1 of its neighbors is not sampled in Phase I. Then, the probability share is:

\scriptsize
\begin{equation}
\sigma_3(a,b,t)*[1-Pr(a,b,t)]*\frac{3E(\hat{\theta}_N)+3E(\hat{\theta}_S)+3E(\hat{\theta}_W)+3E(\hat{\theta}_E)}{4*2}
\end{equation}
\normalsize

where 4 different combinations of two neighbors can be used for interpolation and the interpolated RGB is the average of three neighbors' RGBs.

\item 4 neighbors: no neighbor is suppressed in sampling. Then, the probability share is:

\scriptsize
\begin{equation}
\sigma_4(a,b,t)*[1-Pr(a,b,t)]*\frac{E(\hat{\theta}_N)+E(\hat{\theta}_S)+E(\hat{\theta}_W)+E(\hat{\theta}_E)}{4}
\end{equation}
\normalsize

where only 1 combination of 4 neighbors can be used for interpolation and the interpolated RGB is the average of 4 neighbors' RGBs.

\end{enumerate}

Similarly, if pixel $(a,b,t)$ is on the border but not at the corner of the $t$th frame (w.l.o.g., the left border), then we have:

\scriptsize
\begin{align*}
\small
&E[\hat{\theta}(a,b,t)]
=Pr(a,b,t)*\theta(a,b,t)+\sigma_0(a,b,t)*0\nonumber\\
+&\frac{\sigma_1(a,b,t)*[1-Pr(a,b,t)]*[E(\hat{\theta}_N)+E(\hat{\theta}_S)+E(\hat{\theta}_E)]}{3}\nonumber\\
+&\frac{\sigma_2(a,b,t)*[1-Pr(a,b,t)]*[2E(\hat{\theta}_N)+2E(\hat{\theta}_S)+2E(\hat{\theta}_E)]]}{3*2}\nonumber
\end{align*}
\begin{align}
\small
+&\frac{\sigma_3(a,b,t)*[1-Pr(a,b,t)]*[E(\hat{\theta}_N)+E(\hat{\theta}_S)+E(\hat{\theta}_E)]}{3}
\label{eq:border}
\end{align}
\normalsize

If pixel $(a,b,t)$ is located at the corner of the $t$th frame (w.l.o.g., the upper-left corner), then we have:

\scriptsize
\begin{align}
\small
&E[\hat{\theta}(a,b,t)]
=Pr(a,b,t)*\theta(a,b,t)+\sigma_0(a,b,t)*0\nonumber\\
+&\frac{\sigma_1(a,b,t)*[1-Pr(a,b,t)]*[E(\hat{\theta}_S)+E(\hat{\theta}_E)]}{2}\nonumber\\
+&\frac{\sigma_2(a,b,t)*[1-Pr(a,b,t)]*[E(\hat{\theta}_S)+E(\hat{\theta}_E)]]}{2}
\label{eq:corner}
\end{align}
\normalsize

\subsection{Solving Algorithm}
\label{sec:solver}

The optimal number of distinct RGBs $k_j$ (to allocate privacy budget) is computed based on minimizing the MSE expectation of visual element $\Upsilon_j$ (averaged by the number of pixels). Thus, we solve the following optimization (which is equivalent to Equation \ref{eq:mse}): 

\begin{equation*}
    \argmin_{k_j}\sum_{\forall (a,b,t)\in \Upsilon_j}\big(E[\theta(a,b,t)]-E[\hat{\theta}(a,b,t)]\big)^2
\end{equation*}

We now present how to derive the optimal $k_j$ (given multi-scale RGB selection) and the corresponding RGB expectation of all the pixels in VE $\Upsilon_j$ (in all the frames). Per Equation \ref{eq:mid} (for non-border pixels), Equation \ref{eq:border} (for border-but-not-corner pixels) and Equation \ref{eq:corner} (for corner pixels), for pixel $|\Upsilon_j|$ pixels, we can have $|\Upsilon_j|$ equations for  $|\Upsilon_j|$ variables, each of which is the RGB expectation of a pixel. W.l.o.g., assuming that $\Upsilon_j$ has $A*B$ pixels in a rectangle with the coordinates $(1,1), (1,2), \dots, (1,B), \dots (A,1), \dots (A,B)$, thus we have (frame ID $t$ is skipped for simplicity of notations):

\begin{equation*}
\small
\left\{
        \begin{array}{lr}
 E[\hat{\theta}(1,1)]=Pr(1,1))*\theta(1,1)+[1-Pr(1,1)]*\\
 \quad\quad\quad\quad\quad \frac{(\sigma_1(1,1)+\sigma_2(1,1))*(E[\hat{\theta}(1,2)]+E[\hat{\theta}(2,1)])}{2}\\
 
E[\hat{\theta}(1,2)]=Pr(1,2)*\theta(1,2)+[1-Pr(1,2)]*\\
\frac{(\sigma_1(1,2)+\sigma_2(1,2)+\sigma_3(1,2))*(E[\hat{\theta}(1,1)]+E[\hat{\theta}(2,2)]+E[\hat{\theta}(1,3)])}{3}\\
         \quad\quad     \vdots   \quad\quad\quad\quad \quad\quad\quad\quad \vdots   \quad\quad\quad\quad \quad\quad\quad\quad \vdots\\

\forall a\in (1,A), \forall b\in(1,B) \\ 
E(\hat{\theta}(a,b))=Pr(a,b)*\theta(a,b)+ [1-Pr(a,b)]*\\
   
   \frac{(\sigma_1(a,b)+\dots+\sigma_4(a,b))*(E[\hat{\theta}(a-1,b)]+\dots+E[\hat{\theta}(a,b+1)])}{4}\\
         \quad\quad     \vdots   \quad\quad\quad\quad \quad\quad\quad\quad \vdots   \quad\quad\quad\quad \quad\quad\quad\quad \vdots\\
         
E(\hat{\theta}(A,B))=Pr(A,B)*\theta(A,B)+ [1-Pr(A,B)]*\\
   \frac{(\sigma_1(A,B)+\sigma_2(A,B))*(E[\hat{\theta}(A-1,b)]+E[\hat{\theta}(A,B-1)]}{2}\\
             \end{array}
\right.
\end{equation*}

Note that the above equations can be simply extended to all the pixels in $\Upsilon_j$ in all the frames (incorporating the frame ID $t$). We use the inverse matrix to solve these equations where the coefficients of all the above equations can be represented as a $|\Upsilon_j|\times |\Upsilon_j|$ matrix (denoted as $M$). To ensure that the inverse matrix can solve the equations, $M$ should have a full rank $|\Upsilon_j|$. In case that $M$ is not a full rank matrix (indeed, the rank of $M$ is very high since $\forall (a, b, t)\in \Upsilon_j, \sigma_1(a,b,t), \sigma_2$ $(a,b,t), \sigma_3(a,b,t), \sigma_4(a,b,t)$ are pseudorandom), we can add a tiny random noise to the non-zero entries in $M$ (in which the deviation is negligible).

 Specifically, denoting the expectation of the $s$th pixel in $\Upsilon_j$ as $E[\hat{\theta}(s)]$ where $s\in[1,AB]$. Then, we have
 
 \begin{equation}
 \small
E[\hat{\theta}(s)]= \frac{1}{|M|}*\sum_{i=1}^{AB}[(-1)^{i+s}*M_{is}^{(AB-1)}*b_i]
 \end{equation}

where $|M|$ is the determinant of matrix $M$, $M_{is}^{(AB-1)}$ denotes the $s$th cofactor (corresponding the $s$th pixel; including $(AB-1)\times(AB-1)$ entries) and $b_i$ is the $i$th constant in the equation (in last column of matrix $M$). Thus, $M_{is}^{(AB-1)}$ can be recursively represented as below:

\begin{equation}
\small
M_{is}^{(AB-1)}=\sum_{i=1}^{AB}[(-1)^{i+s}*\mathbb{R}_i*M_{is}^{(AB-2)}]
\label{eq:mis}
\end{equation}
 
where $M^{(AB-2)}$ represents the cofactor matrix of $M^{AB-1}$ and $\mathbb{R}_i$ is a random constant (for ensuring full rank for $M$) which is close to $-\frac{[1-Pr(a,b,t)](\sigma_1(a,b,t)+\sigma_2(a,b,t))}{2}$ for corner pixels, $\\-\frac{[1-Pr(a,b,t)][\sigma_1(a,b,t)+\sigma_2(a,b,t)+\sigma_3(a,b,t)]}{3}$ for border pixels, and $-\frac{[1-Pr(a,b,t)][\sigma_1(a,b,t)+\sigma_2(a,b,t)+\sigma_3(a,b,t)+\sigma_4(a,b,t)]}{4}$ for non-border pixels. As a result, Equation \ref{eq:mis} can be represented as:

 \begin{equation}
\small
M_{is}^{(AB-1)}=\sum_{i= 1}^{AB}[(-1)^{i+s}*(\prod_{i=1}^{AB-3}\mathbb{R}_i)*M_{is}^{(2)}]
 \end{equation}
 
Since each row of the matrix $M$ only has at most 5 non-zero entries (corresponding to the variables of the current pixel and its four/three/two neighbors), we have:

\begin{equation}
\small
E[\hat{\theta}(s)]\approx -\frac{5^{AB-3}*AB}{|M|}*\max_{ \forall i\in[1,AB]}\{|\mathbb{R}_i|^{AB-3}*M_{is}^{(2)}*b_i\}
\end{equation}
 
Thus, the MSE expectation in VE $\Upsilon_j$ can be directly derived as:
 
\begin{equation*}
\small
\sum_{i=1}^{AB}[\theta(a,b,t)+\frac{5^{AB-3}*AB}{|M|}*\max_{ \forall i\in[1,AB]}\{|\mathbb{R}_i|^{AB-3}*M_{is}^{(2)}*b_i\}]^2
\end{equation*}

For each $k_j$, the corresponding MSE expectation can be computed using the above equation. Then, the optimal $k_j$ can be obtained by traversing $k_j$ in any range. In addition, it is straightforward to prove that the complexity of the inverse matrix based solver is $O(n^3\log(n))$. Note that we assume that the optimal $k_j$ is computed for minimum MSE based on the first traversal in the interpolation of each visual element (in Algorithm \ref{algm:inter}). The deviation is very minor since most pixels are interpolated in the first traversal in our experiments. Moreover, the optimal $k_j$ (derived from the above algorithm) is also validated in our experiments (see Figure \ref{fig:msekafter}).

\section{Budget Allocation Algorithm}
\label{sec:baa}

\begin{algorithm}[!h]
\small
\SetKwInOut{Input}{Input}\SetKwInOut{Output}{Output}

\Input{$n$ sets of RGBs $\forall j\in[1,n], \Psi_j=\{i\in [1,k_j], \widetilde{\theta}_{ij}\}$\\privacy budget $\epsilon$ for Phase I of \emph{VideoDP}}

\Output{privacy budget for each unique RGB in $n$ sets}

initialize the set of unique RGBs: $\Psi\leftarrow\bigcup_{j=1}^n\Psi_j$

\ForEach{$j\in[1,n]$}{
initialize the overall budget for set $\Psi_j: \epsilon(\Psi_j)\leftarrow\epsilon$
}

\ForEach{$\ell\in[1,n]$}
{
    \ForEach{$\widetilde{\theta}\in \Psi$}{
    \If{$count(\widetilde{\theta}\in \{\Psi_1,\dots, \Psi_n\})=(n-\ell+1)$}{
    
    \tcp{w.l.o.g., $\widetilde{\theta}\in\Psi_1, \dots, \Psi_{n-\ell+1}$}
    
    initialize budget for RGB $\widetilde{\theta}$: $\epsilon(\widetilde{\theta})$

    $\displaystyle\epsilon(\widetilde{\theta})\leftarrow\argmin_{\forall j\in [1,(n-\ell+1)]} [\frac{d_j(\widetilde{\theta})}{d_j}*\epsilon(\Psi_j)]$
    
    \tcp{$\frac{d_j(\widetilde{\theta})}{d_j}$ denotes the ratio of pixels with RGB $\widetilde{\theta}$ in $\Psi_j$}
    
    \ForEach{$j\in[1,(n-\ell+1)], \Psi_j$}{
        update budget: $\epsilon(\Psi_j)\leftarrow \epsilon(\Psi_j)-\epsilon(\widetilde{\theta})$
    
        update total pixel count: $d_j\leftarrow d_j-d_j(\widetilde{\theta})$
        
        \tcp{$\epsilon(\Psi_j)>\epsilon(\widetilde{\theta})$ for the first $(n-1)$ partitions; $\epsilon(\Psi_j)=\epsilon(\widetilde{\theta})$ for the last partition (0 budget left)}
    }
    
    \textbf{return} budget $\epsilon(\widetilde{\theta})$ for RGB $\widetilde{\theta}$
    
    $\Psi\leftarrow \Psi\setminus\widetilde{\theta}$

    }
    
	}

}
	\caption{Budget Allocation}
	\label{algm:budget}
\end{algorithm}    

\section{Probabilistic Differential Privacy and Indistinguishability}
\label{sec:indist}

\begin{proposition}
If for any two neighboring inputs $V$ and $V'$, $\frac{Pr[\mathcal{A}(V)=O]}{Pr[\mathcal{A}(V')= O]}\leq
e^\epsilon$ hold (where $O$ is an arbitrary output), 
then $\frac{Pr[\mathcal{A}(V)\in
S]}{Pr[\mathcal{A}(V')\in S]}\leq e^\epsilon$ also holds (where $S$ is an arbitrary set of outputs).
\label{thm:pdp}
\end{proposition}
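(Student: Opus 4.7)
The plan is to prove this by aggregating the pointwise (per-output) bound over the set $S$. The hypothesis gives a multiplicative bound between $Pr[\mathcal{A}(V)=O]$ and $Pr[\mathcal{A}(V')=O]$ for each individual output $O$, and I want to lift it to an analogous bound on the probabilities of $\mathcal{A}$ landing in an arbitrary subset $S$ of the output range. Since the output space $range(\mathcal{A})$ in \emph{VideoDP} is discrete (finite collections of sampled pixels with specified coordinates, frame IDs and RGB values), I would work in the discrete setting where $Pr[\mathcal{A}(V)\in S]=\sum_{O\in S}Pr[\mathcal{A}(V)=O]$.

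First, I would fix an arbitrary subset $S\subseteq range(\mathcal{A})$. Second, I would invoke the hypothesis on every element $O\in S$, obtaining $Pr[\mathcal{A}(V)=O]\leq e^{\epsilon}\,Pr[\mathcal{A}(V')=O]$ pointwise. Third, I would sum both sides over all $O\in S$, pulling the constant factor $e^{\epsilon}$ outside the sum, which yields
\begin{equation*}
Pr[\mathcal{A}(V)\in S]=\sum_{O\in S}Pr[\mathcal{A}(V)=O]\leq e^{\epsilon}\sum_{O\in S}Pr[\mathcal{A}(V')=O]=e^{\epsilon}\,Pr[\mathcal{A}(V')\in S],
\end{equation*}
which is exactly the claimed set-wise inequality. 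Dividing through (assuming $Pr[\mathcal{A}(V')\in S]>0$, which is the only case where the ratio is defined) finishes the argument.

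There is essentially no obstacle here, since the proof is just linearity of summation applied to a pointwise inequality; the only small subtlety is making sure the aggregation is valid. In the discrete case this is immediate. If one wanted to cover a continuous output space as well, the same proof goes through with an integral over $S$ in place of the sum, provided the densities exist and $S$ is measurable, which is the standard setting used when appealing to \cite{MachanavajjhalaKAGV08,corn09,Dwork06differentialprivacy}. Either way, the result follows directly and shows that the pointwise (probabilistic) differential privacy guarantee established for Phase I of \emph{VideoDP} implies the indistinguishability formulation needed for the post-processing argument in Theorem~\ref{theorem:qdp}.
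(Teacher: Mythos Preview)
Your proposal is correct and follows essentially the same approach as the paper's proof: both aggregate the pointwise bound $Pr[\mathcal{A}(V)=O]\leq e^{\epsilon}Pr[\mathcal{A}(V')=O]$ over all $O\in S$ and pull out the constant $e^{\epsilon}$. The only cosmetic difference is that the paper writes the aggregation as an integral $\int_{O\in S}\cdots\,dO$ rather than a discrete sum, which you already note as an equivalent variant.
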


\begin{proof}
Since $S$ includes a set of possible outputs, we have:

\begin{align*}
&Pr[\mathcal{A}(V)\in S]= \int_{\forall O\in
S}Pr[\mathcal{A}(V)=O]dO&\\ \leq &e^\epsilon\int_{\forall
O\in S}Pr[\mathcal{A}(V')=O]dO=e^\epsilon Pr[\mathcal{A}(V')\in S]&
\end{align*}

This completes the proof. Note that the above proof is adapted from \cite{MachanavajjhalaKAGV08,corn09}. 
\end{proof}

\section{Additional Evaluations}
\label{sec:additional}

While evaluating the utility of the utility-driven private videos using two utility measures (KL divergence and MSE), we also fix $\epsilon$ and traverse different $k$ for all the visual elements (assigning the same $k\in [4,30]$). Figure \ref{fig:klk1} and \ref{fig:klk2} present the KL divergence values for all the sampled pixels (where privacy budget $\epsilon$ is fixed as $0.8$ and $1.6$, respectively). We can observe that the KL value increases as $k$ increases (if the same number of distinct RGBs in all the visual elements are selected to assign privacy budgets). This is true for the following reason: smaller $k$ samples pixels with less diverse RGBs, but it can allocate a larger privacy budget to each RGB. Then, the generated results can have better count distributions for all the sampled RGBs.

\begin{figure}[!h]
	\centering
		\subfigure[KL vs $k$ (Video PED)]{
		\includegraphics[angle=0, width=0.48\linewidth]{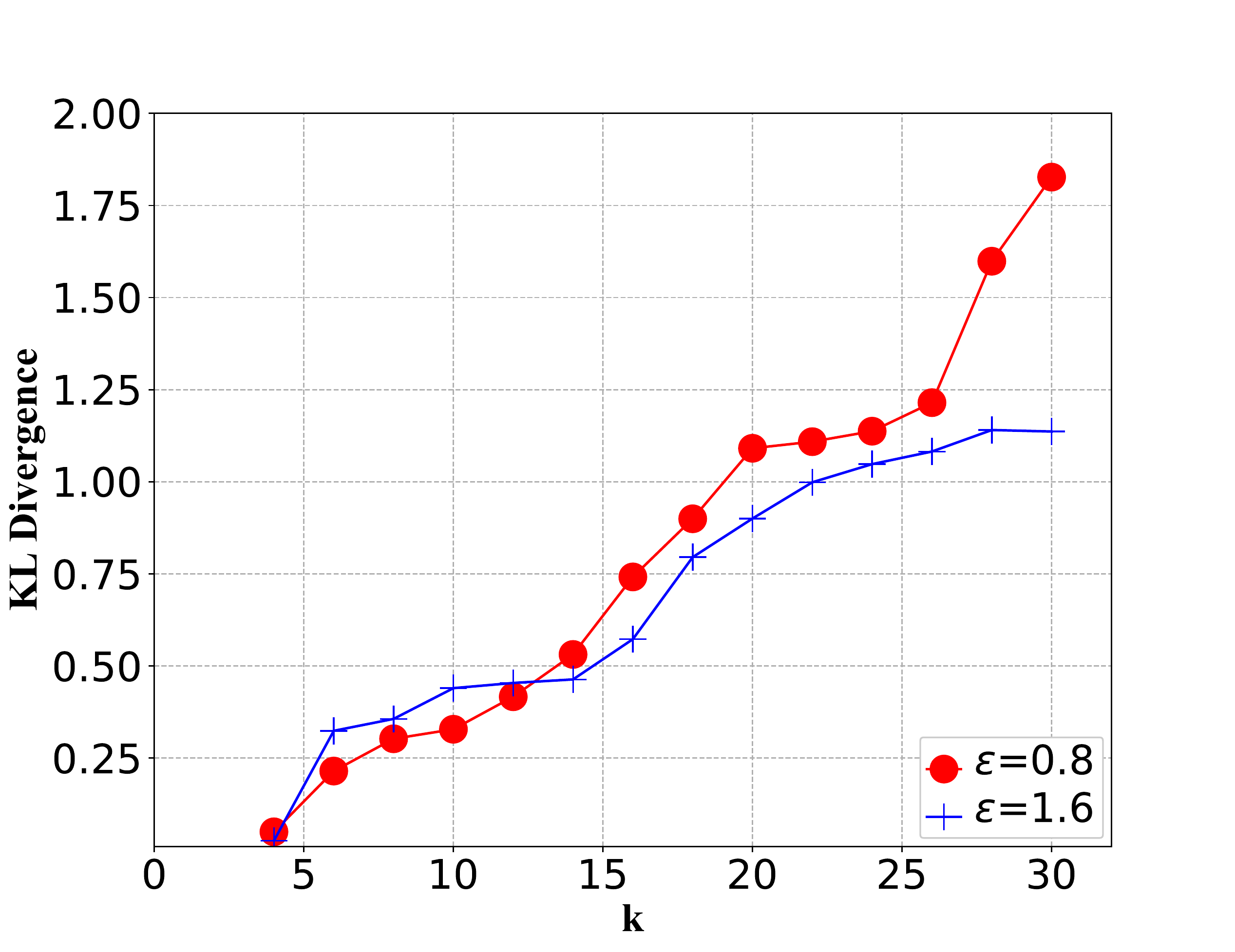}
		\label{fig:klk1}}
		\hspace{-0.17in}
	\subfigure[KL vs $k$ (Video VEH)]{
		\includegraphics[angle=0, width=0.48\linewidth]{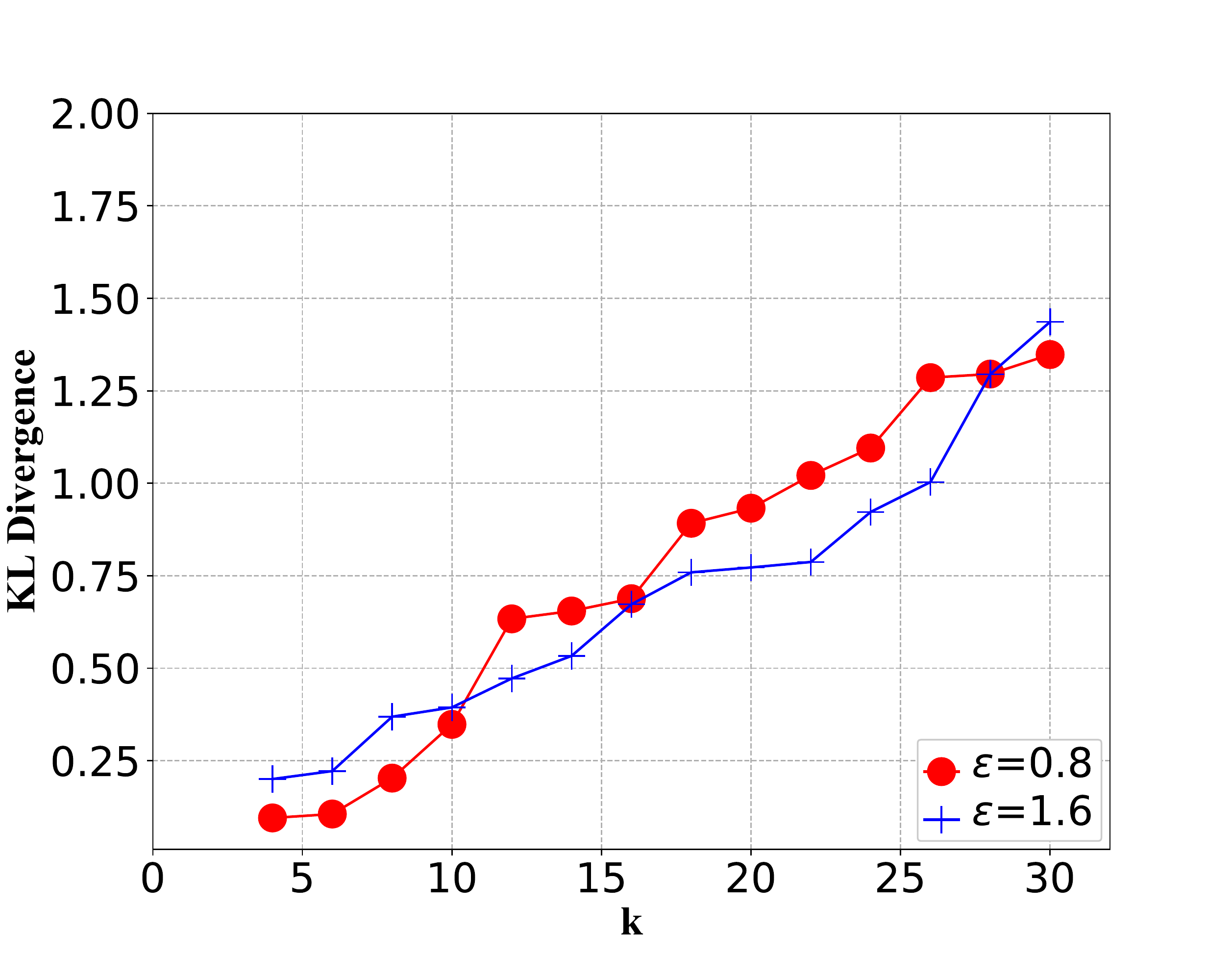}
		\label{fig:klk2}}
		\hspace{-0.2in}
		\subfigure[MSE vs $k_j$ (after Phase I)]{
		\includegraphics[angle=0, width=0.48\linewidth]{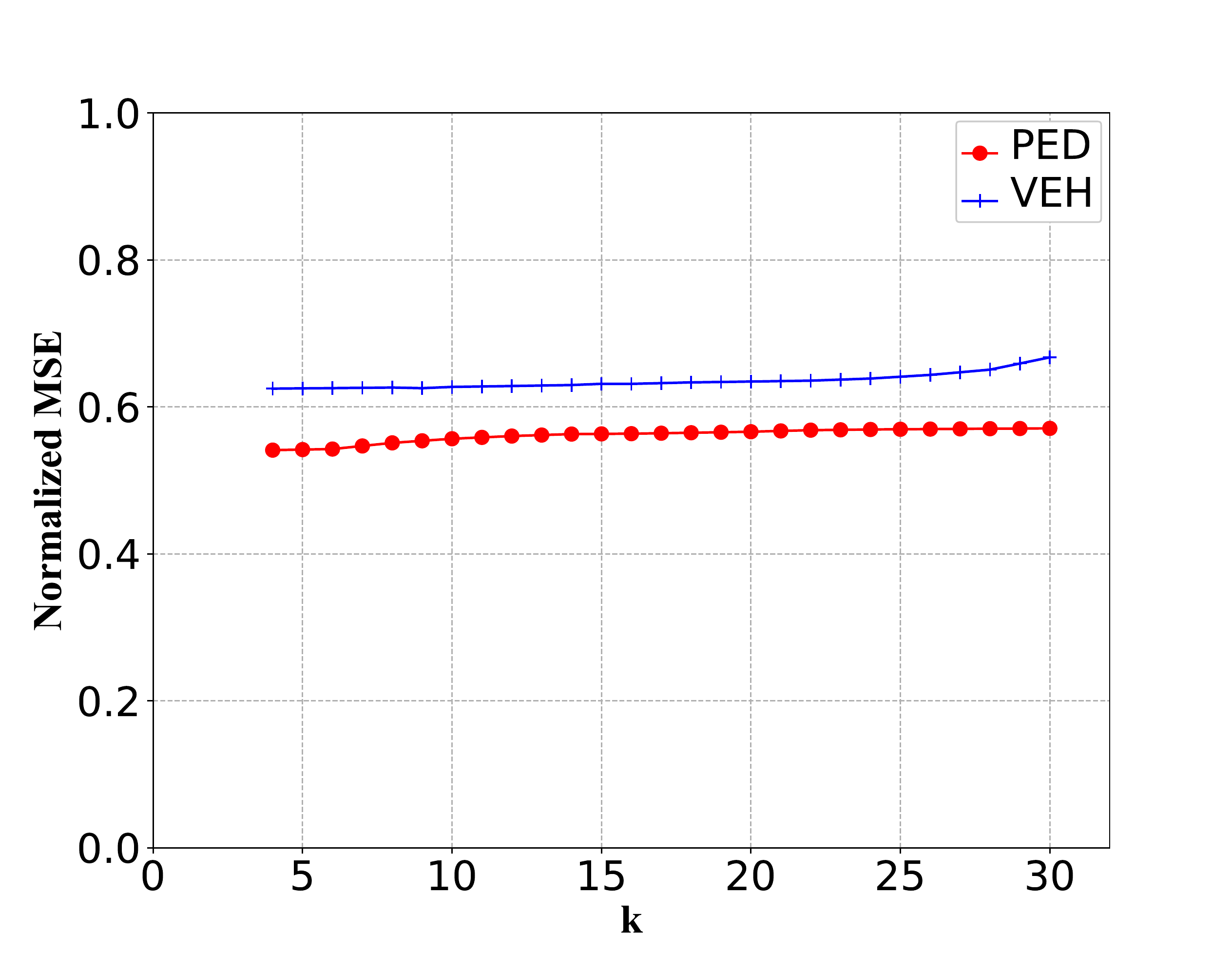}
		\label{fig:msekbefore} }
		\hspace{-0.17in}
	\subfigure[MSE vs $k_j$ (after Phase II)]{
		\includegraphics[angle=0, width=0.48\linewidth]{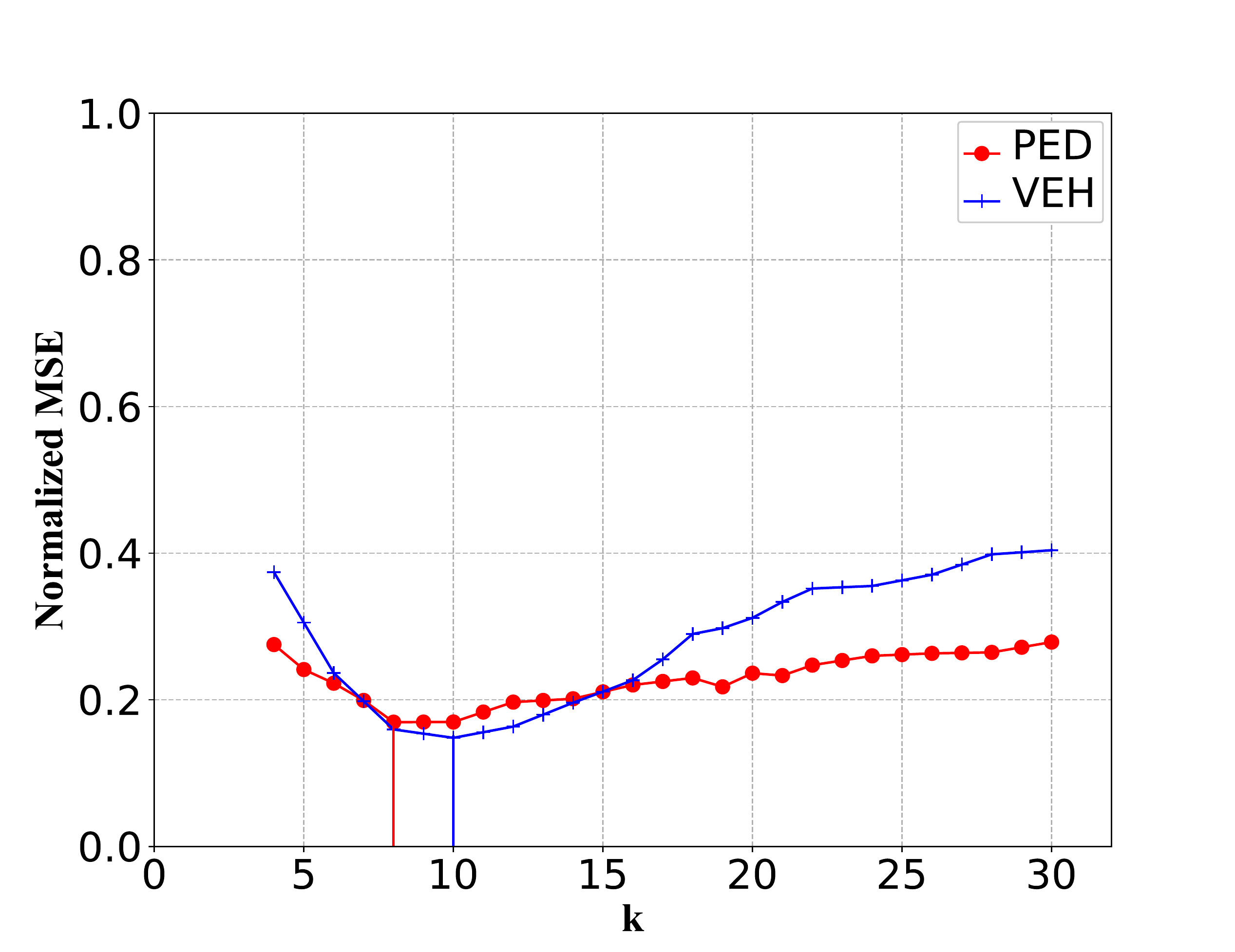}
		\label{fig:msekafter}}
	
	\vspace{-0.05in}
	\caption[Optional caption for list of figures]
	{Pixel Level Utility Evaluation with $k$}\vspace{-0.1in}
	\label{fig:mse}
\end{figure}

We also examine the \emph{optimal number of selected RGBs to assign privacy budgets} $k_j$ in visual elements. We select the visual element with most pixels in both videos (PED and VEH). Since the optimal values are derived based on MSEs, we plot the normalized MSEs for \emph{all the pixels in the visual element} for two videos in Figure \ref{fig:msekbefore} (after Phase I) and Figure \ref{fig:msekafter} (after Phase II), respectively. The normalized MSE does not change much (after Phase I) as $k$ increases since the MSE expectation is optimized for Phase II. Instead, Figure \ref{fig:msekafter} clearly shows that $k_j$ goes \emph{optimal} in the range (which equals the optimal $k_j$ after solving Equation \ref{eq:mse} using the algorithm in Appendix \ref{sec:opt}) in both videos for all possible values in the specified range. As $k_j$ increases, the normalized MSE of the VE first decreases and then increases. This reflects that the best $k_j$ is neither too small nor too large in each VE.

Finally, we present some representative frames of the VEH video to show the effectiveness of pixel sampling (Phase I) and utility-driven private video generation (Phase II) in \emph{VideoDP}. Figure \ref{fig:VEH_sam1} and \ref{fig:VEH_sam2} demonstrate that more pixels are sampled as private budget $\epsilon$ is larger. The same frames (missing pixels are interpolated) after Phase II are shown in Figure \ref{fig:VEH_ep1} and \ref{fig:VEH_ep2}, respectively. We can observe that the vehicles are randomly generated in the frame of the utility-driven private video (which are not directly revealed to the analysts). Then, disclosing the any query/analysis result derived from such (random) video to untrusted analysts satisfies differential privacy.
\begin{figure}[!tbh]
	\centering
	    \subfigure[$\epsilon=0.8$ (after Phase I)]{
		\includegraphics[angle=0, width=0.45\linewidth]{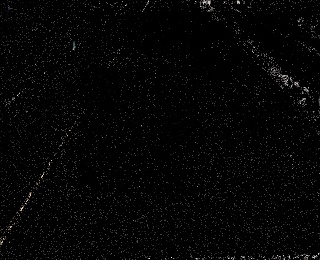}
		\label{fig:VEH_sam1} }
        \subfigure[$\epsilon=1.6$ (after Phase I)]{
		\includegraphics[angle=0, width=0.45\linewidth]{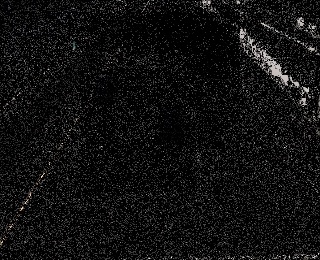}
		\label{fig:VEH_sam2}}
		\subfigure[$\epsilon=0.8$ (after Phase II)]{
		\includegraphics[angle=0, width=0.45\linewidth]{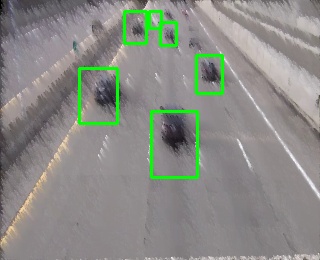}
		\label{fig:VEH_ep1} }
        \subfigure[$\epsilon=1.6$ (after Phase II)]{
		\includegraphics[angle=0, width=0.45\linewidth]{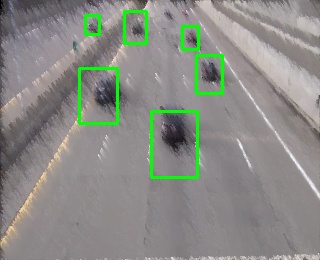}
		\label{fig:VEH_ep2} }
		\vspace{-0.1in}
	\caption[Optional caption for list of figures]
	{Representative Frames of \emph{VideoDP} on Video VEH}
	\label{fig:inVEH}
\end{figure}

\section{The Notation Table}
\label{sec:notation}

\begin{small}
\begin{table}[!h]
\small
\centering
\caption{Frequently Used Notations}\vspace{-0.1in}
\begin{tabular}{|c|l|} 
\hline
VE & visual element (e.g., object, human)\\

$V,O$ & orignal video and output synthetic video \\

$|V|,|O|$ & total pixel counts in $V$, $O$\\

$m$ & the number of distinct RGBs in $V$\\

$\theta_i$ & the $i$th RGB in $V$ where $i\in[1,m]$\\

$n$ & the number of distinct VEs in $V$\\

$\Upsilon_j$ & the $j$th VE in $V$ (all the frames), $j\in[1,n]$\\

$|\Upsilon_j|$ & total number of pixels in $|\Upsilon_j|$\\

$\Psi_j$ & set of RGBs in $\Upsilon_j$ with budgets\\

$|\Psi_j|$ & cardinality of $\Psi_j$\\

$d_j$ & total pixel count in $\Upsilon_j$\\

$\widetilde{\theta}_{ij}$  & the $i$th RGB in $\Psi_j$\\

$k_j$ & (optimal) number of distinct RGBs in $\Upsilon_j$ \\

$\Psi$, $|\Psi|$ & $\bigcup_{j=1}^n\Psi_j$, cardinality of $\Psi$\\

$\widetilde{\theta}_i, \theta_i$ & the $i$th RGB in $\Psi$, the $i$th RGB in $V$\\

$\widetilde{c}_i$ (or $c_i$), $\widetilde{c}_i^j$ & total pixel count for RGB $\widetilde{\theta}_i$ (or $\theta_i$) in $V$, $\Upsilon_j$\\

$\widetilde{x}_i$ (or $x_i$) & total pixel count for RGB $\widetilde{\theta}_i$ (or $\theta_i$) in $O$\\

$(a,b,t)$ & the pixel with coordinates $(a,b)$ and frame $t$\\

$\theta(a,b,t)$ & the RGB of pixel $(a,b,t)$ in $V$\\

$\hat{\theta}(a,b,t)$ & the RGB of pixel $(a,b,t)$ in $O$\\

$Pr(a,b,t)$ & probability that pixel $(a,b,t)$ is sampled\\

$\sigma_0,\dots, \sigma_4$ &probabilities that pixel $(a,b,t)$ has $0, 1, \dots, 4$ \\
& neighboring pixels after Phase I (sampling)\\

$\hat{\theta}_N$& simplified notation for $\hat{\theta} (a-1,b,t)$\\

$\hat{\theta}_S$& simplified notation for $\hat{\theta} (a+1,b,t)$\\

$\hat{\theta}_W$& simplified notation for $\hat{\theta} (a,b-1,t)$\\

$\hat{\theta}_E$& simplified notation for $\hat{\theta} (a,b+1,t)$\\

\hline
\end{tabular}

\label{table:notation}
\end{table}
\end{small}

\end{appendices}

\balance

\end{document}